\newtheorem{thm}{Theorem}[section]
\newtheorem{lem}[thm]{Lemma}
\newtheorem{prop}[thm]{Proposition}
\theoremstyle{definition}
\newtheorem{ass}[thm]{Assumption}
\theoremstyle{remark}
\newtheorem{rem}[thm]{Remark}
\numberwithin{equation}{section}
\newcommand{\esp}{\mathbb{E}}
\newcommand{\reals}{\mathbb R}
\newcommand{\B}{\mathcal{B}}
\newcommand{\F}{\mathcal{F}}
\newcommand{\Lcal}{\mathcal{L}}
\newcommand{\wh}{\widehat}
\newcommand{\eps}{\varepsilon}
\newcommand{\qtau}{\mathfrak{u}}
\newcommand{\such}{\ | \ }
\newcommand{\dfn}{\, := \,}
\newcommand{\delmf}{\frac{\delta}{m}}
\newcommand{\ito}{\textrm{It\^o}}
\newcommand{\tbabm}{b^{\textrm{abm}}}
\newcommand{\nada}[1]{}
\newcommand{\bra}[1]{\left[#1\right]}
\newcommand{\cbra}[1]{\left\{#1\right\}}
\newcommand{\dbra}[1]{[\kern-0.15em[ #1 ]\kern-0.15em]}
\newcommand{\dbraco}[1]{[\kern-0.15em[ #1 [\kern-0.15em[}
\newcommand{\wt}[1]{\widetilde{#1}}
\newcommand{\ol}[1]{\overline{#1}}
\newcommand{\ul}[1]{\underline{#1}}
\newcommand{\rcpdiff}[2]{\frac{1}{#1} - \frac{1}{#2}}
\newcommand{\rcpdiffb}[4]{\frac{#1}{#2}-\frac{#3}{#4}}
\newcommand{\prob}{\mathbb{P}}
\newcommand{\qprob}{\mathbb{Q}}
\newcommand{\basisq}{\left(\Omega, \mathcal{F}, \mathbb{F}, \qprob\right)}
\newcommand{\filt}{\mathbb{F}}
\newcommand{\espalt}[3]{\mathbb{E}^{#1}_{#2}\left[#3\right]}
\newcommand{\condexpv}[4]{\esp^{#1}_{#2}\bra{#3\such #4}}
\newcommand{\condespalt}[3]{\esp^{#1}_{#3}\bra{#2}}
\title[Mortgage Contracts]{Mortgage Contracts and Underwater Default}
\author{Yerkin Kitapbayev}
\address{Department of Mathematics\\
North Carolina State University\\
Raleigh, NC 27607}
\email{ykitapb@ncsu.edu}
\author{Scott Robertson}
\address{Questrom School of Business\\
Boston University\\
Boston, MA 02215}
\email{scottrob@bu.edu}
\thanks{S. Robertson is supported in part by the National Science Foundation under grant number DMS-1613159.}
\date{\today}
\begin{document}

\begin{abstract}

We analyze recently proposed mortgage contracts that aim to eliminate selective borrower default when the loan balance exceeds the house price (the ``underwater'' effect).  We show contracts that automatically reduce the outstanding balance in the event of house price decline remove the default incentive, but may induce prepayment in low price states.  However, low state prepayments vanish if the benefit from home ownership is sufficiently high. We also show that capital gain sharing features, such as prepayment penalties in high house price states, are ineffective as they virtually eliminate prepayment. For observed foreclosure costs, we find that contracts with automatic balance adjustments become preferable to the traditional fixed-rate contracts at mortgage rate spreads between 20-50 basis points. We obtain these results for perpetual versions of the contracts using American options pricing methodology, in a continuous-time model with diffusive home prices. The contracts' values and optimal decision rules are associated with free boundary problems, which admit semi-explicit solutions.
\end{abstract}

\maketitle

\section{\bf Introduction}\label{S:intro}

It is by now incontrovertible that the housing crisis of 2007-2009 was exacerbated by the ``underwater'' effect, where homeowners owed more on their house than it was worth on the market.  The negative effects of being underwater are well known, having been documented at the government (\cite{Fed_Subprime}), academic (\cite{andersson2014loss}) and public (\cite{mian2015house}) levels.

Underwater mortgages powered a vicious cycle within many United States metropolitan areas, most prominently in the Southwest. Borrowers, having purchased homes initially worth far more than their incomes could support, but recently having lost a large portion of their value, were stuck in houses which they could neither afford nor sell.  In response, they engaged in large scale selective defaults on their loans (c.f. \cite{bhutta2010depth}).  This led banks to incur significant losses, either directly through the foreclosure process, or indirectly through the resultant fire sales, in which the repossessed home was sold at a depressed value  (c.f. \cite{hatcher2006foreclosure, campbell2011forced, andersson2014loss}). The fire sales further depressed home prices and appraisal values, putting more homeowners under water, repeating the cycle.

In short, underwater mortgages posed, and continue to pose, significant risks for the homeowner, the lending institution, and the broader health of the economy.  Furthermore, there is an asymmetry in that traditional mortgage contracts have built-in protections against interest rate movements (e.g. adjustable rate mortgages, refinancing with no penalties), but there are no such protections for house price decline.  Indeed, default associated to house price decline has traditionally been considered a ``moral'' issue (\cite{1996Ctam}), to be worked out in the (lengthy, expensive) legal system.

To mitigate both risks and costs associated to underwater mortgages,  a number of alternative mortgage contracts have been proposed. At heart, each contract aims to insulate the borrower in the event of area wide house price decline, by suitably adjusting either the outstanding balance or monthly payment of a traditional fixed rate mortgage (FRM). From the bank's perspective, the idea is that if one accounts for foreclosure costs and other negative externalities  associated to underwater default, then, despite the lower payments (compared to the FRM), the contracts are competitive or even preferable.

The goal of our paper is to analyze these proposals, and identify the features which are best for the borrower and lending institution, as well as any unforeseen risks.  We consider the ``adjustable balance mortgage'' of  \cite{ambrose2012adjustable}; the ``continuous workout mortgage'' of \cite{shiller2011continuous}; and the ``shared responsibility mortgage'' of \cite{mian2013state, mian2015house}.\footnote{The shared responsibility mortgage is nearing commercial availability: see \cite{PartnerOwn}.} We choose these contracts because they span a wide range of possible adjustments, such as lowering payments immediately, lowering payments if house prices fall sufficiently far, and including capital gains sharing in high house price states. Following the analysis in \cite{mei2019improving}, we group the above contracts into two broad categories: \emph{adjustable balance mortgages} (ABM) as in \cite{ambrose2012adjustable}; and \emph{adjustable payment rate mortgages} with prepayment penalties (APRM) as in \cite{shiller2011continuous, mian2013state, mian2015house}.

Each contract type modifies the traditional FRM by adjusting payments and balance according to the movements of a (local) house price index\footnote{Local house price indices exist: both the ``S\&P CoreLogic Case-Shiller Home Price Indices'' (\cite{Case_Shiller_Index})  and the Federal Housing Finance Agency House Price Index Reports (\cite{FHFA_HPI}) track national and local house price movements, with Case-Shiller having indices for twenty U.S. metropolitan areas.} $H=\cbra{H_t}$. An index, as opposed to repeated home appraisals, is used for two reasons. First, home appraisals are cumbersome and expensive. Second, and more importantly, using an index removes moral hazard as the borrower should not profit  from intentionally lowering his home value.  With detailed formulas provided in Section \ref{S:mortgages}, we briefly describe the payments at a time $t$ prior to the loan maturity at $T$. For a given mortgage rate $m$, which might vary depending on the contract, set $B^F_t(m)$ as the outstanding balance and $c^F(m)$ the level payment for a traditional FRM with initial loan-to-value of $B_0$ (the purchase price and index value are normalized to $1$ so that $B_0=0.8$ for a $20\%$ down-payment).

The ABM explicitly eliminates the underwater effect by setting the outstanding balance to $\min\bra{B^F_t(m),H_t}$. Therefore, for
a $20\%$ down-payment, house prices would have to fall by $20\%$ before payments are adjusted. The payment rate is derived to be $c^F(m)\times \min\bra{1,H_t/B^F_t(m)}$, so that it also never exceeds the corresponding FRM rate.

Alternatively, the APRM takes as primitive the payment rate $c^F(m)$, and adjusts payments upon any decline in $H$, setting the new rate to  $c^F(m)\times\min\bra{1,H_t}$. The outstanding balance is then derived to be $B^F_t(m)\times \min\bra{1,H_t}$. Additionally, and following the recommendation in \cite{mian2013state}, the APRM has a profit sharing feature, where should the borrower prepay at $t$, he must pay the penalty $\alpha\times\max\bra{H_t-1,0}$, which is $\alpha\times 100\%$ of the capital gains on the house.  In \cite{mian2015house} the authors suggest $\alpha = 0.05$.   The idea is to protect the bank should the borrower refinance into another APRM when house prices are very low (say at $H_l$), because if so-refinanced any future prepayment will incur a large penalty through the capital gains based upon $H_l$.

To identify the contracts' values, we use American options pricing methodology, while also allowing for mortgage turnover\footnote{Turnover refers to non-refinancing mortgage prepayments, such as those due to divorce, job relation, etc..}. More precisely, excluding turnover related prepayments, we assume both locally and globally, the bank takes a worst-case approach.  Locally worst-case means that given a (non-turnover related) termination time (either default or prepayment), the bank assumes it will receive the lower of the two possible payments. Globally worst-case means the bank values the mortgage by considering the worst possible termination time, which is modeled as the optimal stopping time.  This approach avoids explicitly identifying the borrower's rational for default or prepayment, and hence stands in contrast to \cite{mei2019improving}, where the borrower uses utility based-methods for determining her optimal prepayment/default policy in a discrete time setting. However, our continuous time model implicitly assumes a level of financial sophistication for the borrower, but crucially, it allows for  (in the limit of a long contract maturity) \emph{explicit} formulas for the contract values.  This in turn enables us to perform comparative statics analysis and obtain qualitative conclusions. 
 
Applying options pricing theory to value mortgage backed securities is well known: see \cite{kau1992generalized, kau1993option, kau1995overview, stanton1995rational, deng2000mortgage}, as well as \cite{kalotay2004option, chen2009value, de2010optimal,  MR3534469}.  However, it was recognized that borrowers do not always act in a financially optimal manner (see \cite{KEYS2016482}). This led to the popularity of reduced form models for mortgage valuation: see \cite{1981_Dunn_Mcconnell, schwartz1989prepayment,schwartz1992prepayment, kau1995valuation} and their many extensions.  Despite its pitfalls, in order to compare the proposed contracts, we believe the options pricing approach is appropriate.  Simply put, as the contracts' stated objective is to reduce selective default, we must assume the borrower is sophisticated enough to selectively default.

Following the literature (c.f. \cite{ambrose2012adjustable}, \cite{shiller2017continuous}, \cite{mei2019improving}), we assume the discounted house price index $H$ follows a geometric Brownian motion with volatility $\sigma$ and dividend or ``benefit'' rate $\delta$, which measures utility from home ownership. We further assume that the mortgage holder's house price is approximated by the index. While in reality there may be basis risk between the individual house price and the index level, we ignore this risk for reasons of tractability in modeling the true house price process (how often can one observe the ``true'' value? how much does it cost to obtain such a value?).  Additionally, to isolate the relationship between house prices and default, we assume the interest rate is constant. We justify this assumption in Section \ref{S:finite_horizon} by showing in a stochastic interest rate environment that the default boundary is largely insensitive to the interest rate. 
 
In this setting, the contracts' value, as well as prepayment and default option values, are identified with solutions to free boundary problems, which we show in Section \ref{S:perpetual} admit explicit solutions in the case of an infinite mortgage maturity.\footnote{Strategic defaults tend to occur near the beginning of the mortgage term (especially during the financial crisis, \cite{mayer2009rise}), and the typical mortgage contract is 30 years.} Therefore, we can not only can determine the contract and default option values, but also the optimal decision boundaries. We show  the optimal stopping regions for the FRM have two boundaries (one corresponding to prepayment and one to default), while for the ABM there are either one or two thresholds, with each threshold corresponding to prepayment.  Due to the prepayment penalty, for the APRM the stopping region is especially interesting. Indeed, there can be anywhere from one to three stopping boundaries, each attributable to prepayment, depending upon very delicate parameter relationships.

To the best of our knowledge, this is the first paper that makes a comparison of different mortgage contracts in a continuous time model, and  we believe the perpetual, constant interest rate, setting provides an accessible first step in this research direction.  As the following step, one may study the finite maturity contracts (under both constant and stochastic interest rates) from theoretical and numerical perspectives. Especially for the APRM, we expect the optimal stopping regions/boundaries to display an unusual structure, due to the finite time horizon, prepayment penalty $\alpha$, and the kink of the payoff function at $H_0$.

\subsection*{Findings} In the above setting, our main findings are 

\begin{enumerate}[(1)]
\item The APRM capital gain sharing feature is ineffective. Already at low percentages (e.g. $\alpha = 1\%-2\%$), virtually all prepayment when $H>1$ is eliminated, making the contract value insensitive to $\alpha$. As the borrower is essentially locked into her loan when house prices rise, the prepayment penalty will dominate the capital-loss protection, and we envision she will be frustrated with the contract.\footnote{This phenomena occurred in the metropolitan Boston area, where loans offering downside protection were issued to low income buyers, who were unaware of the capital-gain fees upon prepayment (see \cite{globe.20200216}).}  

\item The ABM is competitive with the FRM, while the APRM requires a larger spread. For example, at $35\%$ foreclosure costs (c.f. \cite{hatcher2006foreclosure, campbell2011forced, andersson2014loss}), and a home benefit rate of $7\%$ (c.f. \cite{mei2019improving}) the ABM can offer a lower mortgage rate than the FRM (by about $20$ basis points (bp)) and still yield the same contract value. By contrast, the APRM requires a spread of around $40$ bp (consistent across a range of sharing proportions $\alpha$). 
\item Both the ABM and APRM may endogenously lead to prepayment in low house price states. However, this phenomena does not occur in the presence of either low mortgage rates $m$ or high benefit rates $\delta$  (i.e. if the borrower is  happy living in the house or if barriers to selective default such as rental and/or credit-related costs are high), and hence in practice one does not expect low-state prepayment.
\end{enumerate}



\subsection*{Related Literature}  Despite the disastrous effects of large-scale underwater default, the literature analyzing adjustments to the traditional FRM is scant, especially within the mathematical finance community.  Aside from the papers which introduced the contracts (see \cite{ambrose2012adjustable}, \cite{shiller2011continuous}, \cite{mian2013state} as well the authors' follow up papers), to our knowledge only the recent  \cite{mei2019improving} performed a cross-contract analysis.

There is, however, a distinct strand of literature (see \cite{piskorski2010optimal, piskorski2011stochastic, campbell2011model, eberly2014efficient}  as well as \cite{piskorski2017equilibrium, campbell2018structuring, guren2018mortgage}) which designs optimal mortgage contracts based upon principal-agent and/or equilibrium considerations. Generally, these papers indicate the superiority of an option adjustable rate mortgage (ARM)\footnote{For two exceptions, see \cite{kalotay2015case} which considers an ARM with only negative rate resets, and \cite{piskorski2017equilibrium} which produces a contract similar to the ABM.}, where the borrower is allowed to defer principal payments, leading to negative amortization (and potentially default).  However, option ARMs were actually issued with limited effectiveness (see \cite{gumbinger2010, robertson2013, sialm2016}), because borrowers deferred principal payments, putting themselves at greater risk of default, even if there was no pressing financial need. By contrast, the contracts we consider have automatic payment adjustments, removing the borrower's discretion. Lastly, we highlight \cite{greenwald2018financial}, which argued against adopting the APRM on a large scale (e.g. through FNMA backing) if indexing is done at a national level. However, there is no need to index at the national level, and presently these mortgages are, at most, offered on a very small scale (\cite{PartnerOwn}). As such, we we provide a ``first implementation'' analysis, taking the approach the bank is considering issuing these contracts to small number of sophisticated borrowers, and would like to know how the prepayment/default behavior would change, and how to effectively market the product.

This paper is organized as follows.  Section \ref{S:mortgages} provides continuous time versions of the contracts, while Section \ref{S:general_analysis} formulates the perpetual optimal stopping problems.  Section \ref{S:perpetual} provides theoretical solutions for the value functions and optimal default/prepayment strategies. A numerical analysis is presented in Section \ref{S:numerics}. Section \ref{S:finite_horizon} contains an example in the finite horizon, stochastic interest rate setting, showing how the default boundary is insensitive to the interest rate. We conclude in Section \ref{S:conclusion}. Proofs are in Appendix \ref{AS:S4_proofs}.

\section{\bf The Mortgages}\label{S:mortgages}

We first introduce the mortgages. Each involves a loan of $B_0$ at time $0$ with maturity $T$. We normalize the purchase price so that $B_0$ is the initial loan to value (LTV). We do not assume $B_0=1$ as typical initial LTVs are $0.8$ (for a $20\%$ down-payment) or $0.9$ (for a $10\%$ down-payment). Additionally, there is a house price index process $H = \cbra{H_t}_{t\geq 0}$ with $H_0 = 1$ also normalized.

\subsection{Fixed rate mortgage} The baseline contract is a continuous time, fully amortized, level payment FRM with mortgage rate $m^F$.  The  outstanding balance $B^F$ is given by
\begin{equation}\label{E:frm_balance}
B^F_t = \frac{B_0\left(1-e^{-m^F(T-t)}\right)}{1-e^{-m^FT}}
\end{equation}
for $t\leq T$ and the continuously paid coupon rate $c^F$ is
\begin{equation}\label{E:frm_coupon}
c^F = \frac{m^FB_0}{1-e^{-m^FT}} = \frac{m^F B^F_t}{1-e^{-m^F(T-t)}}.
\end{equation}
 If the homeowner decides to prepay at $t\leq T$, the bank receives
\begin{equation*}
\B^F_t \dfn  B^{F}_t.
\end{equation*}
However, the FRM  allows for the mortgage to be underwater, i.e., $H_t<B^F_t$, and hence creates incentive for strategic default.

\subsection{Adjustable Balance}  The ABM adjusts the outstanding balance in the event of sufficiently large house price decline, and then the payment rate is updated based upon the adjusted balance. The ABM was proposed in \cite{ambrose2012adjustable}, and here we present a continuous time version.  It starts with a mortgage rate $m^A$, and computes a nominal remaining balance $\wh {B}^{A}$ and payment rate $\wh {c}^A$ using \eqref{E:frm_balance} and \eqref{E:frm_coupon}, respectively, but with $m^A$ replacing $m^F$ therein. Then, the actual remaining balance $B^A$ is set to
\begin{equation}\label{E:abm_balance}
B^{A}_t \dfn \min\bra{\wh{B}^A_t, H_t}
\end{equation}
for $t\leq T$. To compute the actual payment rate $c^{A}$, assume that at time $t$ the homeowner has borrowed $B^A_t$ in a fixed rate, level payment, loan with maturity $T-t$ and contract rate $m^A$.  From \eqref{E:frm_coupon} and \eqref{E:abm_balance} we deduce
\begin{equation*}
c^{A}_t = \frac{m^AB^{A}_t}{1-e^{-m^A(T-t)}} = \wh {c}^A\times \min\bra{1, \frac{ H_t}{\wh{B}^A_t}}
\end{equation*}
for $t\leq T$.  For $20\%$ down payment, house prices would have to drop by at least $20\%$ before the ABM adjusts payments.  By design, the ABM is never underwater, as $B^A_t \leq H_t$ and upon prepayment at $t\leq T$ the bank receives
\begin{equation*}
\B^A_t \dfn B^A_t = \min\bra{\wh{B}^A_t,  H_t}.
\end{equation*}

\subsection{Adjustable Payment Rate} The APRM adjusts the payment rate based upon the house price index, and then the balance is computed from the adjusted payment rate. The two canonical examples are the continuous workout mortgage (CWM) as proposed in \cite{shiller2011continuous} (and subsequently analyzed in \cite{shiller2013mitigating, shiller2017continuous}) and the shared responsibility mortgage (SRM) from \cite{mian2013state} (as well as \cite{mian2015house, EquitableGrowth}). Like the ABM, the APRM starts with a mortgage rate $m^P$ and computes the nominal balance $\wh{B}^P$ and payment rate $\wh{c}^P$ according to \eqref{E:frm_balance} and \eqref{E:frm_coupon} (with $m^P$), respectively.  However, unlike the ABM, the APRM adjusts the payment rate upon \emph{any} decline in the house price index, setting the true payment rate to
\begin{equation*}
c^{P}_t \dfn \wh{c}^P \times \min\bra{1,H_t}
\end{equation*}
for $t\leq T$.  Then the remaining balance is set to
\begin{equation*}
B^{P}_t \dfn c^{P}_t \times \frac{1-e^{-m^P(T-t)}}{m^P} = \wh{B}^P_t \times \min\bra{1,H_t}
\end{equation*}
 for $t\leq T$. Additionally, as proposed in \cite{mian2013state}, in order to compensate the bank for the reduced payment rate, the APRM has a high-state prepayment penalty, which requires the borrower to share the portion $\alpha (H_t-1)^+$ (recall $H_0= 1$)  of the capital gain with the lender, where $0\leq \alpha < 1$. Therefore, upon prepayment at  $t<T$ the bank receives
\begin{equation*}
\B^P_t \dfn B^{P}_t  + \alpha\left(H_t - 1\right)^+ = \wh{B}^P_t\min\bra{1,H_t} + \alpha(H_t - 1)^+.
\end{equation*}
In \cite{mian2015house}, $\alpha=5\%$ is recommended, but we will leave $\alpha$ general in order to study the sensitivity of the contract value to $\alpha$.  However, to avoid mathematical technicalities which arise for large $\alpha$, we assume throughout that
\begin{ass}\label{A:aprm_alpha} $\alpha < B_0$.
\end{ass}
From a financial perspective this is not even an assumption. Indeed, typical initial loan to value ratios are $B_0 = .8$ or $B_0 = .9$, where as the suggested value of $\alpha$ in \cite{mian2015house} was $\alpha = .05$, and borrowers would never accept a contract which taxed away $80-90\%$ of the profits on the house. 


\section{\bf Perpetual Contracts: Optimal Stopping Problems and Option Values}\label{S:general_analysis}

For finite horizons, using American options methodology to value the contracts yields  free boundary problems for which closed from solutions are not available. Hence, to qualitatively compare the contracts,  we assume an infinite horizon.  Each of the balances, payment rates, and prepayment amounts are easily derived from their finite maturity analogs by taking $T=\infty$.

\begin{equation}\label{E:perpetual}
\begin{small}
\begin{tabular}{c||c|c}
Contract & Payment Rate at $t$ & Prepayment Amount at $t$ \\
\hline
FRM &  $m^FB_0$ & $B_0$\\
\hline
ABM & $m^A\min\bra{B_0, H_t}$ & $\min\bra{B_0,H_t}$\\
\hline
APRM & $m^P B_0\min\bra{1,H_t}$ & $B_0\min\bra{1,H_t} + \alpha(H_t-1)^+$
\end{tabular}
\end{small}
\end{equation}
\medskip

As the typical maturity is $30$ years,  the perpetual assumption is not strong, given that selective default decisions typically occur near the beginning of the term  \cite{mayer2009rise, chan2016determinants}. Also as our focus is on default incentives, we believe it is reasonable to assume a constant interest rate. This assumption is justified in Section \ref{S:finite_horizon} where we show in a stochastic rate environment that the default boundary is rather insensitive to the interest rate level.

We now formulate the infinite horizon optimal stopping problems associated to the contract values as well as the respective default option values. 

\subsection{Model and Assumptions} There is a filtered probability space $\basisq$, where $\qprob$ is the risk-neutral pricing measure. The risk-free rate is $r>0$.  The discounted house price index, in accordance with the literature (\cite{ambrose2012adjustable}, \cite{shiller2017continuous}, \cite{mei2019improving}) follows a geometric Brownian motion (GBM) with constant dividend rate $\delta>0$, which captures the benefit rate (either monetarily or through utility of home-ownership) of residing in the home
\begin{equation}\label{E:H_dynamics}
\frac{dH_t}{H_t} = (r -\delta)dt + \sigma dW_t,\qquad H_0 = 1.
\end{equation}
Here, $W$ is a $\qprob$-Brownian motion and $ \sigma >0$ is the constant volatility parameter.  Next, we make precise our behavioral/structural assumptions, as subtleties arise in the use of risk-neutral pricing for mortgage backed securities which, though common in the literature (c.f. \cite{kau1995valuation, kalotay2004option, chen2009value, de2010optimal, MR3534469} amongst many others), should be made explicit.  Of particular note, in contrast to \cite{campbell2003household, campbell2015model, mei2019improving} we take the bank's perspective on issuing the contract, and do not directly model the homeowner's behavior, except as we will see, through the assumptions implicit in the bank's worst case approach.

The bank recognizes that borrowers may prepay or default at both strategic and non-strategic times, with non-strategic times corresponding to turnover (i.e prepayment/default due to income loss, job relocation, death, divorce, etc.). However, at both non-strategic and strategic times, the bank assumes the borrower will do what is  worst for the bank.  More precisely, we set $\tau_{to}$ as the turnover time, and assume $\tau_{to}$  has constant $\filt^W$  intensity  $\lambda$.\footnote{Technically, $\tau_{t0}$ is the first jump time of an independent Poisson process $N_{to}$ with rate $\lambda$, also defined on $\basisq$. Then, the filtration $\filt$ is generated by $W$ and the process $t\to 1_{\tau_{to}\geq t}$.} Next, we set $\tau$ as a strategic default time  (i.e. $\tau$  is a $\filt^W$ stopping time). At both $\tau_{to}$ and $\tau$ the bank assumes the borrower will do what is worst for the bank. If the borrower prepays the bank receives the remaining balance (for whichever contract is being used).  If the borrower defaults, the bank receives the house value.\footnote{In Section \ref{SS:foreclosure} we account for foreclosure costs, which significantly reduce the amount the bank receives upon default. Indeed, foreclosure can take up to $3$ years (\cite{NY_Foreclosure}), with total costs (due to maintenance, marketing and discounted ``fire sale'' pricing) approaching $35-40\%$ of the home value: see \cite{ambrose1996cost, hatcher2006foreclosure, campbell2011forced, andersson2014loss}.}  Therefore, taking a worst-case perspective, the bank assumes it will receive $\min\bra{H,\B}$ with prepayment when $\B< H$ and default when $H\leq \B$.


Continuing, we assume the bank has access to a liquid market which trades in $H$ and a money market account with rate $r$, and uses $\qprob$ as a risk-neutral pricing measure. We stress that we are not assuming a liquid market for the borrower's home price. Rather, the liquid market is for the home price index, and our assumption is that the borrower's home price is well-approximated by the index level. Furthermore, the bank assumes the borrower will do what is strategically worst, and values the mortgage by minimizing the expected discounted payoff over all termination (stopping) times.

Importantly, we do not examine the borrower's rationale for prepaying (i.e. refinancing versus selling) or defaulting. However, in the absence of frictions (e.g. foreclosure costs, refinancing costs, moving costs), there is a direct connection between the bank applying a worst case analysis, and assuming the borrower is a financial optimizer. By contrast, when incorporating frictions such as foreclosure costs for the bank (see Section \ref{SS:foreclosure}), this connection is not as strong, as these costs do not factor into a homeowner's decision to default.

\subsection{Mortgage, Default Option Values}
Having stated the model, we next define the contract/option values as the value functions of corresponding optimal stopping problems. 
For $i\in\cbra{F,A,P}$, recall that $c^{i}$ is the cash flow rate, and $\B^i$ the prepayment amount (remaining balance plus possible penalty) should prepayment occur. Let us consider a time  $t\geq 0$ and assume that neither prepayment nor default has occurred by $t$. Consistent with the previous section's discussion, the bank assigns the contract a value of
\begin{equation*}
    \begin{split}
        V^i_t \dfn &\underset{\tau\ge t}{\operatorname{inf}}\ \condexpv{}{}{1_{\tau_{to} > t} \left(\int_t^{\tau\wedge\tau_{to}} e^{-r(u-t)}c^i_u du + e^{-r(\tau\wedge\tau_{to} - t)}\min\bra{H_{\tau\wedge\tau_{to}},\B^i_{\tau\wedge\tau_{to}}}\right)}{\F_t}\\
        =&1_{\tau_{to} > t}\underset{\tau\ge t}{\operatorname{inf}}\ \condexpv{}{}{\int_t^{\tau} e^{-(r+\lambda)(u-t)}\left(\lambda\min\bra{H_u,\B^i_u} + c^i_u\right)du + e^{-(r+\lambda)(\tau - t)}\min\bra{H_\tau,\B^i_\tau}}{\F^W_t}
    \end{split}
\end{equation*}
where we used the independence of $W$ and $\tau_{to}$, and that the latter is exponentially distributed. Above, the infimum is taken with respect to $\filt^W$ stopping times exceeding $t$.  Therefore, on the set $\cbra{\tau_{t0} > t, H_t = h}$ and using the Markov property for $H$, the bank assigns the contract the value $V^i_t = V^I(t,h)$ where
\begin{equation}\label{E:fin_horizon_vf}
\begin{split}
V^{i}(t,h) &\dfn \underset{\tau\ge t}{\operatorname{inf}}\ \espalt{}{t,h}{\int_t^\tau e^{-(r+\lambda)(u-t)}\left(\lambda\min\bra{H_u,\B^i_u} + c^i_u\right) du + e^{-(r+\lambda)(\tau - t)}\min\bra{H_\tau,\B^i_\tau}}\\
&= V^{i}(0,h).
\end{split}
\end{equation}
for $h>0$ and where we have written $\espalt{}{t,h}{\cdot}$ for $\condexpv{}{}{\cdot}{H_t=h}$. The last equality (i.e., time-independence of $V^i$) follows from \eqref{E:perpetual} and the time-homogeneity of $H$. Recall that we assumed $H_0=1$ when defining the contracts, hence when $t=0$ we are interested in the value of $V^i(0,h)$ at $h=1$ only. But for $t>0$, the house price can take any value $h>0$ and, in what follows, we will analyze the behavior of the mapping $h\mapsto V^i (t,h)$.  To understand the above formula, in particular the role played by turnover, note that for a given termination time $\tau$, the expectation is the arbitrage-free price for the cash flow $c^i + \lambda \min\bra{H,\B^i}$ until $\tau$, followed by a lump-sum payment of $\min\bra{H_\tau,\B^i_\tau}$ at $\tau$, provided we discount at the higher rate $r+\lambda$. Then, the mortgage value is found by applying the worst-case analysis over all such stopping times.

We next turn to the default option value, which  is the cost incurred by the bank due to the fact that the borrower can both default and prepay, rather than just prepay. While accounting for both prepayment and default yields the payment $\min\bra{H,\B^i}$,  excluding default gives the larger payment $\B^i$. Additionally, there is no reason to think the strategic worst-case termination time accounting for both prepayment and default is the same worst case termination time accounting only for prepayment. Accordingly, following similar computations which lead to \eqref{E:fin_horizon_vf} on the set $\cbra{\tau_{to}>t, H_t = h}$ for $i\in\cbra{F,A,P}$, we define the value of mortgage excluding defaults as
\begin{equation}\label{E:mort_val_primal_no_default} 
\begin{split}
V^{NoDef,i}(t,h) &\dfn \underset{\tau\ge t}{\operatorname{inf}}\espalt{}{t,h}{\int_t^\tau e^{-(r+\lambda)(u-t)}\left(\lambda \B^i_u + c^i_u\right) du + e^{-(r+\lambda)(\tau - t)}\B^i_\tau}.
\end{split}
\end{equation}
The default option cost to the bank is $D^{i} \dfn  V^{NoDef,i} - V^{i}$.


\section{\bf Perpetual Contracts: Solution}\label{S:perpetual}

We now provide explicit solutions to the stopping problems given in the previous section. In addition to an infinite horizon and constant interest rate, we also assume $ m^{i}>r$, $i\in\cbra{F,A,P}$ because negative mortgage spreads are unrealistic, and we recall from Assumption \ref{A:aprm_alpha} that $\alpha < B_0$.

\subsection{Free Boundary Problems}\label{SS:stopping}

It is standard procedure in the American option pricing literature to reduce perpetual optimal stopping problems to free boundary ODE systems. Furthermore, as we are in a time-homogeneous Markovian setting, we can focus on $t=0$, where as shown above, each of the perpetual contract values  takes the form 
\begin{equation}\label{E:abstract_v}
     V(h) = \underset{\tau\geq 0}{\operatorname{inf}}\espalt{}{h}{\int_0^\tau e^{-(r+\lambda)u}c(H_u) du + e^{-(r+\lambda)\tau}f(H_\tau)}
\end{equation}
for certain functions $c(h),f(h)$ of the house price. Indeed, one has
\begin{equation}\label{E:perpetual_flow_payoff}
\begin{small}
\begin{tabular}{c||c|c}
Contract & $c(h)$ & $f(h)$ \\
\hline
FRM &  $m^F B_0 +\lambda\min\bra{B_0,h} $ & $\min\bra{B_0,h}$\\
\hline
ABM & $(m^A+\lambda)\min\bra{B_0,h}$ & $\min\bra{B_0,h}$\\
\hline
APRM & $(m^P+\lambda)B_0\min\bra{1,h} + \alpha\lambda (h-1)^+$ & $B_0\min\bra{1,h} + \alpha(h-1)^+$
\end{tabular}
\end{small}
\end{equation}
\medskip
\begin{rem}\label{R:ABM_APRM_no_default_value}
In \eqref{E:perpetual_flow_payoff}, we used that for all times $u\geq 0$
\begin{equation*}
    \begin{split}
        \textrm{(ABM)}\quad & \min\bra{\B^A_u,H_u} = \min\bra{\min\bra{B_0,H_u},H_u} = \min\bra{B_0,H_u} = \B^A_u;\\
        \textrm{(APRM)}\quad & \min\bra{\B^P_u,H_u} = \min\bra{B_0\min\bra{1,H_u} + \alpha(H_u-1)^+,H_u}\\
        & = B_0\min\bra{1,H_u} + \alpha(H_u-1)^+ = \B^P_u.
    \end{split}
\end{equation*}
The first equality is clear, while the second follows from straightforward calculations using $B_0 \leq 1$.   Therefore, as the remaining balance for both the ABM and APRM is always dominated by the house price, we see that strategic ruthless  default is by definition eliminated for these two contracts.  As such, the default option is $0$ for these two contracts.
\end{rem}
The free boundary ODE associated to the optimal timing problem in \eqref{E:abstract_v} is 
\begin{equation}\label{E:fbp}
\min\bra{\Lcal_{H} V - (r+\lambda) V + c, f-V}(h) = 0;\qquad  h>0
\end{equation}
where 
\begin{equation}\label{E:H_op}
    \Lcal_{H}  =  (r-\delta)h \frac{\partial }{\partial h}+(1/2)\sigma^2 h^2 \frac{\partial^2 }{\partial h^2}
\end{equation}
is the second order operator associated to $H$. The exercise region is $E\dfn \cbra{V=f}$, while the continuation region is $C\dfn \cbra{V<f}$. These regions must be determined, along with the solution $V$ to the ODE.   

As usual, continuous and smooth pasting conditions at optimal stopping boundaries are imposed to obtain $C^1$ solutions amenable to \ito's formula and hence verification (c.f. \cite{MR2167640} for extension of \ito's formula to $C^1$ functions).  We connect solutions to \eqref{E:abstract_v} and \eqref{E:fbp}  using the following verification result, proved in Appendix \ref{AS:S4_proofs}.

\begin{prop}\label{P:verification}
Let $V:(0,\infty)\to\reals$ be $C^1$, with only a finite number of points where $V$ is not $C^2$,\footnote{Technically, for some $-\infty = a_0 < a_1 < a_2 < ... < a_N < a_{N+1}=\infty$, $V$ is $C^2$ on each $(a_n,a_{n+1})$, $n=0,\dots,N$.} and such that for all $h>0$
\begin{equation}\label{E:dotV_integ}
    \espalt{}{h}{\int_0^\infty e^{-2(r+\lambda)u}H_u^2 V_h(H_u)^2 du} < \infty;\qquad \lim_{t\to\infty} e^{-(r+\lambda)t}V(H_t) \textrm{ exists },\ \prob^h \textrm{ almost surely}.
\end{equation}
Let $c,f:(0,\infty)\to\reals$ be continuous non-negative functions, and assume $(0,\infty) = C \cup E$ where $C$ is a finite union of open intervals, and $f\in C^2(U)$ for some open $U \supset E$. Assume further that $\lim_{t\to\infty} e^{-(r+\lambda)t} f(H_t)$ exists $\prob^h$-a.s. for all $h>0$. Then, if $V$ satisfies \eqref{E:fbp} with $\Lcal_H V-(r+\lambda)V + c = 0$ in $C$ and $V=f$ in $E$, $V$ solves the optimal stopping problem in \eqref{E:abstract_v} and the optimal stopping time is $\tau^* = \inf\cbra{t\ge 0\such H_t \in E}$.
\end{prop}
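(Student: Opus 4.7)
The plan is to apply a generalized It\^o formula to the process $Y_t \dfn e^{-rt}V(H_t) + \int_0^t e^{-ru} c(H_u)\, du$ and then exploit the variational inequality \eqref{E:fbp} to get a submartingale inequality in general and a martingale identity along $\tau^*$. Since $V$ is only $C^1$ at the finitely many points $a_1,\dots,a_N$ where $C^2$-regularity fails, I would invoke the $C^1$ change-of-variable formula from \cite{MR2167640} (equivalently, Peskir's local-time-on-curves formula, observing that the local time contributions at the $a_n$ vanish because $V'$ is continuous). Piecewise application on the intervals $(a_n,a_{n+1})$ gives
\begin{equation*}
dY_t = e^{-rt}\bigl[\Lcal_H V - rV + c\bigr](H_t)\, dt + e^{-rt}\sigma H_t V_h(H_t)\, dW_t,
\end{equation*}
with the drift understood pointwise a.e., which suffices since $H$ spends zero Lebesgue time at any single point.

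By \eqref{E:fbp} the drift is nonnegative, so $Y$ is a local submartingale, and the first part of \eqref{E:dotV_integ} ensures the stochastic integral term is a true martingale on every finite horizon. For an arbitrary stopping time $\tau$ and a localizing sequence $\tau_n\uparrow\infty$, I would apply optional stopping at $\tau\wedge t\wedge \tau_n$, use Fatou on the nonnegative submartingale increment together with the nonnegativity of $c$, and invoke the second part of \eqref{E:dotV_integ} (existence of $\lim_{t\to\infty}e^{-rt}V(H_t)$, together with the analogous statement for $f$) to pass to the limit in $n$ and $t$. This produces
\begin{equation*}
V(h) \,\le\, \espalt{}{h}{\int_0^\tau e^{-ru}c(H_u)\,du + e^{-r\tau}V(H_\tau)} \,\le\, \espalt{}{h}{\int_0^\tau e^{-ru}c(H_u)\,du + e^{-r\tau}f(H_\tau)},
\end{equation*}
where the second inequality uses $V\le f$ from \eqref{E:fbp}. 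Taking the infimum over $\tau$ bounds $V$ above by the value function.

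For the reverse inequality, observe that $E$ is closed, since $C$ is a finite union of open intervals and $C\cup E = (0,\infty)$; hence $H_{\tau^*}\in E$ on $\{\tau^*<\infty\}$ and $V(H_{\tau^*}) = f(H_{\tau^*})$ there. On $\dbraco{0,\tau^*}$, the process $H$ lies in $C$, where $\Lcal_H V - rV + c = 0$, so $Y^{\tau^*}$ is a local martingale. The same localization and dominated-/monotone-convergence argument, powered by \eqref{E:dotV_integ}, upgrades this to $\espalt{}{h}{Y_{\tau^*}} = V(h)$, yielding
\begin{equation*}
V(h) = \espalt{}{h}{\int_0^{\tau^*} e^{-ru}c(H_u)\,du + e^{-r\tau^*}f(H_{\tau^*})},
\end{equation*}
so $V$ coincides with the value function in \eqref{E:abstract_v} and $\tau^*$ is optimal.

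The main technical obstacle is twofold: justifying the It\^o expansion under only $C^1$ regularity (handled cleanly by \cite{MR2167640}, provided one checks the absence of a local-time contribution at each $a_n$, which follows from $V\in C^1$) and controlling uniform integrability as both the localizing index and the time horizon diverge. The latter is precisely what \eqref{E:dotV_integ} is designed to supply: the $L^2$-bound on the dispersion coefficient of $Y$ trivializes the martingale property of the stochastic integral, while the a.s.\ limit ensures no mass escapes at infinity, so the optional-stopping identity survives the passage $t\to\infty$.
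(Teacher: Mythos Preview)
Your proof is correct and follows essentially the same route as the paper: generalized It\^o for $C^1$ functions via \cite{MR2167640}, the variational inequality \eqref{E:fbp} to control the drift, and the integrability hypotheses \eqref{E:dotV_integ} to pass to the limit. The only packaging difference is that the paper front-loads the uniform-integrability step: it applies Burkholder--Davis--Gundy to the first condition in \eqref{E:dotV_integ} to obtain $\espalt{}{h}{\sup_{t\geq 0}|M_t|}<\infty$, so $M$ is closable and $\espalt{}{h}{M_\tau}=0$ holds for \emph{every} stopping time (even those infinite with positive probability) in one stroke, which removes the need for your separate localize-then-pass-to-the-limit argument at $\tau^*$.
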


The key observation is that the homogeneous ODE $\Lcal_H V - (r+\lambda)V  =  0$ has general solution
\begin{equation*}
V(h)= C_1h^{p_1} + C_2h^{-p_2}
\end{equation*}
where  $C_1$ and $C_2$ are free constants, and
\begin{equation}\label{E:p1_p2}
\begin{split}
p_1 &= -\frac{r-\delta-\sigma^2/2}{\sigma^2} + \frac{1}{\sigma^2}\sqrt{\left(r-\delta-\sigma^2/2\right)^2 + 2(r+\lambda)\sigma^2} > 1;\\
p_2 &= \frac{r-\delta-\sigma^2/2}{\sigma^2} + \frac{1}{\sigma^2}\sqrt{\left(r-\delta-\sigma^2/2\right)^2 + 2(r+\lambda)\sigma^2} > 0.\\
\end{split}
\end{equation}
We use this fact to obtain explicit solutions to \eqref{E:fbp} for the FRM, ABM, APRM.  Before stating our results, we present a very useful identity used repeatedly throughout
\begin{equation}\label{E:p1_p2_ident}
\frac{1+p_2}{p_2}\times\frac{p_1-1}{p_1} = \frac{\delta+\lambda}{r+\lambda}.
\end{equation}

\subsection{FRM} Let us first consider the FRM where from \eqref{E:perpetual_flow_payoff} we see the functions $c$ and $f$ clearly satisfy the hypotheses of Proposition \ref{P:verification} provided $B_0$ is not a boundary point between the continuation and exercise regions (which we will show).  In the continuation region, the general solution to  $\Lcal_H V^F(h) - (r+\lambda)V^F(h) + c(h)  = 0$ is
\begin{equation*}
    \begin{split}
    V^F(h) &= \begin{cases} C_1 h^{p_1} + C_2 h^{-p_2} + \frac{m^F}{r+\lambda}B_0 + \frac{\lambda}{\lambda+\delta}h, & h < B_0\\  \wt{C}_1 h^{p_1} + \wt{C}_2 h^{-p_2}  + \frac{m^F + \lambda}{r+\lambda} B_0, & h > B_0\end{cases}
    \end{split}
\end{equation*}
for to-be-determined constants $C_1,C_2,\wt{C}_1,\wt{C}_2$. Here, we will identify two boundaries $h_1 < B_0 < h_2$ such that default occurs optimally for $h\leq h_1$, prepayment occurs optimally for $h\geq h_2$, and continuation occurs within. The  solution is obtained by finding  $(C_1,C_2,\wt{C}_1,\wt{C}_2,h_1,h_2)$ such that $V^F$ satisfies the continuous and smooth pasting conditions at $h_1,B_0,h_2$, and $C_1,C_2,\wt{C}_1,\wt{C}_2<0$ so that $V^F$ is concave with $V^F(h)\leq \min\bra{h,B_0}$. The following proposition, the proof of which is given in Appendix \ref{AS:S4_proofs}, summarizes the solution.

\begin{prop}\label{P:FRM_P}  For $V^{F}$ has the following action regions and  value function 
\begin{equation*}
\begin{tabular}{c||c|c|c|c}
    $h$ & $\le h_1$ & $\in (h_1, B_0)$ & $\in (B_0, h_2)$ & $\geq h_2$ \\
    \hline
    Action & Default & Continue & Continue & Prepay \\
    \hline
    $V^F(h)$ & $h$ & $C_1 h^{p_1} + C_2 h^{-p_2} + \frac{m^F}{r+\lambda}B_0 + \frac{\lambda}{\lambda+\delta}h$ & $\wt{C}_1 h^{p_1} + \wt{C}_2 h^{-p_2}  + \frac{m^F + \lambda}{r+\lambda} B_0$ &$B_0$\\
\end{tabular}
\end{equation*}
where  $0 < h_1 < B_0 < h_2$ and $C_1,C_2,\wt{C}_1,\wt{C}_2 < 0$ (see Figure \ref{F:value_frm} for illustrations).
\end{prop}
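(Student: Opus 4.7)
The plan is to construct a candidate $V^F$ of the proposed form, determine the four unknowns $(C_1, C_2, h_1, h_2)$ via continuous and smooth pasting at both boundaries, verify the claimed structural properties, and conclude by invoking Proposition~\ref{P:verification}.

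At the upper boundary, continuous pasting $V^F(h_2) = B_0$ together with smooth pasting $V^F_h(h_2) = 0$ yields a $2 \times 2$ linear system in the unknowns $(C_1 h_2^{p_1}, C_2 h_2^{-p_2})$, whose solution is proportional to $B_0(r - m^F)/r$. Since $m^F > r$ by assumption, this quantity is strictly negative, which gives $C_1 < 0$ and $C_2 < 0$ immediately. Analogous pasting at $h_1$, namely $V^F(h_1) = h_1$ and $V^F_h(h_1) = 1$, produces a companion linear system determining $(C_1 h_1^{p_1}, C_2 h_1^{-p_2})$ in terms of $h_1$ alone. Equating the two representations of each constant reduces the four pasting conditions to a coupled transcendental system in $(h_1, h_2)$ only. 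Existence and uniqueness of a solution satisfying $0 < h_1 < B_0 < h_2$ is then established by an intermediate-value/monotonicity argument after eliminating one variable, with the identity~\eqref{E:p1_p2_ident} used to pin down the strict ordering relative to $B_0$.

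Given $C_1, C_2 < 0$, the candidate $V^F$ is strictly concave on $(h_1, h_2)$ because $p_1(p_1 - 1) > 0$ and $p_2(p_2 + 1) > 0$. Combined with the $C^1$ pasting to the concave obstacle $f(h) = \min(h, B_0)$ at each boundary, this forces $V^F \le f$ throughout the continuation region, so~\eqref{E:fbp} is satisfied globally. The hypotheses of Proposition~\ref{P:verification} are then straightforward to verify: $0 \le V^F \le B_0$ gives $e^{-rt} V^F(H_t) \to 0$ almost surely, and the map $h \mapsto h V^F_h(h)$ is globally bounded (equal to $h$ on $(0, h_1]$, bounded on the compact interval $[h_1, h_2]$, and zero on $[h_2, \infty)$), so the integrability condition in~\eqref{E:dotV_integ} holds trivially.

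The principal obstacle is establishing existence and uniqueness of $(h_1, h_2)$ for the coupled transcendental system together with the strict ordering $h_1 < B_0 < h_2$; this requires careful monotonicity analysis after eliminating one unknown via the pasting equations and leveraging both $m^F > r$ and the identity~\eqref{E:p1_p2_ident}. The subsequent verification steps are essentially routine given the concavity and boundedness of the candidate solution.
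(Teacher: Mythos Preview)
Your approach is essentially the paper's: impose value matching and smooth pasting at both boundaries, deduce $C_1,C_2<0$ from the $h_2$-system, reduce to a transcendental system in $(h_1,h_2)$, and run a monotonicity argument for existence and uniqueness. The paper carries this out by setting $h_1=B_0x$, $h_2=B_0y$, expressing $y$ in terms of $x$, and showing a single function $\chi(x)$ has a unique root in $(0,1)$; your outline is compatible with this.

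There is, however, one genuine gap. You assert that once $V^F\le f$ is shown, ``\eqref{E:fbp} is satisfied globally.'' That is only half of the variational inequality. In the exercise regions you must also check $\Lcal_H V^F - rV^F + c \ge 0$. On $[h_2,\infty)$ this is $(m^F-r)B_0>0$, which is immediate. But on $(0,h_1]$, where $V^F(h)=h$, the condition becomes $m^F B_0 - \delta h \ge 0$, i.e.\ you need $h_1 \le m^F B_0/\delta$. This is not automatic and is exactly the place where the paper's case analysis (splitting on whether $m^F \ge p_1\delta/(p_1-1)$) and the identity~\eqref{E:p1_p2_ident} come in: they show the root $\hat{x}$ actually lies below $(p_1-1)m^F/(p_1\delta)$, which is precisely $h_1 < m^F B_0/\delta \cdot (p_1-1)/p_1 < m^F B_0/\delta$. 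Without this step Proposition~\ref{P:verification} cannot be invoked, because the candidate would fail~\eqref{E:fbp} on the default region. You should make this part of the existence argument explicit rather than folding it into the vague phrase ``used to pin down the strict ordering relative to $B_0$.''
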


\begin{figure}[t]
\includegraphics[height=5cm,width=7cm]{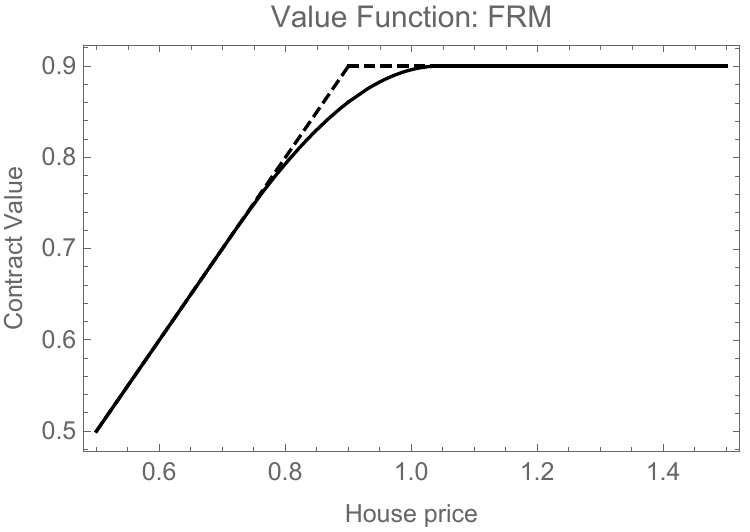}\\
\caption{ The value function $V^F(h)$ of FRM (solid line) versus the payoff function $f(h)=\min(B_0,h)$ (dashed line). The optimal boundaries are $h_1=0.72$ and $h_2=1.04$. The parameters are given in Table 1 with $m^F=m_0$ and $\delta=0.03$. }
\label{F:value_frm}

\end{figure}

\begin{rem} As with all our results in this section, the constants $C_1,C_2,\wt{C}_1,\wt{C}_2$ and boundaries $h_1,h_2$ admit explicit or semi-explicit solutions  (e.g. as the roots to algebraic equations). Also, note that if $\lambda=0$, then $C_1=\wt{C}_1$, $C_2=\wt{C}_2$, and $V^F(h)=C_1 h^{p_1} + C_2 h^{-p_2} + \frac{m^F}{r}B_0$ for $h\in(h_1,h_2)$.
\end{rem}

To compute the default option value, we must identify $V^{NoDef,F}$  from \eqref{E:mort_val_primal_no_default}. As this contract is artificial,  used only to isolate the value of default, we will not use the term ``prepayment''.  Rather we will use ``stop''. Note that excluding default, the respective functions $c,f$ in \eqref{E:fbp} are $c(h) = (m^F+\lambda) B_0$, $f(h) = B_0$.  These satisfy the assumptions of Proposition \ref{P:verification} because $e^{-(r+\lambda)t}H_t = e^{-(\delta+\lambda+\sigma^2/2)t + \sigma W_t} \rightarrow 0$ as $t\to\infty$.   As for the value function, we have the following result.

\begin{prop}\label{P:FRM_P_option}
For $V^{NoDef,F}$ immediate stopping is optimal and $V^{NoDef,F}(h) = B_0$ for all $h>0$. 
\end{prop}


\subsection{ABM}  Next we consider the ABM where $c,f$ are given in \eqref{E:perpetual_flow_payoff}. Here, in the continuation region, $\Lcal_H V^A(h) - (r+\lambda)V^A(h) + c(h)  = 0$ has solution
\begin{equation*}
V^A(h)=  \begin{cases} C_1h^{p_1} + C_2h^{p_2} + \frac{m^A+\lambda}{\delta+\lambda} h, & h<B_0\\ \widetilde{C}_1h^{p_1} + \widetilde{C}_2 h^{p_2} + \frac{m^A+\lambda}{r+\lambda}B_0, & h> B_0\end{cases}
\end{equation*}
where $C_1,C_2,\widetilde{C}_1,\widetilde{C}_2$ are free constants to be determined  together with the optimal prepayment threshold/s using boundary conditions. We recall that default is explicitly ruled out for the ABM, while in the prepayment region $V^A(h) = \min\bra{B_0, h}$. The next proposition characterizes the value function, showing the (surprising) existence of a prepayment region in low housing states, at least when the utility from occupying the house is sufficiently low. Figure \ref{F:value_abm}  illustrates the result.

\begin{prop}\label{P:ABM_P} The value function $V^{A}$ is increasing, $C^1$ and concave. 

(i) When $m^A\leq \delta$, $V^A$ has action regions
\begin{equation*}
    \begin{tabular}{c||c|c|c}
    $h$ & $\le B_0$ & $\in (B_0, h_2)$ & $\ge h_2$ \\
    \hline
    Action & Continue & Continue & Prepay \\
    \hline
    $V^A(h)$ & $C_1h^{p_1}  + \frac{m^A+\lambda }{\delta+\lambda} h$ & $\widetilde{C}_1h^{p_1} + \widetilde{C}_2h^{p_2} + \frac{m^A+\lambda}{r+\lambda}B_0$ & $B_0$\\
\end{tabular}
\end{equation*}
where $h_2$ is the optimal prepayment boundary. 

(ii) When $m^A>\delta$, $V^A$ has action regions
\begin{equation*}
\begin{tabular}{c||c|c|c|c}
    $h$ & $\le h_1$ & $\in (h_1,B_0]$ & $\in [B_0,h_2)$ & $\ge h_2$ \\
    \hline
    Action & Prepay & Continue & Continue & Prepay \\
    \hline
    $V^A(h)$ & $h$ & $C_1 h^{p_1} + C_2 h^{p_2} + \frac{m^A+\lambda}{\delta+\lambda} h$ & $\widetilde{C}_1h^{p_1} + \widetilde{C}_2h^{p_2} + \frac{m^A+\lambda}{r+\lambda}B_0$ & $B_0$\\
\end{tabular}
\end{equation*}
where $h_1$ and $h_2$ are the optimal prepayment thresholds.\footnote{$C_1,\wt{C}_1,\wt{C}_2,h_2$ for $m^A \leq \delta$ need not coincide with $C_1,\wt{C}_1,\wt{C}_2,h_2$ for $m^A >\delta$.}
\end{prop}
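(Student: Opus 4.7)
My plan is to construct $V^A$ as an explicit piecewise solution to the free boundary problem \eqref{E:fbp} and then verify optimality via Proposition \ref{P:verification}. The key structural feature is that the forcing $c(h)=m^A\min(B_0,h)$ changes form at $h=B_0$: on $(0,B_0)$ it equals $m^A h$ with particular solution $(m^A/\delta)h$, while on $(B_0,\infty)$ it equals $m^A B_0$ with particular solution $m^AB_0/r$. Using \eqref{E:p_continue_sol}, I therefore posit that, in the continuation region,
\[
V^A(h) = \begin{cases} C_1 h^{p_1} + C_2 h^{-p_2} + \dfrac{m^A}{\delta}h, & h<B_0,\\[4pt] \widetilde C_1 h^{p_1} + \widetilde C_2 h^{-p_2} + \dfrac{m^A}{r}B_0, & h>B_0.\end{cases}
\]

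The split between cases (i) and (ii) is driven by comparing $m^A/\delta$ with $1$. If $m^A\leq \delta$, then for small $h$ the particular solution $(m^A/\delta)h\leq h=f(h)$, so continuation is locally cheaper than prepayment near the origin; one takes continuation down to $0$ and discards $C_2$ (boundedness at $h=0^+$ combined with $f(0^+)=0$). This leaves $4$ unknowns $(C_1,\widetilde C_1,\widetilde C_2,h_2)$ to be pinned down by $C^1$ matching at the interior point $h=B_0$ (two equations) together with value matching $V^A(h_2)=B_0$ and smooth pasting $V^A_h(h_2)=0$ at the upper prepayment boundary (two more equations). If instead $m^A>\delta$, the same small-$h$ asymptotic shows $(m^A/\delta)h>h$, so unconstrained continuation would violate $V^A\leq f$; hence there must be a lower prepayment boundary $h_1$. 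This adds $C_2$, $h_1$ as unknowns and the matching conditions $V^A(h_1)=h_1$, $V^A_h(h_1)=1$, giving a balanced $6\times 6$ system.

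To solve these systems I would eliminate the linear constants in favor of the unknown boundaries, reducing case (i) to a single transcendental equation for $h_2$ and case (ii) to a coupled pair for $(h_1,h_2)$, then invoke the identity \eqref{E:p1_p2_ident} to derive compact forms amenable to monotonicity and intermediate value arguments. A careful sign analysis then shows $C_1,C_2,\widetilde C_1,\widetilde C_2<0$, which yields $V^A$ concave with $V^A\leq f$ and confirms the ordering $0<h_1<B_0<h_2$. Monotonicity follows from $V^A_h(h_2)=0$ together with concavity (so $V^A_h\geq 0$), and in case (ii) from $V^A_h(h_1)=1$ with concavity; $C^1$ regularity at $B_0$, $h_1$, $h_2$ is built in via the pasting conditions.

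For verification, $V^A$ is by construction $C^1$ and $C^2$ off the finite set $\{B_0,h_2\}$ (respectively $\{h_1,B_0,h_2\}$). Boundedness $0\leq V^A\leq B_0$, together with the standard estimate $|V^A_h(h)|\leq K(1+h^{p_1-1}+h^{-p_2-1})$ in continuation and $|V^A_h|\leq 1$ in exercise, yields the integrability hypothesis in \eqref{E:dotV_integ} via the explicit GBM moments $\mathbb{E}^h[H_u^{2p_1}]$ and $\mathbb{E}^h[H_u^{-2p_2}]$ (finite and exponentially integrable against $e^{-2ru}$ for $r$ large enough to dominate the resulting quadratic in $p_1$; this ultimately reduces to checking a standard sign condition in $p_1,p_2$ which holds under our GBM parameters), while $\lim_{t\to\infty}e^{-rt}V^A(H_t)=0$ a.s.\ follows from the uniform bound. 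Finally, I would check $\mathcal{L}_H f-rf+c\geq 0$ in the exercise regions: on $\{h\geq h_2\}$ this reads $-rB_0+m^AB_0\geq 0$, which holds since $m^A>r$, and on $\{h\leq h_1\}$ (case (ii)) it reads $(r-\delta)h-rh+m^Ah=(m^A-\delta)h\geq 0$, which is exactly the case (ii) hypothesis. Proposition \ref{P:verification} then delivers the optimality claim.

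The main obstacle is the case (ii) system: extracting existence, uniqueness, and the correct sign pattern of the six unknowns from a coupled nonlinear system without explicit closed form requires a delicate argument, most naturally carried out by parameterizing in $h_1$ (or $h_2$), showing the induced map to $h_2$ (resp.\ $h_1$) is monotone and crosses the diagonal exactly once, and using \eqref{E:p1_p2_ident} to simplify the resulting expressions.
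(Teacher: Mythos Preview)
Your approach is essentially the same as the paper's: build the piecewise candidate from the general solution \eqref{E:p_continue_sol} with the correct particular solutions on either side of $B_0$, impose $C^1$ matching at $B_0$ and smooth pasting at the free boundaries, reduce to a single equation for $h_2$ (case (i)) or a coupled $(h_1,h_2)$ system (case (ii)), establish existence/uniqueness via monotonicity and the intermediate value theorem, read off the negativity of the constants to get concavity and $V^A\le f$, and finish with Proposition \ref{P:verification}. The paper carries this out explicitly, solving for $x=h_1/B_0$ as a function of $y=h_2/B_0$ and analyzing a scalar function $g(y)$, splitting case (ii) further according to whether $m^A \lessgtr p_1\delta/(p_1-1)$.

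One point to correct: your proposed route to the integrability condition in \eqref{E:dotV_integ} via global moment bounds on $H_u^{2p_1}$ does not work as stated. The growth rate of $\mathbb{E}[H_u^{2p_1}]$ is exactly $2r+\sigma^2 p_1^2$ (use the characteristic equation for $p_1$), so $\int_0^\infty e^{-2ru}\mathbb{E}[H_u^{2p_1}]\,du=\infty$; the ``sign condition'' you allude to fails. The correct and much simpler observation, which the paper uses implicitly, is that the continuation region is bounded above by $h_2$ (and in case (ii) bounded below by $h_1$), while in case (i) the $h^{-p_2}$ coefficient vanishes on $(0,B_0)$. Hence $h\,V^A_h(h)$ is globally bounded, and \eqref{E:dotV_integ} is immediate.
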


\begin{figure}[t]
\includegraphics[height=5cm,width=7cm]{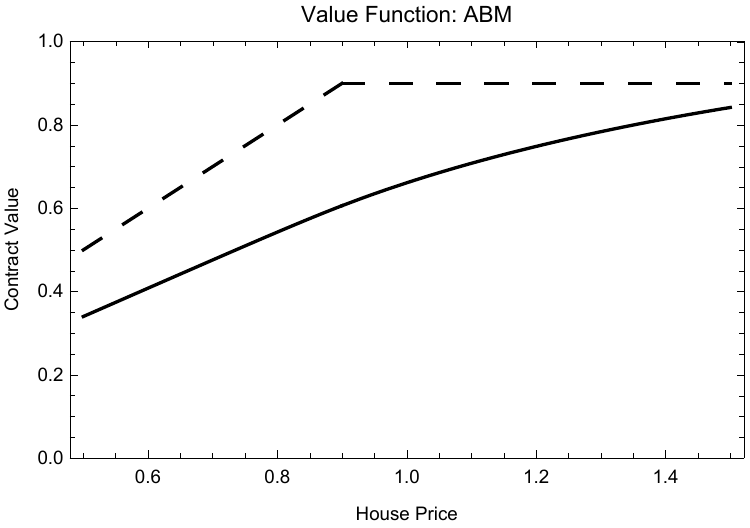}\quad
\includegraphics[height=5cm,width=7cm]{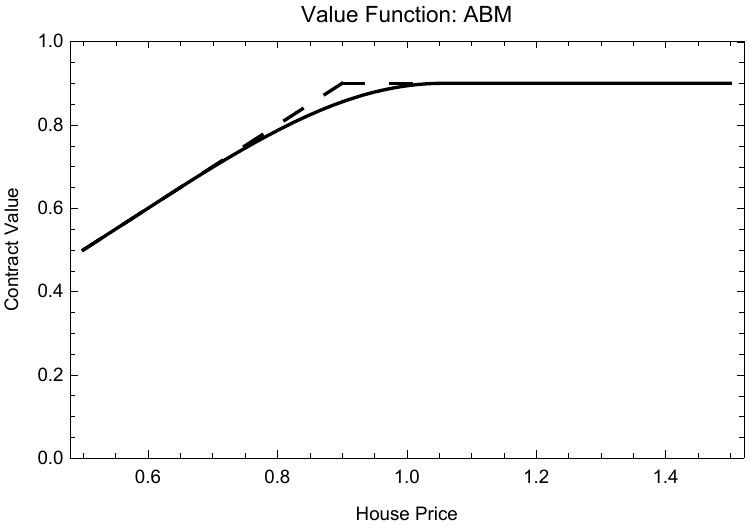} \\

\caption{ The ABM value function $V^A(h)$ (solid) versus the payoff function $\min(B_0,h)$ (dashed). Left panel: $m^A < \delta = 7\%$, with optimal boundary $h_2=1.88$. Right panel: $m^A > \delta = 3\%$, with optimal boundaries $h_1=0.64$ and $h_2=1.06$. All other parameters are as in Table \ref{tab:parameters} with $m^A = m_0$ therein.}
\label{F:value_abm}
\end{figure}

Let us give some intuition for why there is a ``lower'' prepayment region when $m^A>\delta$. When $h<B_0$, if the borrower prepays and sells the house he receives $h-\min\bra{B_0,h} = 0$.  Conversely, by continuing, on the net, he instantaneously pays  $h(m^A-\delta) dt$ where we take into account the utility flow $\delta h\,dt$. Thus, he has an incentive to prepay.  Of course, by prepaying the borrower is giving up the opportunity to prepay in the future,  but when the current home price $h$ is very low, the future prepayment is of lesser value. This is why prepayment occurs only when house price falls below some critical threshold $h_1<B_0$. To obtain the six unknowns $(h_1,h_2,C_1,C_2,\tilde{C_1},\tilde{C_2})$, in this case we impose both continuous and smooth pasting conditions at $h_1$, $B_0$ and $h_2$, i.e., providing six equations.

When $m^A\leq \delta$, the instantaneous net payment flow $ h(m^A-\delta) dt$ is non-positive so there is no prepayment region below $B_0$. More formally, prepaying yields $\min\bra{h,B_0} = h$ which is sub-optimal, as continuing forever yields the lower value
\begin{equation*}
\begin{split}
    &\espalt{h}{}{\int_0^\infty e^{-(r+\lambda)u} (m^A+\lambda)\times \min\bra{B_0, H_u} du}\\
    &\qquad < (m^A+\lambda) \int_0^\infty e^{-(r+\lambda)u}\espalt{h}{}{H_u}du = \frac{m^A+\lambda}{\delta+\lambda} h \leq h.
\end{split}
\end{equation*}
To determine uniquely the four unknowns $(h_2,C_1,\tilde{C_1},\tilde{C_2})$, we impose both continuous and smooth pasting conditions at $B_0$ and $h_2$.

At first glance, the ABM low prepayment region and FRM default region appear similar. However, there is an important difference.  For the FRM, the borrower is defaulting, which induces significant foreclosure costs to the bank. For the ABM the borrower is not defaulting, rather she is refinancing, or selling the home.  Her desire to prepay is based primarily on cash flow considerations.

That the low prepayment region disappears when $m^A\leq \delta$ provides a key insight into the value of the ABM (and, as we will see, the APRM as well). Having removed the default incentive, the homeowner will remain in the mortgage provided his utility is high enough in comparison to the interest he pays. Especially when this utility is high (e.g he likes the neighborhood or house; rents are expensive; he does not want the negative credit associated to default) the borrower will not prepay at low values, and the bank will not receive the house value in the depressed state. 

Lastly, from Remark \ref{R:ABM_APRM_no_default_value} we see that the ABM eliminates the default option value and hence $V^{NoDef,A} \equiv 0$. 



\subsection{APRM} We lastly consider the APRM where $c,f$ are given in \eqref{E:perpetual_flow_payoff}. In the continuation region, a particular solution to $\Lcal_H V^P(h)-(r+\lambda) V^P(h) + c(h) = 0$ is
\begin{equation*}
V^{P}_{par}(h)=\begin{cases} \frac{(m^P+\lambda)B_0}{\delta+\lambda}h, & h < 1\\
    \frac{\alpha\lambda}{\delta +\lambda}h + \frac{(m^P+\lambda)B_0-\alpha\lambda}{r+\lambda}, & h > 1\end{cases}
\end{equation*}
and the general solution is
\begin{equation*}
V^P(h)= \begin{cases} C_1 h^{p_1} + C_2 h^{-p_2} + V^{P}_{par}(h), & h < 1\\
    \wt{C}_1h^{p_1} + \wt{C}_2 h^{-p_2} + V^{P}_{par}(h), & h > 1\end{cases}.
\end{equation*}
Before formally presenting the results, we would like to explain what is happening, as the sharing proportion $\alpha$ complicates matters.

First, as with the ABM, if $m^P\leq \delta$ there is no low-state prepayment region and the reasoning for this is the same as for the ABM. Next, there is a threshold $\alpha^*$ such that a high state prepayment region (i.e. contained in $(1,\infty)$) emerges only if $\alpha < \alpha^*$. Intuitively this is clear, as for $\alpha$ high enough the penalty renders any prepayment benefit moot. Quantitatively, this is justified in the following remark, which also motivates our presentation of results.

\begin{rem}\label{R:alpha_star}

Consider when $h>1$. To simplify the presentation, we will assume no turnover ($\lambda = 0$), which in view of Remark \ref{R:turnover_APRM} and Lemma \ref{L:APRM_turnover_adjustment} below, entails essentially no loss of generality. From \eqref{E:abstract_v} and \eqref{E:perpetual_flow_payoff} we see that continuing forever ($\tau\equiv \infty$) yields a value dominated by $m^P B_0/r$, while prepaying immediately ($\tau \equiv 0$) gives $B_0 + \alpha(h-1)$.  As such, for any $\alpha > 0$, if  $h$ is large enough then prepayment is never optimal, and this leads one to ask if there is an $\alpha^*$ such that $\alpha \geq \alpha^*$ implies it is \emph{never} optimal to prepay when $h>1$.

Let us first assume $\alpha$ yields a high state prepayment region. Thus, there is an $h^* > 1$ such that for $h> h^*$ continuing is optimal (giving $V^P(h) = \wt{C}_2 h^{-p_2} + m^P B_0 /r$), while for $h$ immediately below $h^*$ prepayment is optimal (giving $V^P(h) = B_0 + \alpha(h-1)$). Value matching and smooth pasating at $h^*$ yield
\begin{equation*}
h^* = \frac{p_2}{1+p_2}\left(\frac{B_0}{\alpha}\left(\frac{m^P}{r}-1\right) + 1\right).
\end{equation*}
But, $h^*>1$ implies  $\alpha < p_2B_0(m^P/r-1)$ so for $\alpha \geq p_2B_0(m^P/r-1)$ there is no high state prepayment region. Next, recall from Assumption \ref{A:aprm_alpha} that 
\begin{equation}\label{E:mstar_first_use}
    \alpha < B_0 = p_2B_0\left(\frac{m^P}{r}-1\right) - \frac{p_2 B_0}{r}\left(m^P - m^{*}\right);\qquad m^{*} \dfn \frac{(1+p_2)r}{p_2} = \frac{p_1\delta}{p_1-1},
\end{equation}
where the last equality uses \eqref{E:p1_p2_ident}. Therefore, if $m^P\geq m^*$ then $h^* > 1$ and we may not (yet)  rule out the existence of a high state prepayment region.   However, for $m^P < m^*$, if 
\begin{equation*}
   \alpha \in \left(p_2B_0\left(\frac{m^P}{r}-1\right), B_0\right)
\end{equation*}
then there is no high state prepayment region. For  $0 < \alpha < B_0$ (when $m^P\geq m^*)$ or $0 < \alpha < B_0 - (p_2B_0/r)(m^*-m^P)$ (when $m^P < m^*$) the situation is more involved and $h^*>1$ does not automatically yield a high state prepayment region. To see this, assume $(1,\infty)$ is in the continuation region, so that $h>1$ implies $V^P(h) = \wt{C}_2 h^{-p_2} + m^P B_0 /r$ where $\wt{C}_2<0$. On $(1,\infty)$, the minimal difference between immediate payoff and $V^P(h)$ is
\begin{equation}\label{E:min_diff_f}
\begin{split}
&\inf_{h>1} \left(B_0 + \alpha(h-1) - \wt{C}_2 h^{-p_2}  - \frac{m^P B_0}{r}\right) = (1+p_2)\left(\frac{\alpha}{p_2}\right)^{\frac{p_2}{1+p_2}}(-\wt{C}_2)^{\frac{1}{1+p_2}} - \frac{B_0(m^P-r)+\alpha r}{r}.
\end{split}
\end{equation}
This quantity must be non-negative, which imposes the restriction
\begin{equation}\label{E:aprm_wtC2_cond}
    -\wt{C}_2 \geq \left(\frac{(m^P-r)B_0 + \alpha r}{(1+p_2)r}\right)^{1+p_2}\left(\frac{p_2}{\alpha}\right)^{p_2}.
\end{equation}
However, as $\wt{C}_2$ is determined by value matching and smooth pasting at $h=1$, it depends on what is taking place for low house states, and will change (see equations \eqref{E:m_leq_delta_beta} and \eqref{E:m_geq_delta_beta} below respectively) if $m^P\leq \delta$ or  $\delta < m^P$.   But, as we show in Remark \ref{R:alpha_star_new} below, for both $m^P\leq \delta$ and $\delta < m^P < m^*$ there is a unique $\alpha^* \in (0, p_2B_0(m^P/r-1) \subset (0,B_0)$ such that \eqref{E:aprm_wtC2_cond} holds if an only if $\alpha \geq \alpha^*$, and hence a high state prepayment region emerges only when $\alpha < \alpha^*$.  Lastly, when $m^P\geq m^*$, $\wt{C}_2$ takes the same form as \eqref{E:m_geq_delta_beta}, but now the unique  $\alpha^*  \in (0, p_2B_0(m^P/r-1)$ enforcing equality in \eqref{E:aprm_wtC2_cond} lies above $B_0$, violating Assumption \ref{A:aprm_alpha}. As such, for $0 < \alpha< B_0$ there is a high state prepayment region.

\end{rem}

Motivated by the above, we split the APRM into three cases: when $0 < m^P \leq \delta$;  when $\delta < m^P < m^*$; and when $m^P\geq m^*$. Accounting for turnover we still have
\begin{equation}\label{E:mstar_def}
m^* \dfn \frac{p_1\delta}{p_1-1}.
\end{equation}
but from \eqref{E:p1_p2} we see that $p_1$ depends on $\lambda$. At most, there will be three optimal prepayment boundaries. We will denote the one below $h=1$ by $h_1$, and two boundaries above $h=1$ by $h_2$ and $h_3$. Of course, even though we use the same notation for these boundaries, in general they do not coincide across different cases. 


\begin{prop}\label{P:APRM_P1} Assume $m^P\leq \delta$. Then there is $0 < \alpha^* < B_0$ such that 

(i) When $\alpha < \alpha^*$ the action regions and value function are
\begin{small}
\begin{equation*}
\begin{tabular}{c||c|c|c|c}
    $h$ & $< 1$ & $\in [1,h_2)$ & $\in [h_2,h_3]$ & $> h_2$ \\
    \hline
    Action & Continue & Continue & Prepay & Continue \\
    \hline
    $V^P(h)$ & $C_1h^{p_1} +V^{P}_{par}(h)$ & $\wt{C}_1h^{p_1} + \wt{C}_2h^{-p_2} + V^{P}_{par}(h)$ & $B_0 + \alpha(h-1)$ & $\check{C}_2 h^{-p_2} + V^{P}_{par}(h)$\\
\end{tabular}
\end{equation*}
\end{small}
where the constants $C_1, \wt{C}_1,\wt{C}_2,\check{C}_2$ are all negative and $h_2,h_3$ are optimal prepayment boundaries. 

(ii) When $\alpha\geq \alpha^*$ the action regions and value function are 
\begin{small}
\begin{equation*}
\begin{tabular}{c||c|c}
    $h$ & $< 1$ &  $> 1$ \\
    \hline
    Action & Continue & Continue \\
    \hline
    $V^P(h)$ & $K_1 h^{p_1}  + V^{P}_{par}(h) $ & $\wt{K}_2h^{-p_2} + V^{P}_{par}(h) $\\
\end{tabular}
\end{equation*}
\end{small}
where the constants $K_1,\wt{K}_2$ are all negative.
\end{prop}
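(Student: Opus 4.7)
The plan is to verify the proposition via Proposition \ref{P:verification}, constructing the candidate $V^P$ region by region from the homogeneous solutions \eqref{E:p_continue_sol} plus a particular integral, and pinning down the free constants and free boundaries via value matching and smooth pasting at $h=1$ and at each stopping threshold.

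I would handle case (ii) first, as it fixes the coefficient structure near $h=1$ and identifies $\alpha^*$. The requirement that $V^P$ stay bounded as $h\downarrow 0$ and grow at most linearly as $h\uparrow\infty$ (dominated by the immediate payoff $B_0+\alpha(h-1)$) forces the $h^{-p_2}$ mode on $(0,1)$ and the $h^{p_1}$ mode on $(1,\infty)$ to drop out, leaving the two unknowns $K_1$ and $\wt{K}_2$. Value matching and smooth pasting at $h=1$ form a $2\times 2$ linear system whose solution, after applying the identity \eqref{E:p1_p2_ident}, gives $\wt{K}_2=-\beta_1$ (with $\beta_1$ as in \eqref{E:m_leq_delta_beta}) and $K_1=\wt{K}_2+m^PB_0(1/r-1/\delta)$, both negative under $r<m^P\le\delta$. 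The ODE is satisfied on $(0,1)$ and $(1,\infty)$ by construction; $V^P\le f$ on $(0,1)$ is elementary (since $K_1<0$ and $m^P/\delta\le 1$), and on $(1,\infty)$ it is exactly $g(\alpha,\beta_1)\geq 0$, equivalent to $\alpha\geq\alpha^*$ by Remark \ref{R:alpha_star}. Proposition \ref{P:verification} with $E=\emptyset$ then closes (ii).

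For case (i), the same asymptotic reasoning forces the $h^{-p_2}$ mode on $(0,1)$ and the $h^{p_1}$ mode on $(h_3,\infty)$ to drop out, while the middle continuation region $(1,h_2)$ keeps both modes. This yields six unknowns $(C_1,\wt{C}_1,\wt{C}_2,\check{C}_2,h_2,h_3)$, matched by the six smooth-fit equations at the three interfaces $h=1,\,h_2,\,h_3$. I would solve by expressing $\wt{C}_1,\wt{C}_2,\check{C}_2$ and one algebraic relation between $(h_2,h_3)$ through the four linear conditions at $h_2,h_3$, reducing the problem to the two nonlinear equations at $h=1$ in the unknowns $(h_2,h_3)$ (with $C_1$ read off last from value matching at $h=1$). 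Existence and uniqueness of a solution with $1<h_2<h_3<\infty$ for $\alpha\in(0,\alpha^*)$ is to be established by a continuity/intermediate-value argument, anchored by the limits $\alpha\uparrow\alpha^*$ (where $h_2,h_3$ merge and the configuration degenerates into case (ii)) and $\alpha\downarrow 0$ (where $h_3\to\infty$ and the problem reduces to a one-sided problem with a single upper boundary analogous to the ABM).

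The main obstacle is closing the verification step once the candidate is in hand. The sign claims $C_1,\wt{C}_1,\wt{C}_2,\check{C}_2<0$ are to be extracted from the explicit formulas using \eqref{E:p1_p2_ident}, and they render $V^P$ concave on each continuation subinterval. The bound $V^P\le f$ on $(0,1)$ is elementary; on $(1,h_2)$ and $(h_3,\infty)$ the difference $g=V^P-f$ is concave (concave minus affine), with $g=0$ and $g'=0$ at the adjacent free boundary by value matching and smooth pasting, hence lies below its horizontal tangent and $g\le 0$ follows. The inequality $\Lcal_H V^P-rV^P+c\geq 0$ in the exercise region reduces, with $V^P=B_0+\alpha(h-1)$ and $c=m^PB_0$, to the affine inequality $(m^P-r)B_0+\alpha(r-\delta h)\geq 0$; a short calculation identifies its values at $h_2$ and $h_3$ with $-\tfrac{1}{2}\sigma^2 h^2 V^{P\prime\prime}$ at $h_2^-$ and $h_3^+$ respectively, which are non-negative exactly because $V^P$ is concave on the adjacent continuation pieces, so by affinity in $h$ the inequality holds throughout $[h_2,h_3]$. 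Finally, the integrability and limit conditions in \eqref{E:dotV_integ} hold routinely for GBM since $p_1>1$, $p_2>0$ and $V^P$ grows at most linearly. Proposition \ref{P:verification} then yields (i) with optimal stopping time $\tau^*=\inf\cbra{t\ge 0\such H_t\in[h_2,h_3]}$.
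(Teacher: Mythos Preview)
Your case (ii) is correct and essentially matches the paper. The only cosmetic difference is that the paper computes $K_1$ directly from the smooth-fit system rather than as $\wt{K}_2+m^PB_0(1/r-1/\delta)$; both give the same value, and the $V^P\le f$ check on $(1,\infty)$ via $g(\alpha,\beta_1)\ge 0$ is exactly what the paper does.

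In case (i) your overall scheme is right, but you have over-complicated the reduction. The two smooth-fit conditions at $h_3$ involve only the two unknowns $(\check{C}_2,h_3)$ on that side, so they determine $h_3$ \emph{explicitly} (the formula \eqref{E:aprm_top_h}) rather than merely yielding a relation between $h_2$ and $h_3$. With $h_3$ known, the conditions at $h_2$ express $\wt{C}_1,\wt{C}_2$ in terms of $h_2$ alone, and the smooth-fit pair at $h=1$ then gives one explicit value for $\wt{C}_2$ (since on $(0,1)$ there is no $h^{-p_2}$ term to match). Equating the two expressions for $\wt{C}_2$ yields a \emph{single} scalar equation $\chi(h_2)=0$. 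The paper proves existence and uniqueness of $h_2\in(1,h_3)$ directly for each fixed $\alpha<\alpha^*$ by showing $\chi$ is strictly increasing on $(1,h_3)$ with $\chi(1)<0$ and $\chi(h_3)>0$, the latter being precisely $g(\alpha,\beta_1)<0$. This is more concrete than your continuation-in-$\alpha$ program, which as written is only a sketch: to make it rigorous you would need an implicit-function-theorem argument ruling out bifurcations on all of $(0,\alpha^*)$, whereas the monotonicity route needs none of that.

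Your verification that $\Lcal_H V^P-rV^P+c\geq 0$ on $[h_2,h_3]$, by identifying the endpoint values with $-\tfrac12\sigma^2 h^2 V^{P\prime\prime}$ on the adjacent continuation pieces and then invoking concavity plus affinity in $h$, is correct and actually cleaner than the paper's approach, which plugs in the explicit formula for $h_3$ and checks the sign by hand.
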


\begin{figure}[t]
\includegraphics[height=5cm,width=7cm]{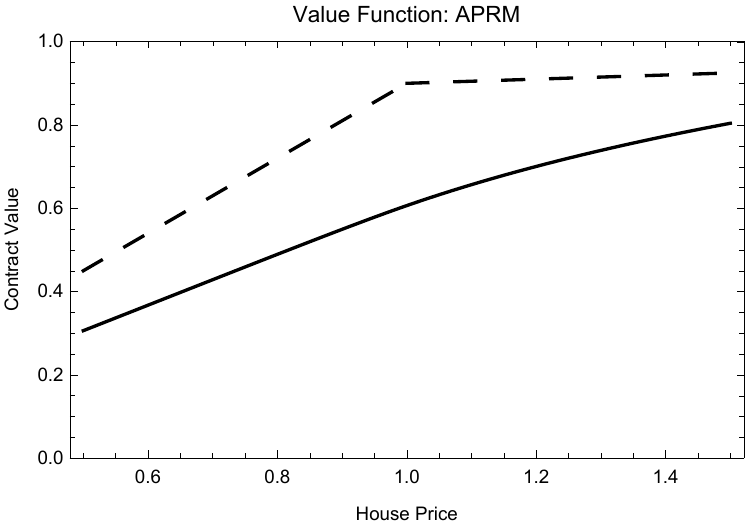}\quad
\includegraphics[height=5cm,width=7cm]{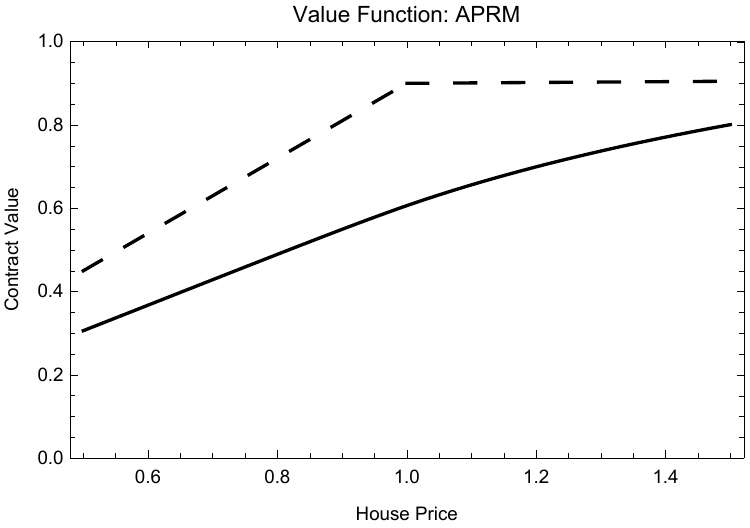} \\

\caption{The APRM value function $V^P(h)$ (solid) versus the payoff function $B_0\min(1,h) + \alpha(h-1)^+$ (dashed) when $m^P < \delta = 7\%$. All other parameters are in Table \ref{tab:parameters} with $m^P = m_0$ therein. Left panel: $\alpha= 5\% <\alpha^*=10.1\%$, with optimal boundaries $h_2=2.39$ and $h_3=7.60$. Right panel: $\alpha=1\%$ with boundaries $h_2 = 2.13$ and $h_3 = 37.94$.}
\label{F:value_aprm1}

\end{figure}

\begin{prop}\label{P:APRM_P2}
Assume $\delta < m^P < m^*$. Then there is $0 < \alpha^* < B_0$  such that 

For (i) $\alpha < \alpha^*$ the action regions and value functions are 
\begin{small}
\begin{equation*}
\begin{tabular}{c||c|c|c|c|c}
    $h$ & $\leq h_1$ & $\in (h_1,1]$ & $\in [1,h_2)$ & $\in [h_2,h_3]$ & $> h_3$ \\
    \hline
    Action & Prepay & Continue & Continue & Prepay & Continue \\
    \hline
    $V^P(h)$ & $B_0h$ & $C_1 h^{p_1} + C_2 h^{-p_2} +V^{P}_{par}(h)$ & $\wt{C}_1h^{p_1} + \wt{C}_2 h^{-p_2} + V^{P}_{par}(h)$ & $B_0 + \alpha(h-1)$ & $\check{C}_2 h^{-p_2} + V^{P}_{par}(h)$\\
\end{tabular}
\end{equation*}
\end{small}
where the constants $C_1,C_2,\wt{C}_1,\wt{C}_2,\check{C}_2$ are all negative, and $h_1,h_2,h_3$ are optimal prepayment boundaries. 

For (ii)  $\alpha\geq \alpha^*$ the action regions and value function are
\begin{small}
\begin{equation*}
\begin{tabular}{c||c|c|c}
    $h$ & $\leq h_1$ & $\in (h_1,1]$ & $> 1$ \\
    \hline
    Action & Prepay & Continue & Continue \\
    \hline
    $V^P(h)$ & $B_0h$ & $K_1 h^{p_1} + K_2 h^{-p_2} + V^{P}_{par}(h)$ & $\wt{K}_2 h^{-p_2} + V^{P}_{par}(h)$ \\
\end{tabular}
\end{equation*}
\end{small}
where the constants $K_1,K_2,\wt{K}_2$ are all negative, and $h_1$ is the optimal prepayment boundary.

\end{prop}

\begin{figure}[t]
\includegraphics[height=5cm,width=7cm]{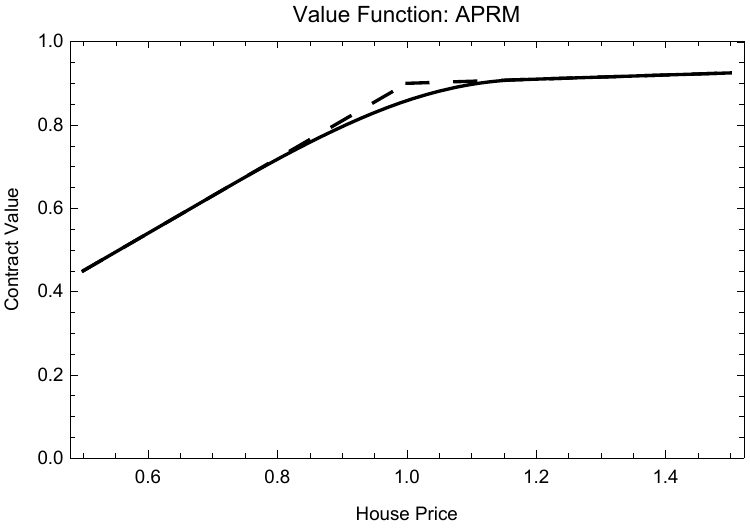}\quad
\includegraphics[height=5cm,width=7cm]{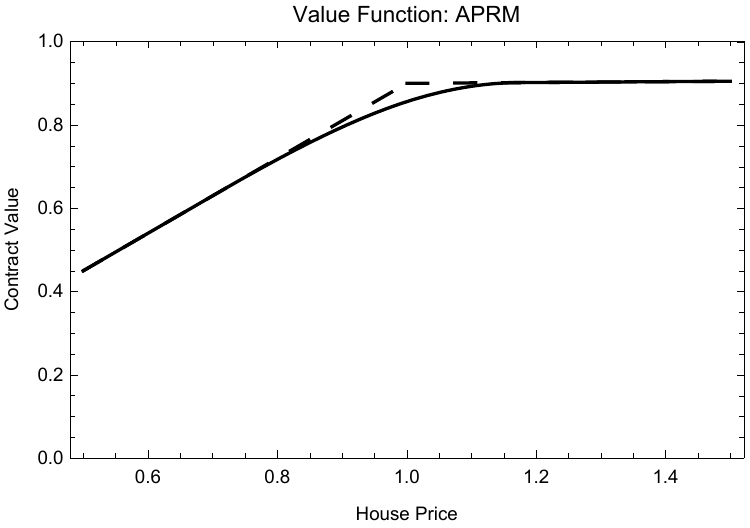} \\

\caption{ The APRM value function $V^P(h)$ (solid line) versus the payoff function $B_0\min(1,h) + \alpha(h-1)^+$ (dashed line) when $3\% = \delta < m^P < m^* = 5.08\%$. Parameters are in Table \ref{tab:parameters} with $m^P = m_0$ therein. Left panel: $\alpha=5\% < \alpha^* = 76\%$, with optimal boundaries $h_1=0.72$, $h_2=1.17$ and $h_3=16.39$. Right panel: $\alpha=1\%$ with boundaries $h_1=0.71$, $h_2 = 1.17$ and $h_3 = 81.81$.}
\label{F:value_aprm2}
\end{figure}

We conclude with the $m^P\geq m^*$ case.  As mentioned above, this forces $\alpha^*\geq B_0$ which violates Assumption \ref{A:aprm_alpha}.  Therefore, there is always a high-state prepayment region.  This leads to the following 

\begin{prop}\label{P:APRM_P3} Assume $m^P\geq m^*$. Then, the action regions and value function are
\begin{small}
\begin{equation*}
\begin{tabular}{c||c|c|c|c|c}
    $h$ & $\leq h_1$ & $\in (h_1,1]$ & $\in [1,h_2)$ & $\in [h_2,h_3]$ & $> h_3$ \\
    \hline
    Action & Prepay & Continue & Continue & Prepay & Continue \\
    \hline
    $V^P(h)$ & $B_0h$ & $C_1 h^{p_1} + C_2 h^{-p_2} + V^{P}_{par}(h)$ & $\wt{C}_1 h^{p_1} + \wt{C}_2 h^{-p_2} + V^{P}_{par}(h)$ & $B_0 + \alpha(h-1)$ & $\check{C}_2 h^{-p_2} + V^{P}_{par}(h)$\\
\end{tabular}
\end{equation*}
\end{small}
where the constants $C_1,C_2,\wt{C}_1,\wt{C}_2, \check{C}_2$ are all negative, and $h_1,h_2,h_3$ are optimal prepayment boundaries.
\end{prop}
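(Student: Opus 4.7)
The plan is to construct a candidate $V^P$ of the prescribed piecewise form, pin down its unknowns by value matching and smooth pasting, verify the variational inequality \eqref{E:fbp}, and conclude by Proposition \ref{P:verification}. On each continuation branch the ansatz solves $\Lcal_H V^P - rV^P + c = 0$ with $c(h)=m^P B_0\min(1,h)$, combining the general homogeneous solution \eqref{E:p_continue_sol} with the particular solutions $m^PB_0 h/\delta$ on $(0,1)$ and $m^PB_0/r$ on $(1,\infty)$; on the two prepay branches $V^P=f$. On the unbounded branch $(h_3,\infty)$ the $h^{p_1}$-coefficient is dropped so that $V^P$ stays bounded as $h\to\infty$, which is needed for the terminal condition in \eqref{E:dotV_integ}. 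In total there are eight unknowns $(h_1,h_2,h_3,C_1,C_2,\wt{C}_1,\wt{C}_2,\check{C}_2)$.

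The eight equations come from value matching and smooth pasting at $h_1$, $h_2$, $h_3$, together with $C^1$ matching at the interior point $h=1$ where the ODE forcing switches. The pair of equations at $h_3$ decouples and reproduces, exactly as in Remark \ref{R:alpha_star}, $h_3=\frac{p_2}{1+p_2}\bigl(\frac{B_0}{\alpha}(m^P/r-1)+1\bigr)$ and $\check{C}_2=-\alpha h_3^{1+p_2}/p_2<0$. The assumption $\alpha<B_0$, combined with $m^P\ge m^*$ forcing $\alpha^*\ge B_0$ in Remark \ref{R:alpha_star}, gives $\alpha<p_2 B_0(m^P/r-1)$, hence $h_3>1$. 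At $h_1$, the pasting equations can be rearranged to $C_1 h_1^{p_1}(p_1-1) = C_2 h_1^{-p_2}(1+p_2)$ and $C_1 h_1^{p_1}+C_2 h_1^{-p_2} = -B_0 h_1(m^P/\delta-1)<0$, which together force $C_1,C_2<0$; the analogous manipulation at $h_2$ yields $\wt{C}_1,\wt{C}_2<0$. Solving the $h_1$- and $h_2$-pasting linearly for $(C_1,C_2)$ and $(\wt{C}_1,\wt{C}_2)$ in terms of $h_1$ and $h_2$ reduces the remaining conditions to a two-dimensional nonlinear system in $(h_1,h_2)\in(0,1)\times(1,h_3)$ coming from the two $C^1$ matching equations at $h=1$. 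Existence and uniqueness will follow by a standard intermediate value / monotonicity argument: fixing $h_1$ and solving one matching equation yields a unique $h_2=h_2(h_1)$ via a sign change, after which the second equation becomes a strictly monotone scalar equation in $h_1$ with a sign change on $(0,1)$.

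For verification, on $(0,h_1]$ the residual $\Lcal_H(B_0 h)-r(B_0 h)+m^P B_0 h = (m^P-\delta)B_0 h\ge 0$ since $m^P\ge m^*>\delta$. On $[h_2,h_3]$ the residual equals $B_0(m^P-r)+r\alpha-\delta\alpha h$; rewriting $h_3=\frac{p_1-1}{p_1}\cdot\frac{B_0(m^P-r)+r\alpha}{\delta\alpha}$ via identity \eqref{E:p1_p2_ident} gives $h_3<\frac{B_0(m^P-r)+r\alpha}{\delta\alpha}$, so the residual is strictly positive throughout $[h_2,h_3]$. On each continuation interval $f-V^P$ vanishes to first order at the adjacent boundary points by smooth pasting, and a direct second-derivative check using the negativity of the constants gives the correct one-sided concavity, so $V^P\le f$ locally; an interior crossing would contradict the uniqueness of solutions to the linear ODE with prescribed boundary data. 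Finally, $e^{-rt}H_t=\exp(-(\delta+\sigma^2/2)t+\sigma W_t)\to 0$ a.s. together with the explicit form of $V^P$ validates the integrability hypotheses of Proposition \ref{P:verification}, which then delivers the optimal stopping time $\tau^*=\inf\{t\ge 0:H_t\in(0,h_1]\cup[h_2,h_3]\}$. The main obstacle is the joint existence/uniqueness of the coupled pasting system together with the global verification of $V^P\le f$ across the kink of $f$ at $h=1$; these will follow by the same techniques used for Proposition \ref{P:APRM_P2}, which handles the more delicate regime $\delta<m^P<m^*$, with simplifications stemming from the fact that a high-state prepay interval is always present here.
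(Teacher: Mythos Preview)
Your overall strategy matches the paper's: build the piecewise candidate by value matching and smooth pasting, then verify via Proposition \ref{P:verification}. The pasting at $h_3$, the sign argument for $C_1,C_2$ from the $h_1$-equations, and the residual checks on $(0,h_1]$ and $[h_2,h_3]$ are all correct and agree with the paper.

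The genuine gap is in the existence and uniqueness of $(h_1,h_2)$, which you defer to ``the same techniques used for Proposition \ref{P:APRM_P2} \ldots with simplifications.'' In fact the regime $m^P\ge m^*$ is \emph{harder} on this point, not easier. Eliminating the constants via the $C^1$ conditions at $h=1$ (exactly as in Proposition \ref{P:APRM_P2}) gives
\[
h_1=\left(\frac{p_1(1-\delta/m^P)}{1+\frac{\alpha\delta}{m^PB_0}\,p_1\,h_2^{-p_1}(h_3-h_2)}\right)^{1/(p_1-1)}.
\]
When $\delta<m^P<m^*$ the numerator $p_1(1-\delta/m^P)<1$, so $h_1<1$ follows automatically for any $h_2\in(1,h_3)$, and the scalar equation $\chi(h_2)=0$ is analyzed on all of $(1,h_3)$. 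When $m^P\ge m^*$ the numerator is $\ge 1$, so $h_1<1$ forces $h_2$ into a strictly smaller interval $(1,h_0)$ with $h_0\in(1,h_3]$ determined by $h_0^{-p_1}(h_3-h_0)=\frac{m^PB_0}{\alpha\delta}\bigl(\tfrac{p_1-1}{p_1}-\tfrac{\delta}{m^P}\bigr)$. The paper then needs two new ingredients: (i) monotonicity of $\chi$ on $(1,h_0)$, which now uses the defining inequality of $h_0$ in place of the simple bound $p_1(1-\delta/m^P)<1$ available in Proposition \ref{P:APRM_P2}; and (ii) the endpoint sign $\chi(h_0)<0$, proved via the implicit function theorem by showing $\alpha\mapsto h_0(\alpha)$ is strictly decreasing, combined with $h_0(B_0)=1$ and $\chi(1)\to 0$ as $\alpha\uparrow B_0$. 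Neither step follows from Proposition \ref{P:APRM_P2}. Your proposed alternative reduction (solve first for $h_2(h_1)$) faces the mirror constraint: the map $h_2(h_1)$ is only defined with $h_2>1$ on a proper subinterval of $(0,1)$, and you would still need the analogue of $\chi(h_0)<0$ at the endpoint.

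A minor point: the verification of $V^P\le f$ on the continuation intervals should appeal directly to the global concavity of $V^P$ (all five coefficients are negative), not to ODE uniqueness. For instance on $(1,h_2)$ concavity gives $V^P_h\ge V^P_h(h_2)=\alpha$, which precludes any interior touching of $B_0+\alpha(h-1)$.
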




Lastly, as with the ABM recall from Remark \ref{R:ABM_APRM_no_default_value} that the APRM eliminates the default option value and hence $V^{NoDef,P} \equiv 0$.



\subsection{Foreclosure Costs and Endogenous Spreads}\label{SS:foreclosure}

To gain a clearer picture of ABM and APRM's effectiveness, we now account for foreclosure costs. Indeed, should the borrower default at time $\tau$, the bank typically receives far less than the home price $H_{\tau}$, due to both direct and indirect foreclosure costs which may be $30-40\%$ of the home value (\cite{campbell2011forced, andersson2014loss}), and the primary reason the ABM and APRM have been proposed as beneficial to the banks is that despite the lower payment rates and outstanding balances, they are competitive with the traditional FRM when one accounts for foreclosure costs.  

We assume that upon default of the FRM at time $\tau$, there is a fractional loss $\phi$ incurred by the bank, so that the bank receives $(1-\phi)H_{\tau}$.  The borrower does not account for $\phi$, so it will not affect the optimal default time\footnote{One can allow $\tau$ to depend on $\phi$ by changing the action payoff from $\min\bra{B_0,H_\tau}$ to $\min\bra{B_0,(1-\phi)H_\tau}$ and Proposition \ref{P:FRM_P} goes through with minor adjustments. Our perspective is that $\phi$ should not change the default time.}.   Proposition \ref{P:FRM_P} implies the optimal stopping time, for a given starting house price level $h$, is $\tau(h) = \inf\cbra{t\geq 0 \such H_t \leq h_1 \textrm{ or } H_t \geq h_2}$ with default at $h_1$ and prepayment at $h_2$. Therefore, the FRM has foreclosure-adjusted value
\begin{equation*}
\begin{split}
    V^{F}_{\phi}(h) &= \espalt{}{h}{\int_0^{\tau(h)}  e^{-(r+\lambda)u}\left( m^F B_0 + \lambda\left((1-\phi)H_u 1_{H_u\leq B_0} + B_0 1_{H_u > B_0}\right)\right)du}\\
    &\qquad + \espalt{}{h}{e^{-(r+\lambda)\tau(h)}\left((1-\phi)H_{\tau(h)} 1_{H_{\tau(h)}\leq B_0} + B_0 1_{H_{\tau(h)}>B_0}\right)}\\
    &= V^F(h) - \phi\lambda \espalt{}{h}{\int_0^{\tau(h)}  e^{-(r+\lambda)u}H_u 1_{H_u\leq B_0}du} - \phi \espalt{}{h}{e^{-(r+\lambda)\tau(h)} H_{\tau(h)}1_{H_{\tau(h)} \leq B_0}}.
\end{split}
\end{equation*}
If $h\geq h_2>B_0$ then $\tau(h) = 0$ (immediate prepayment) and $V^F_{\phi}(h) = B_0$. Similarly, if $h\leq h_1 < B_0$ then $\tau(h) = 0$ (immediate default) and $V^F_{\phi}(h) = (1-\phi)h$.  For $h_1 < h < h_2$,  we must compute
\begin{equation*}
         u_1(h) \dfn \espalt{}{h}{\int_0^{\tau(h)}  e^{-(r+\lambda)u}H_u 1_{H_u\leq B_0}du}, \qquad u_2(h) \dfn \espalt{}{h}{e^{-(r+\lambda)\tau(h)} H_{\tau(h)}1_{H_{\tau(h)} \leq B_0}}.
\end{equation*}
We start with $u_1$.  Here, calculations very similar to those which lead to Proposition \ref{P:FRM_P} show that 
\begin{equation*}
    u_1(h) = w_1(h) - \espalt{}{h}{e^{-(r+\lambda)\tau(h)}w_1(H_{\tau(h)})};\quad w_1(h) \dfn \begin{cases} -\frac{(1+p_2)B_0^{1-p_1}}{(p_1+p_2)(\lambda+\delta)} h^{p_1} + \frac{h}{\lambda + \delta} & h < B_0 \\ \frac{(p_1-1)B_0^{1+p_2}}{(p_1+p_2)(\lambda + \delta)} h^{-p_2} & h\geq B_0\end{cases}
\end{equation*}
Using that $\tau(h) = \tau_1(h) \wedge \tau_2(h)$, where $\tau_i(h)$ is the first hitting time to $h_i,i=1,2$, we may further simplify this to
\begin{equation*}
    u_1(h) = w_1(h) - w_1(h_1)\espalt{}{h}{e^{-(r+\lambda)\tau_1(h)}1_{\tau_1(h) < \tau_2(h)}} - w_1(h_2)\espalt{}{h}{e^{-(r+\lambda)\tau_2(h)}1_{\tau_1(h) > \tau_2(h)}}.
\end{equation*}
Similarly
\begin{equation*}
    u_2(h) = h_1\espalt{}{h}{e^{-(r+\lambda)\tau_1(h)}1_{\tau_1(h) < \tau_2(h)}}.
\end{equation*}
Direct calculation shows for $h_1 < h < h_2$
\begin{equation*}
    \begin{split}
   \espalt{}{h}{e^{-(r+\lambda)\tau_1(h)}1_{\tau_1(h) < \tau_2(h)}} &= \left(\frac{h_1}{h}\right)^{p_2}\frac{h_2^{p_1+p_2}-h^{p_1+p_2}}{h_2^{p_1+p_2}-h_1^{p_1+p_2}}\\ \espalt{}{h}{e^{-(r+\lambda)\tau_2(h)}1_{\tau_1(h) > \tau_2(h)}} &= \left(\frac{h_2}{h}\right)^{p_2}\frac{h^{p_1+p_2}-h_1^{p_1+p_2}}{h_2^{p_1+p_2}-h_1^{p_1+p_2}}
   \end{split}
\end{equation*}
so that in $(h_1,h_2)$ we have an explicit expression for
\begin{equation}\label{E:FRM_foreclosure}
    V^{F}_{\phi}(h) = V^{F}(h) - \lambda\phi u_1(h) - \phi u_2(h).
\end{equation}
We may thus identify the equivalent foreclosure proportional costs $\phi^A = \phi^A(h)$ and $\phi^P = \phi^P(h)$ which, for fixed contract rates $m^F,m^A,m^P$, equate the adjusted FRM value $V^{F}_{\phi}(h)$ with the respective ABM and APRM values $V^{A}(h)$ and  $V^{P}(h)$. This in turn will tell us how large foreclosure costs need to be before the proposed contracts outperform the FRM.

A second way of comparing the contracts' performance accounting for foreclosure costs is to identify endogenous mortgage rates.  Here, for a given foreclosure percentage cost $\phi$ and FRM contract rate $m^F$, one seeks rates $m^A$ and  $m^P$ for which all three contracts have the same value. More precisely, if we think of the contract value as a function of both the house price and mortgage rate, then we use \eqref{E:FRM_foreclosure} to seek $m^A(\phi)$ and $m^P(\phi)$ such that
\begin{equation*}
    V^{F}_{\phi}(h,m^F) = V^{A}(h,m^A(\phi)) = V^{P}(h,m^P(\phi))
\end{equation*}
and identify the endogenous spread (in bps) as
\begin{equation}\label{E:endog_spread}
\mathfrak{s}^A(\phi) \dfn 10,000 \times (m^A(\phi)-m^F);\qquad \mathfrak{s}^P(\phi) \dfn 10,000 \times (m^P(\phi) - m^F).
\end{equation}

\section{\bf Numerical Analysis}\label{S:numerics}

We now numerically compare the three contracts, using the parameters in Table \ref{tab:parameters}. 
    \begin{table}
        \centering
        \begin{tabular}{|c|c|c|c|}
             \hline
             $r = 0.1034\%$ & $m_0 = 3.31\%$ & $B_0 = 0.90$ & $\sigma = 11.25\%$\\
             \hline
             $\lambda = 4.54\%$ & $\delta = 4.50\%$ or $7.00\%$ & $\alpha = 1.00\%$ or $5.00\%$ & \\
             \hline
        \end{tabular}
        \caption{Parameter Values}\label{tab:parameters}
    \end{table}
Therein, $r$ is the $1$-month US Libor rate as of January 4, 2022; $m_0$  is the Mortgage Banker's Association $30$ jumbo fixed rate as of January 3, 2022; $\sigma\%$ comes from \cite{greenwald2018financial}; and for $\lambda$, we follow \cite{hu2011asset}, estimating the turnover rate as the ratio of existing home sales to existing home stock (obtained from the St Louis fed as of January 1, 2022). The values for $\delta$  combine the $2\%$ benefit rate and $5\%$ rental cost rate of \cite{mei2019improving}, using either all or half of the rental cost rate.  Lastly, for $\alpha$ we take either $5\%$ (suggested in \cite{mian2013state, mian2015house}), or $\alpha = 1\%$ to offer a low penalty comparison.

As summarized in the introduction,  our main findings are 

\begin{enumerate}[(1)]
\item The APRM contract value is insensitive to the capital gain sharing proportion $\alpha$ because, even for small $\alpha$, high state prepayment is virtually eliminated.  Therefore, it is difficult to allow for endogenous $\alpha$ as one cannot invert the contract value in $\alpha$.
\item For a given common contract rate,  the APRM has a lower value than the ABM, even ignoring the capital gain sharing feature,   because the APRM lowers payments once $H$ falls below $1$, rather than once $H$ falls below $B_0$.
\item Depending on the benefit rate $\delta$, for relatively low foreclosure costs, the ABM may be more valuable than the FRM in low house price states even at a common contract rate. Furthermore, for all $\delta$ the ABM has a lower equivalent foreclosure cost than the APRM.  
\item For observed foreclosure costs (e.g. $30\%-35\%$) the endogenous spread of the ABM is lower than that for the APRM, but both increase substantially with the utility rate $\delta$. However, for low utility rates, at observed foreclosure rates, the ABM actually has a negative endogenous spread.   
\end{enumerate}

\begin{rem}\label{R::marketing}
Item $(4)$ above implies the ABM could be effectively marketed, even when the benefit rate is high. Indeed, at almost the same mortgage rate as the traditional FRM,  the ABM will ensure the depositor is never underwater. As the benefit from remaining in the house is large relative to her mortgage rate, the borrower will not prepay the mortgage at low price values.  Thus, the borrower gets default protection, and the bank gets a fairly valued mortgage, with little to no possibility of receiving the house in low house price states. In view of the insensitivity of the APRM contract value to the sharing penalty $\alpha$, we conclude the ABM is more effective at preventing ruthless defaults; being palatable to the borrower; and not introducing unexpected prepayment behaviors. 

\end{rem}

We begin by investigating the sensitivity of the APRM contract value to $\alpha$. Figure \ref{F:srm_alpha_maps} shows the map $\alpha\to V^{P}(h;\alpha)$ for three different values of $h$ (recall that $H_0=1$ so the different values of $h$ correspond to $H_t$ at some future time $t>0$.). We see that  $\alpha$ has a minimal effect on the contract value, with $V^P$ only visibly increasing in $\alpha$ for the high house state.  This is because if $\alpha$ exceeds the threshold $\alpha^*$ from Propositions \ref{P:APRM_P1} and \ref{P:APRM_P2} (in this case, $\alpha^* = 10.1\%$ (high $\delta$) or $\alpha^* = 28.2\%$ (low $\delta$)), the borrower never prepays the APRM when $h>1$, and the contract value only depends on $\alpha$ through turnover, and not strategic default. However, even for $\alpha < \alpha^*$, the high state prepayment boundary $h_2$ for the APRM is large enough to ensure a minimal dependence of $V^P$ on $\alpha$.  

\begin{figure}[t]
\includegraphics[height=5cm,width=7cm]{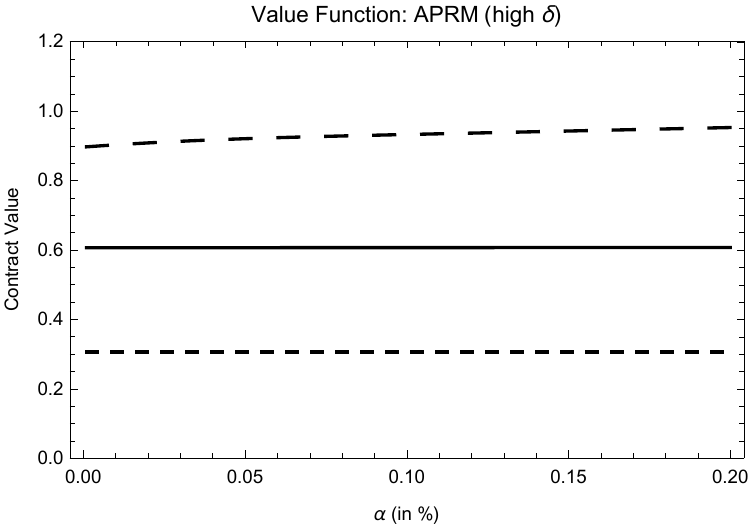}\qquad \includegraphics[height=5cm,width=7cm]{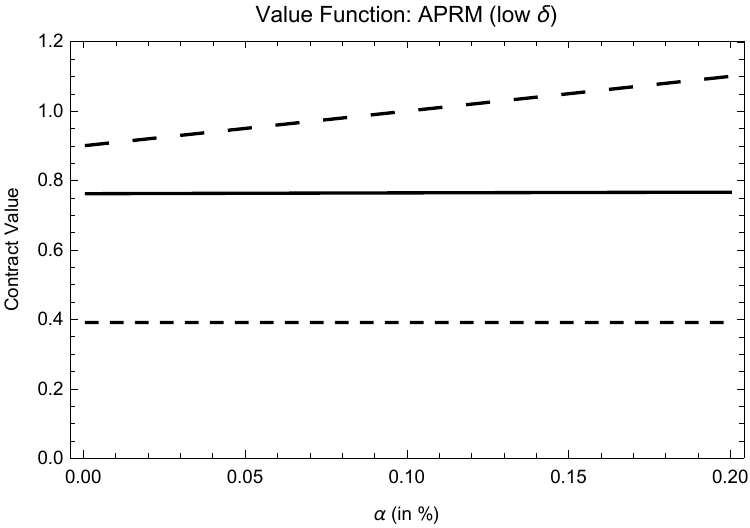}\\
\caption{The map $\alpha \to V^{P}(h;\alpha)$ for $h=0.5$ (thin dash), $h=1$ (solid) and $h=2$ (thick dash) with $\delta = 7\%$ (left) and  $\delta = 4.5\%$ (right). Parameters are in Table \ref{tab:parameters} with $m^P = m_0$ therein.}
\label{F:srm_alpha_maps}
\end{figure}

 We next identify equivalent foreclosure costs for a common contract rate $m_0=m^F=m^A=m^P$. Here, Figure \ref{F:foreclosure_costs} plots maps $H_t\to \phi^A(H_t)$ and $H_t\to \phi^P(H_t)$ at some future time $t>0$ for house prices below the initial $H_0=1$. First, we observe the equivalent foreclosure rates are relatively insensitive to the house price. Second, equivalent foreclosure costs are small. For example, when $\delta=4.5\%$, the ABM becomes more valuable than the FRM once foreclosure costs approach $15\%$ of the home value. The APRM fares worse, with equivalent costs in the $20\%-30\%$ range, but nevertheless both contracts outperform at the observed foreclosure costs of \cite{andersson2014loss, campbell2011forced}. For $\delta=7\%$, foreclosure costs must be higher before the ABM and APRM contracts are competitive. Here, the larger benefit the homeowner obtains through home-ownership induces her to remain in the house. As such, the lower payment rate dominates, reducing the ABM and APRM contract values.  Lastly, the flat regions correspond to where immediate default is the optimal FRM strategy. 


\begin{figure}[!t]
\includegraphics[height=5cm,width=7cm]{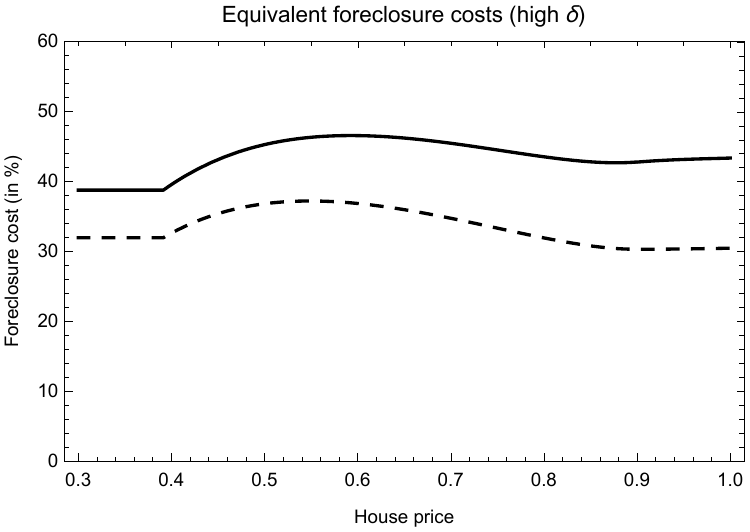}\qquad \includegraphics[height=5cm,width=7cm]{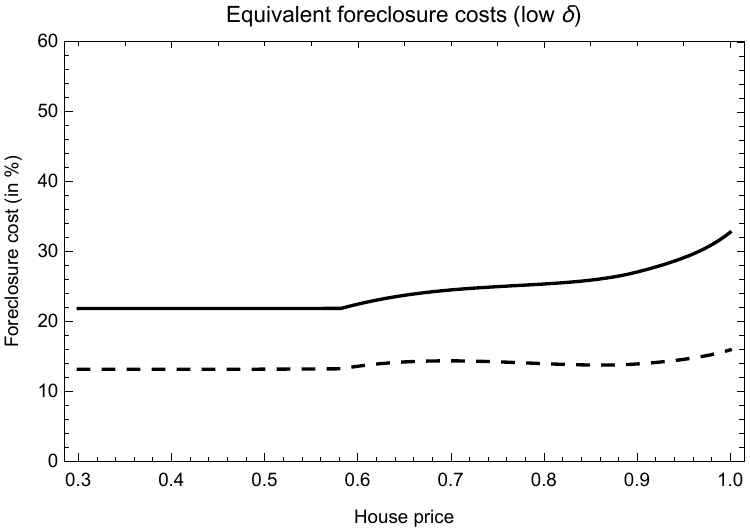}\\
\caption{Equivalent foreclosure costs (in $\%$) for ABM (dashed) and APRM ($\alpha=5\%$) (solid) as a function of the house price with $\delta = 7.0\%$ (left), and $\delta = 4.5\%$ (right). All other parameters are in Table \ref{tab:parameters} with $m^F=m^A=m^P = m_0$ therein.}
\label{F:foreclosure_costs}
\end{figure}

 While providing an interesting viewpoint for the ABM and APRM, identifying equivalent foreclosure costs for a common mortgage rate does not really equate the contract values, as foreclosure costs are both exogenous and fixed for a given locale (though varying widely across the United States).  Therefore, we turn to a second method for equating the contracts: identifying for a fixed foreclosure cost, the mortgage rate spread which equates ABM and APRM values to FRM value. Figure \ref{F:spreads} plots the maps $\phi \to \mathfrak{s}^A(\phi)$ and $\phi\to \mathfrak{s}^P(\phi)$ at $H_0=1$ for the spreads $\mathfrak{s}^A$,  $\mathfrak{s}^P$ from \eqref{E:endog_spread}. For example, in the case of $\delta=4.5\%$, if foreclosure costs are $15\%$ of the home value, then the ABM contract need only offer a spread of $3.6$ basis points before it has the same value as the FRM, while the APRM must offer a higher spread of $70.8$ basis points.  For costs above $16\%$ the ABM can actually offer a lower mortgage rate than the FRM.  For $\delta = 7\%$ the spreads dramatically  increase, with, e.g. at $25\%$ foreclosure costs, the ABM needing to offer $27.7$ basis points and the APRM 101.1 basis points. 

\begin{figure}[!t]
\includegraphics[height=5cm,width=7cm]{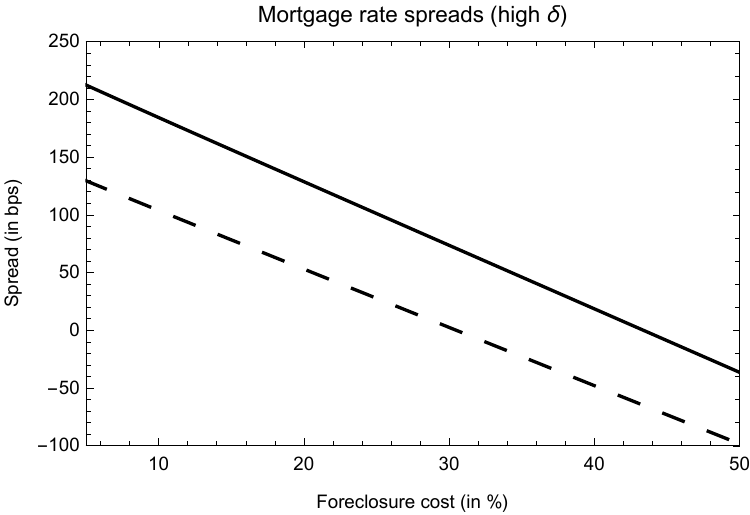}\qquad \includegraphics[height=5cm,width=7cm]{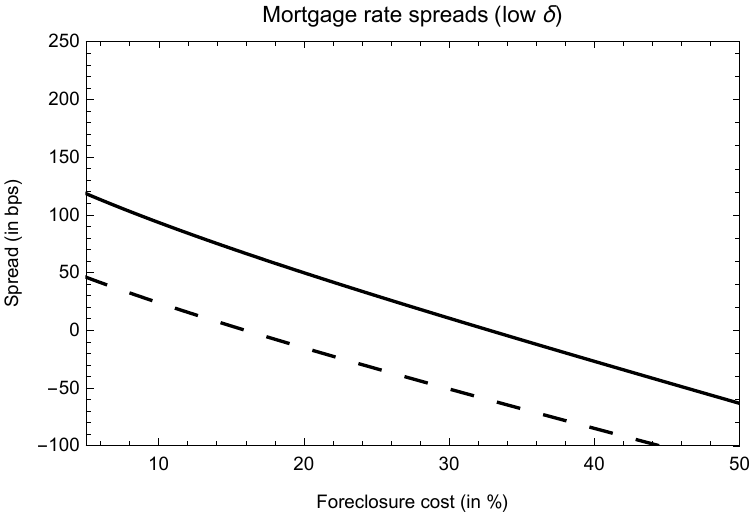}\\
\caption{Endogenous mortgage rate spreads (in basis points) at $H_0=1$, as a function of the foreclosure cost for ABM (dashed) and APRM ($\alpha=5\%$, solid) for $\delta = 7\%$ (left) and $\delta = 4.5\%$ (right). All other parameters are in Table \ref{tab:parameters}.}
\label{F:spreads}
\end{figure}

 We conclude this section by analyzing the value functions  and action boundaries.   In light of the discussion above, we set the ABM and APRM contract rates to ensure that at origination (when $H_0 = 1$) the contracts have the same value as the FRM (with contract rate $m_0 = 3.31\%$) assuming  $35\%$ foreclosure costs (c.f. \cite{hatcher2006foreclosure, campbell2011forced, andersson2014loss}). This gives the contract rates
\begin{equation}\label{E:adj_rates}
\begin{split}
    m^F &= 3.31\%;\qquad m^A(35\%) =  2.63\%;\qquad m^P(35\%) = 3.22\%\qquad\qquad (\delta = 4.5\%)\\
    m^F &= 3.31\%;\qquad m^A(35\%) =  3.08\%;\qquad m^P(35\%) = 3.77\%    \qquad\qquad (\delta = 7\%)
 \end{split}
\end{equation}
Figure \ref{F:vf_alpha05} plots the values of contracts as a function of the house price $H_t$ at some future time $t>0$. The ABM and APRM have the rates above, and the FRM accounts for the $35\%$ foreclosure cost. To interpret this plot, recall that we have normalized the index at $t=0$ so  that $H_0=1$. Here, a nice phenomena occurs: while we have determined $m^A,m^P$ to equate the value functions at $h=1$, the ABM and APRM value functions stay close to one another other across a wide range of $h$, especially in the low $\delta$ default region for the FRM where the ABM, APRM eliminate selective default and hence the $35\%$ loss due to the foreclosure process.


\begin{figure}[!t]
\includegraphics[height=5cm,width=7cm]{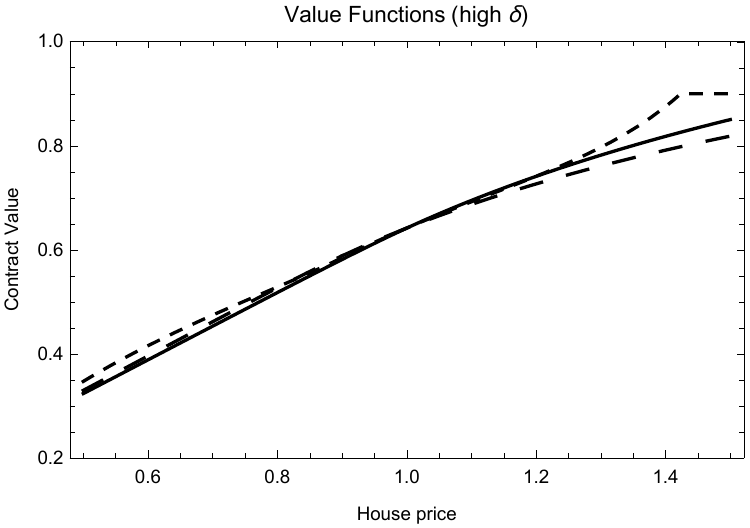} \qquad\includegraphics[height=5cm,width=7cm]{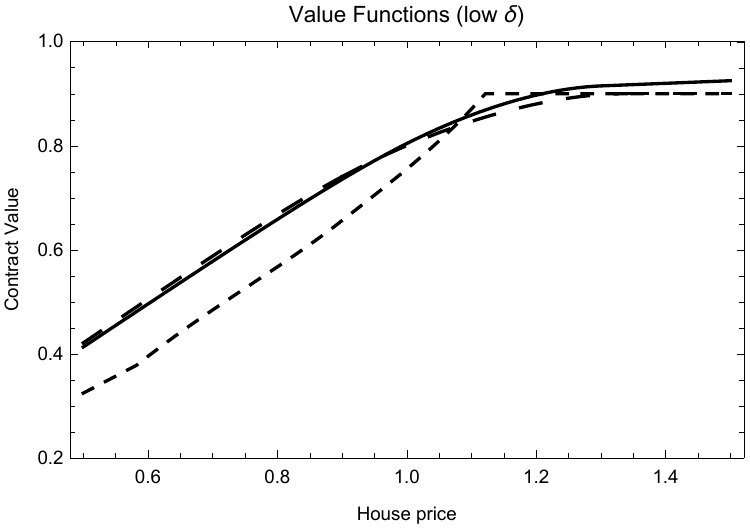}\\
\caption{Value functions for FRM (thin dash, adjusted for foreclosure costs), ABM (thick dash), and APRM ($\alpha=5\%$, solid)  as a function of the  house price $H_t$ at some time $t>0$ for $\delta = 7.0\%$ (left) and $\delta = 4.5\%$ (right). The foreclosure cost is $\phi = 35\%$ and mortgage rates are given in \eqref{E:adj_rates}. All other parameters are given in Table \ref{tab:parameters}.}
\label{F:vf_alpha05}
\end{figure}

 Lastly, we investigate the action boundaries. Here, we vary the FRM rate $m^F$ over $(r,\ol{m}^F)$, where $\ol{m}^F$ is the largest $m^F$ for which immediate prepayment is not the optimal policy at initiation ($t=0, H_0=1$).  The maximum rate $\ol{m}^F$ takes the values $4.86\%$ and $6.16\%$ for $\delta=4.5\%$ and $\delta = 7\%$, respectively.  Then, for each $m^F$ in this range, we identify the associated $m^A$ and $m^p$ which equate the contract values assuming foreclosure costs of $\phi = 35\%$ and we plot the optimal stopping boundaries for all three contracts at their respective rates. Note for the FRM that the lower boundary is a default boundary, while for the ABM and APRM the lower boundary is a low-state prepayment boundary. Lastly, note in Figure \ref{F:bdy_alpha05} there is a large state prepayment boundary $h_3$ for the APRM when $m^F$ high enough to ensure $\alpha = 5\% < \alpha^*$.   However, it does not appear in the pictures because it is on the order of $6 B_0-15B_0$ and for all practical purposes, is irrelevant. 


\begin{figure}[!t]
\includegraphics[height=5cm,width=7cm]{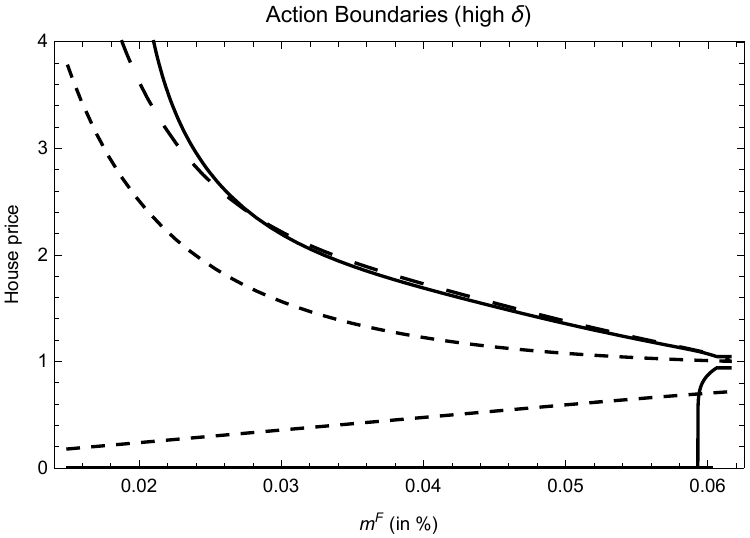} \qquad \includegraphics[height=5cm,width=7cm]{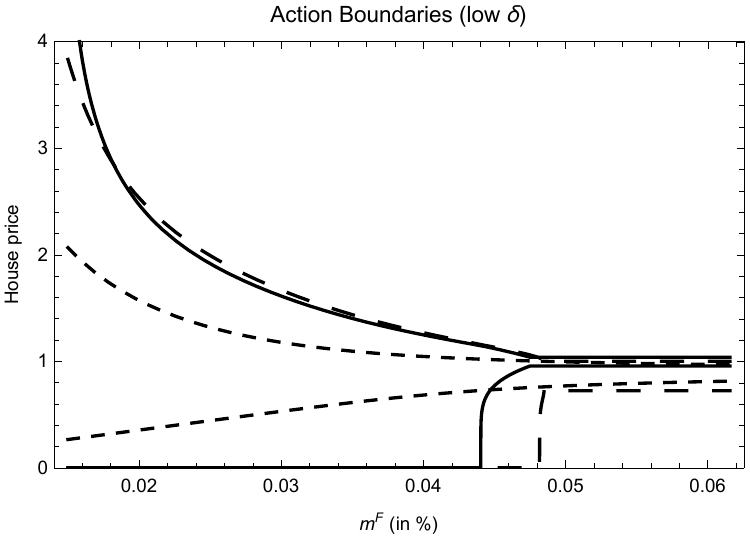}\\
\caption{Action boundaries for FRM (thin dash), ABM (thick dash), APRM  ($\alpha=5\%$, solid) as a function of the FRM mortgage rate $m^F$ for  $\delta = 7\%$ (left) and $\delta = 4.5\%$ (right).  Contract rates $m^A$ and $m^F$ are set to equate contract values at $H_0=1$ and $35\%$ foreclosure costs for a given FRM rate $m^F$. For the FRM, the upper curve is the prepayment boundary and lower curve is the default boundary.  For the ABM and APRM both curves are prepayment boundaries. Continuation in the middle. Parameters are given in Table \ref{tab:parameters}.}
\label{F:bdy_alpha05}
\end{figure}

\section{Finite Horizon Comparison}\label{S:finite_horizon}

In this section, our goal is to (briefly) justify the assumptions of an infinite horizon and constant interest rates, by showing the default boundary is insensitive to the interest rate and remaining maturity, at least near the beginning of the mortgage contract term. We focus on the finite horizon FRM, and to highlight the role of strategic behavior, assume no mortgage turnover. In this setting, recall the house price dynamics in \eqref{E:H_dynamics}, and assume that rather than being constant, the interest rate $r$ follows a CIR process with dynamics
\begin{equation*}
    dr_t = \kappa(\theta-r_t)dt + \xi\sqrt{r_t}dB_t;\qquad d\langle W,B\rangle_t = \rho dt
\end{equation*}
where $|\rho| < 1$ and $\kappa,\theta,\xi>0$ are constants with  $\kappa\theta \geq \xi^2/2$, to ensure $r$ stays strictly positive. Fix a time $t < T$, where $T$ is the mortgage maturity. Provided no default or prepayment by $t$, the FRM value function takes the form $V^F(t,r_t,H_t)$ where (c.f. \eqref{E:fin_horizon_vf})
\begin{equation*}
    V^F(t,r,h) = \underset{\tau\in[t,T]}{\operatorname{inf}}\espalt{}{t,r,h}{\int_t^\tau e^{-\int_t^u r_v dv} c^Fdu + e^{-\int_t^\tau r_v dv}\min\bra{H_{\tau},B^F_{\tau}}}
\end{equation*}
where $B^F$ and $c^F$ are given in \eqref{E:frm_balance} and \eqref{E:frm_coupon} respectively. As $B^F_T = 0$, the associated free boundary PDE is
\begin{equation*}
\begin{split}
\min\bra{V_t + \Lcal V - r V + c^F, \min\bra{h,B^F_t}-V}(t,r,h) = 0;&\qquad 0< t < T, r,h>0\\
V(T,r,h) =0;&\qquad r,h > 0,
\end{split}
\end{equation*}
where $\Lcal$ is the second order operator associated to $(r,H)$, given by
\begin{equation*}
    \Lcal V = \frac{1}{2}\xi^2 r V_{rr} + \rho\xi\sigma\sqrt{r}h V_{rh} + \frac{1}{2}\sigma^2 h^2 V_{hh} + \kappa(\theta-r) V_r + (r-\delta)h V_h,
\end{equation*}
and we solve the PDE backwards using finite differences.  Our interest lies in discovering how the action regions vary jointly with the interest rate and house price.  To this end, Figure \ref{F:cir_finite_horizon} shows the action regions in years $0$ and $5$ of a $30$-year mortgage.  Here, we see the default region (shaded light grey) is, to the naked eye, insensitive to the interest rate (the x-axis plot range spans the $3\%-97\%$ quantiles for the invariant distribution of $r$) and does not drastically change with the remaining horizon. As such, at a very broad level, we expect the conclusions obtained in the perpetual, constant interest rate case transfer over.

\begin{figure}[!t]
\includegraphics[height=5cm,width=7cm]{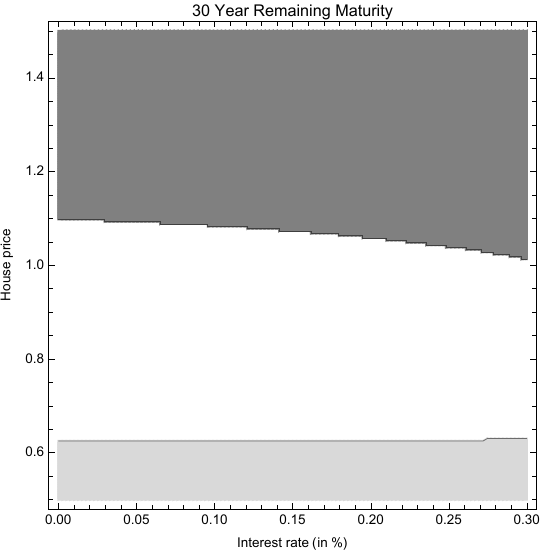} \qquad \includegraphics[height=5cm,width=7cm]{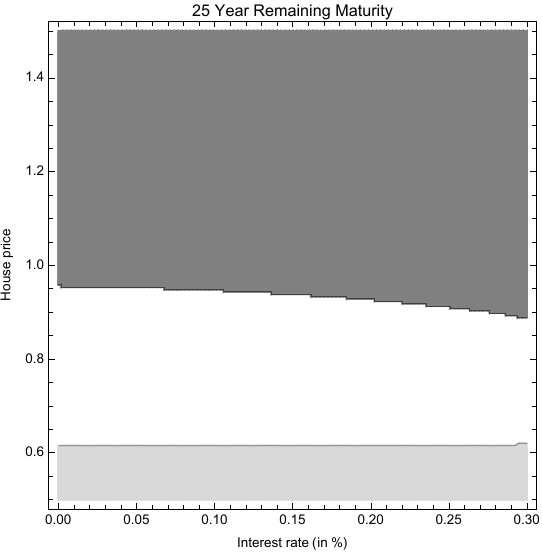}\\
\caption{Action boundaries for FRM as a function of the interest rate (x-axis) and house price (y-axis) in the finite horizon, stochastic interest rate setting.  The dark grey region corresponds to prepayment, and the light grey region to default. The left plot is at time $t=0$ and the right plot at time $t=5$ for a $30$ year mortgage contract.  CIR parameters are $\theta = .1034\%$, $\kappa = 1.5$ and $\xi = .04$. The correlation is $\rho = -0.5$. The house price utility is $\delta = 7\%$. Other parameters are given in Table \ref{tab:parameters}.}
\label{F:cir_finite_horizon}
\end{figure}

\section{\bf Conclusion}\label{S:conclusion}

In this article, we consider recently proposed mortgage contracts that aim to eliminate the negative effects due to underwater default.   In a continuous time model with  diffusive house prices, constant interest rates, and in the limit of a long maturity, we provide explicit solutions for the contracts' values assuming a worst-case approach to valuation.  We show that capital sharing features are ineffective, and to the extent that they make as few adjustments as possible while still ensuring the borrower is never underwater, the contracts become competitive with the traditional fixed rate mortgage at observed foreclosure costs and relatively low spreads.  While low-state prepayments are theoretically possible, provided the borrower receives sufficient utility living in the house (alternatively, that the costs associated with a default, such as rents and downward credit adjustments, are sufficiently high), low-state prepayment does not occur.  For future study, we aim to extend theoretical results to a finite horizon and stochastic interest rates, allow for a richer set of borrower strategic behaviors, and to incorporate basis risk between the observed local house price index value and the (partially) observed ``true'' house value. 

\bibliographystyle{siam}
\bibliography{master}

\newpage
\appendix

\section{\bf Proofs from Section \ref{S:perpetual}}\label{AS:S4_proofs}

\begin{proof}[Proof of Proposition \ref{P:verification}]
Define the local martingale $M_\cdot \dfn \int_0^\cdot e^{-(r+\lambda)u} H_u V_h(H_u)dW_u$.  First, we claim \eqref{E:dotV_integ} implies $M$ is a martingale such that for all $\filt^W$ stopping times $\tau$ we have  $\espalt{h}{}{M_{\tau}}{} = 0$.  Indeed, the Burkholder-Davis-Gundy inequalities imply $\espalt{h}{}{\sup_{t\geq 0} |M_t|} < \infty$ and hence $M$ is closable by the martingale $t\to \condespalt{h}{\sup_{u\geq 0}|M_u|}{t}$. Therefore, the result follows by optional sampling, which now is applicable for unbounded stopping times which may even be infinite with positive probability. Next, Ito's formula (valid for for $V$ under our assumptions: see \cite{MR2167640}) implies for any $\tau$ and integer $n$ that
\begin{equation*}
\begin{split}
e^{-(r+\lambda)\tau\wedge n}V(H_{\tau\wedge n}) &= V(h) + M_{\tau\wedge n} + \int_0^{\tau\wedge n} e^{-(r+\lambda)u}\left(\Lcal-(r+\lambda)\right)V(H_u)du\\
&= V(h) + M_{\tau\wedge n} - \int_0^{\tau\wedge n} e^{-(r+\lambda)u} c(H_u)du\\
&\qquad\qquad + \int_0^{\tau\wedge n} e^{-(r+\lambda)u}\left(c + (\Lcal-(r+\lambda))V\right)(H_u)1_{H_u\in E}du.
\end{split}
\end{equation*}
Under our assumptions, $e^{-(r+\lambda)\tau\wedge n} V(H_{\tau\wedge n})$, $M_{\tau\wedge n}$ and $\int_0^{\tau\wedge n} e^{-(r+\lambda)u}c(H_u)du$ have limits as $n\uparrow\infty$, even on the (potentially empty) set $\cbra{\tau=\infty}$.  This implies we make take $n\uparrow\infty$ above to obtain
\begin{equation*}
\begin{split}
e^{-(r+\lambda)\tau}V(H_{\tau}) &= V(h) + M_{\tau} - \int_0^{\tau} e^{-(r+\lambda)u} c(H_u)du + \int_0^{\tau} e^{-(r+\lambda)u}\left(c + (\Lcal-r)V\right)(H_u)1_{H_u\in E}du.
\end{split}
\end{equation*}
Thus, since $V$ solves \eqref{E:fbp} and $e^{-r\tau} f(H_\tau)$ is well defined, we conclude
\begin{equation*}
\begin{split}
e^{-(r+\lambda)\tau}f(H_{\tau}) + \int_0^{\tau} e^{-(r+\lambda)u} c(H_u)du &\geq e^{-(r+\lambda)\tau}V(H_{\tau}) + \int_0^{\tau} e^{-(r+\lambda)u} c(H_u)du \geq V(h) + M_{\tau}
\end{split}
\end{equation*}
with equality at the  candidate optimal $\tau^*$. The result readily follows by taking expectations.
\end{proof}


\subsection{FRM Proofs}

\begin{proof}[Proof of Proposition \ref{P:FRM_P}]

 We postulate that there are two prepayment boundaries 
$h_1,h_2$ such that $h_1<B_0<h_2$. The value matching and smooth pasting at $h_2$ imply
\begin{equation*}
\begin{split}
   &\wt C_1 h_2^{p_1} + \wt C_2 h_2^{-p_2}+\frac{m+\lambda}{r+\lambda}B_0=B_0\\
  & \wt C_1 p_1 h_2^{p_1-1} - \wt C_2 p_2 h_2^{-p+ 2-1}=0
\end{split}
\end{equation*}
so that
\begin{equation}\label{abm:wtC12n}
\begin{split}
&\wt{C}_{1}= h_{2}^{-p_{1}}B_{0}\frac{p_{2}}{p_1+p_2} \frac{r-m}{r+\lambda} <0\\
&\wt{C}_{2}=h_{2}^{p_{2}}B_{0}\frac{p_{1}}{p_1+p_2} \frac{r-m}{r+\lambda}<0
\end{split}
\end{equation}
as $m>r$.  Now the value matching and smooth pasting at $h_1$ give
\begin{equation*}
\begin{split}
   & C_1 h_1^{p_1} + C_2 h_1^{-p_2}+\frac{m}{r+\lambda}B_0 + \frac{\lambda}{\lambda+\delta}h_1=h_1\\
  &C_1 p_1 h_1^{p_1-1} - C_2 p_2 h_1^{-p_2-1}+\frac{\lambda}{\lambda+\delta}=1
\end{split}
\end{equation*}
so that
\begin{equation}\label{abm:C12n}
\begin{split}
&C_{1}=-\frac{h_1^{-p_1}}{p_{1}+p_{2}}\left(\frac{p_2mB_0}{r+\lambda}-(1+p_2)h_1\frac{\delta}{\delta+\lambda}\right)<0\\
&C_{2}=-\frac{h_1^{p_2}}{p_{1}+p_{2}}\left(\frac{p_1mB_0}{r+\lambda}-(p_1-1)h_1\frac{\delta}{\delta+\lambda}\right)<0.
\end{split}
\end{equation}
 Finally, imposing the  value matching and smooth pasting conditions at $B_0$ we get
\begin{equation*}
\begin{split}
&\widetilde{C}_1 B_{0}^{p_{1}}+\widetilde{C}_{2} B_{0}^{-p_{2}}+\frac{m+\lambda}{r+\lambda}B_0=C_{1} B_{0}^{p_{1}}+C_{2} B_{0}^{-p_{2}}+\frac{m}{r+\lambda}B_0 + \frac{\lambda}{\lambda+\delta}B_0\\
&\widetilde{C}_1 p_1 B_{0}^{p_{1}-1}-\widetilde{C}_{2}p_2 B_{0}^{-p_{2}-1}=C_{1} p_1B_{0}^{p_{1}-1}-C_{2}p_2 B_{0}^{-p_{2}-1}+\frac{\lambda}{\lambda+\delta}
\end{split}
\end{equation*}
that gives
\begin{equation*}
\begin{split}
&\widetilde{C}_{1}=C_1-\frac{p_2\frac{\lambda}{r+\lambda}-(p_2+1)\frac{\lambda}{\lambda+\delta}}{B_0^{p_1-1}(p_1+p_2)}=C_1-\frac{\lambda p_2}{(r+\lambda)(1-p_1)B_0^{p_1-1}(p_1+p_2)}\\
&\widetilde{C}_{2}=C_2-\frac{p_1\frac{\lambda}{r+\lambda}-(p_1-1)\frac{\lambda}{\lambda+\delta}}{B_0^{-p_2-1}(p_1+p_2)}=C_2-\frac{\lambda p_1}{(r+\lambda)(1+p_2)B_0^{-p_2-1}(p_1+p_2)}.
\end{split}
\end{equation*}

Now we set $h_1 = B_0 x$ and $h_2 = B_0 y$. Then we insert ? and ? into two equations above and obtain the system of two algebraic equations for $x$ and $y$
\begin{equation*}
\begin{split}
& \left(\frac{y}{x}\right)^{-p_{1}}=-x\frac{\delta p_1 }{(p_1-1)(m-r)}+\frac{m}{m-r}-x^{p_1}\frac{\lambda}{(p_1-1)(m-r)}\\
&\left(\frac{y}{x}\right)^{p_{2}}=-x\frac{\delta p_2 }{(p_2+1)(m-r)}+\frac{m}{m-r}+x^{-p_2}\frac{\lambda}{(p_2+1)(m-r)}.
\end{split}
\end{equation*}
The goal is to show that this system has unique solution $(x,y)$ with $0<x<1$ and $y>1$. Expressing $y$ in terms of $x$ in both equations yields the equation for $x$
\begin{equation*}
\begin{split}
1 = \chi(x) \dfn& \left(-x\frac{\delta p_1 }{(p_1-1)(m-r)}+\frac{m}{m-r}-x^{p_1}\frac{\lambda}{(p_1-1)(m-r)}\right)^{p_2}\\
&\times\left(-x\frac{\delta p_2 }{(p_2+1)(m-r)}+\frac{m}{m-r}+x^{-p_2}\frac{\lambda}{(p_2+1)(m-r)}\right)^{p_1}.
\end{split}
\end{equation*}
It is clear that $\chi$ is strictly decreasing, and $\chi(0) = \left(\frac{m}{m-r}\right)^{p_1+p_2} > 1$ if $\lambda=0$ and $\chi(0)=+\infty$ if $\lambda>0$. Next, it is easy to see that 
\begin{equation*}
    -x\frac{\delta p_1 }{p_1-1}+m-x^{p_1}\frac{\lambda}{p_1-1}\le -x\frac{\delta p_2 }{p_2+1}+m+x^{-p_2}\frac{\lambda}{p_2+1}
\end{equation*}
for $0<x<1$. Now let us consider two cases: 1) $m\le \frac{\delta p_1 +\lambda}{p_1-1}$; 2) $m>\frac{\delta p_1 +\lambda}{p_1-1}$. In the former case, $\chi(1)<0$ and hence there exists unique $\hat{x}\in(0,1)$ so that $\chi(\hat{x})=1$.  In the latter case, $\chi(1)>0$. Tedious algebra shows that that $\chi(1)<1$, i.e., $1=\chi(x)$ has unique solution on $(0,1)$.

Given that $C_1,C_2,\widetilde{C}_{1},\widetilde{C}_{2}<0$, the candidate solution satisfies all  the hypotheses of Proposition \ref{P:verification}, i.e., finishing the result.
 
\end{proof}


\begin{proof}[Proof of Proposition \ref{P:FRM_P_option}]

We again drop the ``F'' superscripts. Recall the definition of $V^{NoDef}$
\begin{equation*}
V^{NoDef}(h) = \inf_{\tau \geq 0} \espalt{h}{}{\int_0^\tau e^{-(r+\lambda)u}(m+\lambda)B_0du + e^{-(r+\lambda)\tau} B_0}.
\end{equation*}
Note that
\begin{equation*}
\partial_t \left(\int_0^t e^{-(r+\lambda) u} (m+\lambda) B_0 du + e^{-(r+\lambda)t}B_0\right)  = B_0(m-r)e^{-(r+\lambda)t}.
\end{equation*}
Thus, the optimal policy is to immediately prepay as $m>r$. 

\end{proof}


\subsection{ABM Proofs}

\begin{rem}\label{R:turnover_ABM} Upon inspection of \eqref{E:abstract_v}, \eqref{E:perpetual_flow_payoff}, \eqref{E:H_op} we see that for the ABM, the effect of turnover is to increase the interest rate, mortgage rate and utility rate by $\lambda$. Indeed, the value functions including turnover coincide with those excluding turnover, but shifting $r \to r+\lambda$, $\delta \to \delta +\lambda$ and $m^A \to m^A +\lambda$. This perspective is used in the proofs below, where we assume $\lambda = 0$ and $r,\delta,m^A$ have been adjusted.
\end{rem}

\begin{proof}[Proof of Proposition \ref{P:ABM_P}]

Throughout, we drop all ``A'' superscripts. First, for $m\leq \delta$ the value matching and smooth pasting conditions at $B_0$ give
\begin{equation}\label{E:abm_1}
\begin{split}
 C_1 B_{0}^{p_{1}}+m B_{0} / \delta&=\widetilde{C}_1 B_{0}^{p_{1}}+\widetilde{C}_2 B_{0}^{-p_{2}}+m B_{0}/r\\
  C_1 p_1 B_{0}^{p_{1}}+mB_0/ \delta&=\widetilde{C}_1 p_1 B_{0}^{p_{1}}-\widetilde{C}_2 p_2 B_{0}^{-p_{2}}.
\end{split}
\end{equation}
By multiplying the first equation of \eqref{E:abm_1} by $p_1$ and subtracting from it the second equation of \eqref{E:abm_1}  we obtain 
\begin{equation*}
 \widetilde{C}_2=-m B_{0}^{1+p_{2}}\left(\frac{p_{1}}{r}-\frac{p_{1}-1}{\delta}\right)/\left(p_{1}+p_{2}\right)
\end{equation*}
and clearly the right-hand side is negative (recall that $r<m\le \delta$). Hence, $\widetilde{C}_2<0$. 
Now we multiply the first equation \eqref{E:abm_1} by $p_2$ and add to it the second equation of \eqref{E:abm_1}   to get the relationship between $C_1$ and $\widetilde{C}_1$
\begin{equation}\label{E:abm_2}
   \left(p_{1}+p_{2}\right) \widetilde{C}_1=
     \left(p_{1}+p_{2}\right) C_1+m B_{0}^{1-p_{1}}\left(\frac{1+p_{2}}{\delta}-\frac{p_2}{r}\right).
\end{equation}
Now the value matching and smooth pasting conditions at $h_2$ give
\begin{equation}\label{E:abm_3} 
\begin{split}
\widetilde{C}_1 h_2^{p_{1}}+\widetilde{C}_2 h_2^{-p_{2}}+m B_0/r &=B_0\\
\widetilde{C}_1 p_1 h_2^{p_{1}}-\widetilde{C}_2 p_2 h_2^{-p_{2}} &=0.
\end{split}
\end{equation}
Let us apply multiply the first equation \eqref{E:abm_3} by $p_1$ and subtract from it the second equation \eqref{E:abm_3}
\begin{equation*}
   \left(p_{1}+p_{2}\right) \widetilde{C}_2 h_2^{-p_{2}}=
-m B_0p_1\left(\frac{1}{r}-\frac{1}{m}\right)
\end{equation*}
and as we know already $\widetilde{C}_2$ we deduce the optimal prepayment threshold
\begin{equation*}
    h_2 = B_0\left(\frac{1/r-(1-1/p_1)/\delta}
    {1/r-1/m}\right)^{1/p_2}.
\end{equation*}
Using that $m\le \delta$, we get $h_2>B_0$ as needed. Now by multiplying the first equation \eqref{E:abm_3} by $p_2$ and adding to it the second equation of \eqref{E:abm_3}, we obtain
\begin{equation*}
    \widetilde{C}_1=\frac{p_2}{p_1+p_2} mB_0 \left(\frac{1}{r}-\frac{1}{m}\right)
   h_2^{-p_1}
\end{equation*}
which is negative as $m>r$. Having found $\widetilde{C}_1$ we can now determine the last constant $C_1$ using \eqref{E:abm_2}
\begin{equation*}
    C_1=\widetilde{C}_1-m B_0^{1-p_1}\left(\frac{1+p_2}{\delta}-\frac{p_2}{r}\right).
\end{equation*}
As $\widetilde{C}_1<0$ and $p_2\delta<(1+p_2)r$, we deduce that $C_1<0$. Since we showed that the constants $C_1$, $\widetilde{C}_1$, $\widetilde{C}_2$ are each negative, we can conclude the candidate value function is concave $(0,\infty)$. Also note that, $V'(0)=m/\delta\le 1$. 
This shows that $V(h)\le \min(h,B_0)$ and $V$ satisfies all conditions of Proposition \ref{P:verification}.

Second, let us assume $m>\delta$. In this case, we postulate that there are two prepayment boundaries 
$h_1,h_2$ such that $h_1<B_0<h_2$. The value matching and smooth pasting at $h_2$ imply
\begin{equation*}
\begin{split}
   &\wt C_1 h_2^{p_1} + \wt C_2 h_2^{-p2}+mB_0/r=B_0\\
  & \wt C_1 p_1 h_2^{p_1-1} - \wt C_2 p_2 h_2^{-p2-1}=0
\end{split}
\end{equation*}
so that
\begin{equation}\label{abm:wtC12}
\begin{split}
&\wt{C}_{1}=-\frac{p_{2}}{p_1+p_2} m B_{0}\left(1/r-1 / m\right) h_{2}^{-p_{1}}<0\\
&\wt{C}_{2}=-\frac{p_{1}}{p_1+p_2} m B_{0}\left(1/r-1 / m\right) h_{2}^{p_{2}}<0
\end{split}
\end{equation}
as $m>r$.  Now the value matching and smooth pasting at $h_1$ give
\begin{equation*}
\begin{split}
   & C_1 h_1^{p_1} + C_2 h_1^{-p2}+m h_1/\delta=h_1\\
  &C_1 p_1 h_1^{p_1-1} - C_2 p_2 h_1^{-p2-1}+m /\delta=1
\end{split}
\end{equation*}
so that
\begin{equation}\label{abm:C12}
\begin{split}
&C_{1}=-\frac{1+p_{2}}{p_{1}+p_{2}}(m / \delta-1) h_{1}^{1-p_{1}}<0\\
&C_{2}=-\frac{p_{1}-1}{p_{1}+p_{2}}(m / \delta-1) h_{1}^{1+p_{2}}<0
\end{split}
\end{equation}
as $m>\delta$. Finally, imposing the  value matching and smooth pasting conditions at $B_0$ we get
\begin{equation*}
\begin{split}
&\widetilde{C}_1 B_{0}^{p_{1}}+\widetilde{C}_{2} B_{0}^{-p_{2}}+m B_{0} / r=C_{1} B_{0}^{p_{1}}+C_{2} B_{0}^{-p_{2}}+m B_{0} / \delta\\
&\widetilde{C}_1 p_1 B_{0}^{p_{1}-1}-\widetilde{C}_{2}p_2 B_{0}^{-p_{2}-1}=C_{1} p_1B_{0}^{p_{1}-1}-C_{2}p_2 B_{0}^{-p_{2}-1}+m/ \delta
\end{split}
\end{equation*}
that gives
\begin{equation*}
\begin{split}
&\widetilde{C}_{1}=C_{1}+\frac{1+p_{2}}{p_{1}\left(p_{1}+p_{2}\right)} \frac{m B_{0}^{1-p_1}}{\delta}\\
&\widetilde{C}_{2}=C_{2}-\frac{p_{1}-1}{p_{2}\left(p_{1}+p_{2}\right)} \frac{m B_{0}^{1+p_{2}}}{\delta}.
\end{split}
\end{equation*}
Now we insert \eqref{abm:wtC12} and \eqref{abm:C12} into two equations above and obtain the system of two algebraic equations for $h_1$ and $h_2$
\begin{equation*}
\begin{split}
&-\frac{1+p_{2}}{p_{1}+p_{2}}(m / \delta-1) h_{1}^{1-p_{1}}=-\frac{1+p_{2}}{p_{1}+p_{2}}(m / \delta-1) h_{1}^{1-p_{1}}+\frac{1+p_{2}}{p_{1}\left(p_{1}+p_{2}\right)} \frac{m B_{0}^{1-p_1}}{\delta}\\
&-\frac{p_{1}}{p_1+p_2} m B_{0}\left(1/r-1 / m\right) h_{2}^{p_{2}}=-\frac{p_{1}-1}{p_{1}+p_{2}}(m / \delta-1) h_{1}^{1+p_{2}}-\frac{p_{1}-1}{p_{2}\left(p_{1}+p_{2}\right)} \frac{m B_{0}^{1+p_{2}}}{\delta}.
\end{split}
\end{equation*}
Now after change of variables $h_1=xB_0$ and $h_2=yB_0$  this system can be rewritten as
\begin{equation}\label{abm:equation-xy}
\begin{split}
&\frac{-p_{2}}{1+p_{2}}\left(1/r-1 / m\right) y^{-p_{1}}=-\left(1/\delta-1 / m\right) x^{1-p_{1}}+\frac{1}{p_{1} \delta}\\
&-\frac{p_{1}}{p_{1}-1}(1 / r-1 / m) y^{p_{2}}=-(1 / \delta-1 / m) x^{1+p_{2}}-\frac{1}{p_{2} \delta}.
\end{split}
\end{equation}
The goal is to show that there exists unique pair $(x,y)$ that solves the system \eqref{abm:equation-xy} and $0<x<1<y$.
By rewriting the first equation we have
\begin{equation}\label{abm:equation-x}
    x=x(y)=\left(\frac{1 / \delta-1 / m}{\frac{1}{p_1 \delta}+\frac{p_{2}}{1+p_{2}}\left(1/r-1 / m\right) y^{-p_{1}}}\right)^{\frac{1}{p_{1}-1}}.
\end{equation}
We note that 
\begin{equation}\label{abm:condition-x}
x=x(y)<1 \qquad \Leftrightarrow \qquad y^{-p_{1}}>\frac{\left(\frac{p_{1}-1}{p_{1} \delta}-1 / m\right)}{\frac{p_{2}}{1+p_{2}}(1 / r-1 / m)}.
\end{equation}
Next, we insert \eqref{abm:equation-x} into the second equation of \eqref{abm:equation-xy}
\begin{equation*}
0 =\frac{p_{1}}{p_{1}-1}(1 / r-1 / m) y^{p_{2}}-\left(1/\delta-1/m\right) (x(y))^{1+p_{2}}-\frac{1}{p_{2} \delta} =:g(y)
\end{equation*}
and get the equation for $y$. The question is whether there is unique root to $g(y)=0$ such that $y>1$. Lengthy calculations show that $g(1)<0$. Now we differentiate  $g$ and obtain
\begin{equation}\label{abm:deriv-g}
   g'(y)= \frac{p_{1}p_{2}}{p_{1}-1}\left(1/r-1 / m\right) y^{-p_{1}-1}\left(y^{p_{1}+p_{2}}-(x(y))^{p_{1}+p_{2}}\right).
\end{equation}
Here, let us consider two cases: 1) $m\le p_1\delta/(p_1-1)$; 2) $m>p_1\delta/(p_1-1)$. 
In the former case, it is clear from \eqref{abm:condition-x} that $x=x(y)<1$ for all $y>1$. Also in this case, $g'(y)>0$ on $(1,\infty)$ as $x(y)<1<y$, and $g(\infty)=\infty$. Hence, we can conclude that there exists unique $\wh{y}>1$ such that $g(\wh{y})=0$ and $x(\wh{y})<1$, i.e., there exists unique solution $(\wh{x},\wh{y})$ to the system \eqref{abm:equation-xy} as desired. 

Now we turn the case $m>p_1\delta/(p_1-1)$. For this, we define
\begin{equation*}
    \bar y:=\left(\frac{\frac{p_{2}}{1+p_{2}} \left(\frac{1}{r}- \frac{1}{m}\right)}{\frac{p_{1}-1}{p_{1}} \frac{1}{\delta}-\frac{1}{m}}\right)^{1 / p_{1}}
\end{equation*}
so that $x(y)<1$ if and only if $y\in(1,\bar y)$. Then from \eqref{abm:deriv-g} we have that $g'(y)>0$ on $(1,\bar y)$. In addition, tedious algebra shows that $g(\bar y)>0$. This implies that there exists unique $\wh{y}\in(1,\bar y)$ such that $g(\wh{y})=0$ and $x(\wh{y})<1$. 

As value matching and smooth pasting are met, and the constants are negative, we conclude that $V$ is $C^1$, concave and that $V(h)\le \min[B_0,h]$ for $h>0$. Also, simple calculations show that $\Lcal_{H} V - r V + c(h)\ge 0$ for all $h>0$. Hence, we can invoke Proposition \ref{P:verification} to finish the proof.

\end{proof}


\subsection{APRM Proofs}

\begin{rem}\label{R:turnover_APRM} Similarly to the ABM, up to a linear term in $h$,  we may view the APRM with turnover, as an APRM without turnover, but with adjusted $r, \delta, m^P$ as well as $\alpha, B_0$. This is shown in  the following lemma.
\end{rem}

\begin{lem}\label{L:APRM_turnover_adjustment}
Write $V^P(h) = V^P(h; r,\delta,\lambda,\alpha,B_0,m^P)$ as the APRM value function, where $\lambda$ is the turnover intensity.  Then we have 
\begin{equation*}
    V^P(h; r,\delta,\lambda,\alpha,B_0,m^P) = \frac{\alpha\lambda}{\delta+\lambda}h + V^P\left(h; \wt{r},\wt{\delta},0,\wt{\alpha},\wt{B}_0,\wt{m}^P\right)
\end{equation*}
where $\wt{r} =  r+\lambda$, $\wt{\delta} = \delta + \lambda$, $\wt{\alpha} = \alpha\delta/(\delta+\lambda)$, $\wt{B}_0 =  B_0 - \alpha\lambda/(\lambda+\delta)$ and 
\begin{equation*}
    \wt{m}^P = (m^P+\lambda)\frac{B_0 - \frac{\alpha\lambda}{m^P+\lambda}}{B_0 - \frac{\alpha\lambda}{\delta+\lambda}}.
\end{equation*}
Additionally, the optimal stopping times for each value function coincide.
\end{lem}

\begin{proof}[Proof of Lemma \ref{L:APRM_turnover_adjustment}]
The lemma follows from three facts. First, $(x-1)^+ = x - \min\bra{1,x}, x>0$. Second, for any $t\geq 0$ 
\begin{equation*}
    e^{-(r+\lambda) t}H_t = h - (\lambda+\delta)\int_0^{t} H_u e^{-(r+\lambda)u}du + \sigma\int_0^t e^{-(r+\lambda)u} H_u dB_u.
\end{equation*}
Third, if we denote by $M$ the martingale on the right hand side above, then
\begin{equation*}
    -h \leq M_t \leq e^{-(r+\lambda) t}H_t + (\lambda+\delta)\int_0^{\infty}H_u e^{-(r+\lambda)u}du,
\end{equation*}
and $\cbra{e^{-(r+\lambda) t}H_t}_{t}$ is bounded in $L^{1+\eps}$ for $\eps>0$ small enough.  This allows us conclude that $\espalt{}{h}{M_{\tau}} = 0$ even for unbounded (possibly infinite) stopping times.   With these facts, the result follows by direct calculations.
\end{proof}

\begin{rem}\label{R:aprm_turnover_parameter_restrictions}

Recall our standing assumption that $m^P>r$.  For the adjustments in Lemma \ref{L:APRM_turnover_adjustment} to ensure $\wt{m}^{P} > \wt{r}$ we further require that
\begin{equation*}
    \alpha\lambda(\delta-r) < \frac{(m^P-r)B_0}{\delta+\lambda}.
\end{equation*}
This will always hold if $r\geq \delta$ or $m\geq \delta$. When $r < m < \delta$ this restriction is very mild;  for typical parameter values, it will hold provided $m$ exceeds $r$ by just a few basis points. Next, straightforward calculations show $\wt{\alpha} < \wt{B}_0 \Leftrightarrow \alpha < B_0$ and $\wt{m}\leq \wt{\delta} \Leftrightarrow m \leq \delta$. Lastly, recall from Remark \ref{R:alpha_star} that $\wt{\alpha} > p_2\wt{B}_0(\wt{m}^P/\wt{r}-1)$ ruled out a high state-prepayment region. Translating back to the original parameters we find
\begin{equation*}
    \begin{split}
    \wt{\alpha}>  p_2\wt{B}_0\left(\frac{\wt{m}^P}{\wt{r}}-1\right) \quad &\Longleftrightarrow \quad \frac{\alpha}{\lambda+\delta}\left((r+\lambda)\delta + p_2\lambda(\delta-r)\right) > p_2(m-r)B_0.
    \end{split}
\end{equation*}
Thus, when $(r+\lambda)\delta + p_2(\delta-r)\leq 0$, no matter how large $\alpha$ is one cannot rule out a high state prepayment region. This effect is not present absent turnover. 

\end{rem}

For the remainder of this section we assume $\lambda = 0$ and the parameters have been suitably adjusted.  To set notation,  we first amend Remark \ref{R:alpha_star}.

\begin{rem}\label{R:alpha_star_new}
Let us come back to the assumption that  $(1,\infty)$ is in the continuation region, so $V^P(h) = \wt{C}_2 h^{-p_2} + m^PB_0/r$ for $h>1$, where $\wt{C}_2<0$. The minimal difference in \eqref{E:min_diff_f} is $g(\alpha,-\wt{C}_2)$ where
\begin{equation}\label{E:g_def}
\begin{split}
g(\alpha;\beta) &\dfn (1+p_2)\left(\frac{\alpha}{p_2}\right)^{\frac{p_2}{1+p_2}}\beta^{\frac{1}{1+p_2}} - \alpha - B_0\left(\frac{m^P}{r}-1\right),\qquad \alpha,\beta > 0.
\end{split}
\end{equation}
$g(\alpha,\beta = -\wt{C}_2)$ must be non-negative. By considering $h\approx 1$ we require $\beta > B_0(m^P/r-1)$, and recall that we already know there is no high-state prepayment region when $\alpha \geq p_2B_0(m^P/r-1)$. Therefore, let us express $\alpha,\beta$ in terms of $B_0(m^P/r-1)$ by writing $\alpha = \upsilon \times p_2 B_0(m^P/r-1)$ and $\beta = \qtau \times B_0(m^P/r-1)$ for $0<\upsilon < 1, \qtau > 1$. Calculation then shows
\begin{equation*}
g(\alpha,\beta) = \wt{g}(\upsilon,\qtau) \dfn B_0\left(\frac{m^P}{r}-1\right)\left( (1+p_2)\qtau^{\frac{1}{1+p_2}} \upsilon^{\frac{p_2}{1+p_2}} - (1+p_2\upsilon)\right).
\end{equation*}
As $\qtau>1$ it is clear $\wt{g}(0,\qtau) < 0$, $\wt{g}(1,\qtau) > 0$ and $\upsilon \to \wt{g}(\upsilon,\qtau)$ is strictly increasing.  This gives, for each $\qtau>1$, a unique $\upsilon^* \in (0,1)$ (hence $\alpha^* \in (0,p_2B_0(m^p/r-1))$ solving $g(\upsilon^*,\qtau) = 0$.

Consider when $m<\delta$. The corresponding $\beta_1$ is
\begin{equation}\label{E:m_leq_delta_beta}
\beta_{1} \dfn -\wt{C}_2 = \frac{p_1-1}{p_2(p_1+p_2)}\frac{m^PB_0}{\delta} > B_0\left(\frac{m^P}{r}-1\right),
\end{equation}
where the last inequality follows using \eqref{E:p1_p2_ident}, as well as $m < \delta$.  According to the above, we may define
\begin{equation*}
\alpha^*\dfn \textrm{ the unique root of } g(\alpha,\beta_1) \textrm{ in } \left(0, p_2 B_0\left(\frac{m^P}{r}-1\right)\right) \subset \left(0,B_0\right),
\end{equation*}
where the last set inclusion follows from \eqref{E:mstar_first_use} and $m<\delta$.

Next, consider when $\delta < m^P < m^*$. The corresponding $\beta_2$ is now
\begin{equation}\label{E:m_geq_delta_beta}
\beta_{2} \dfn -\wt{C}_2 = \frac{p_1-1}{p_2(p_1+p_2)}\frac{m^P B_0}{\delta}\left(1 + p_2 p_1^{\frac{1+p_2}{p_1-1}}\left(1-\frac{\delta}{m^P}\right)^{\frac{p_1+p_2}{p_1-1}}\right).
\end{equation}
Here, using that $\delta < m^P < m^*$, a lengthy calculation shows $\beta_2 > B_0(m^P/r-1)$. As such we may define
\begin{equation*}
\alpha^*\dfn \textrm{ the unique root of } g(\alpha,\beta_2) \textrm{ in } \left(0, p_2 B_0\left(\frac{m^P}{r}-1\right)\right) \subset \left(0,B_0\right),
\end{equation*}
where the last set inclusion follows from \eqref{E:mstar_first_use} and $m<m^*$. 
\end{rem}

\begin{proof}[Proof of Proposition \ref{P:APRM_P1}]
Throughout, we drop all ``P'' superscripts. We also assume $m>r$, recall $p_1,p_2$ from \eqref{E:p1_p2} and the identity \eqref{E:p1_p2_ident}. First, consider when $\alpha < \alpha^*$.  Value matching and smooth pasting at $h_3$ yield
\begin{equation}\label{E:h2_def}
h_3 = \frac{p_2}{1+p_2}\left(\frac{mB_0}{\alpha}\left(\rcpdiff{r}{m}\right) + 1\right);\qquad \check{C}_2 = -\frac{\alpha}{p_2}h_3^{1+p_2} < 0.
\end{equation}
Note that $h_3 > 1$ since $\alpha < \alpha^* < p_2B_0 (m/r - 1)$. At $h_2$, value matching/smooth pasting give
\begin{equation}\label{E:eq1}
\begin{split}
\wt{C}_1 &= \frac{1+p_2}{p_1+p_2}\alpha h_2^{1-p_1} - \frac{p_2\alpha}{p_1+p_2}\left(\frac{mB_0}{\alpha}\left(\rcpdiff{r}{m}\right) + 1\right)h_2^{-p_1},\\
\wt{C}_2 &= \frac{p_1-1}{p_1+p_2}\alpha h_2^{1+p_2} - \frac{p_1\alpha}{p_1+p_2}\left(\frac{mB_0}{\alpha}\left(\rcpdiff{r}{m}\right) + 1\right)h_2^{p_2}.
\end{split}
\end{equation}
Similarly, value matching/smooth pasting at $1$, along with \eqref{E:p1_p2_ident}, give
\begin{equation}\label{E:eq2}
\wt{C}_1 = C_1 + \frac{(1+p_2)}{p_1(p_1+p_2)}\frac{mB_0}{\delta};\qquad \wt{C}_2 = -\frac{(p_1-1)}{p_2(p_1+p_2)}\frac{mB_0}{\delta}.
\end{equation}
Matching the equations for $\wt{C}_2$ in \eqref{E:eq1}, \eqref{E:eq2} we seek $h_2$ such that
\begin{equation*}
-\frac{(p_1-1)}{p_2}\frac{mB_0}{\delta} = (p_1-1)\alpha h_2^{1+p_2} - p_1\alpha\left(\frac{mB_0}{\alpha}\left(\rcpdiff{r}{m}\right) + 1\right)h_2^{p_2}
\end{equation*}
and $1 < h_2 < h_3$. Using the formula for $h_3$ and simplifying we are left to find a zero of
\begin{equation*}
\chi(h) \dfn  h^{p_2}\left(\frac{p_1}{p_1-1}\frac{1+p_2}{p_2} h_3 - h\right) - \frac{mB_0}{p_2\alpha\delta},
\end{equation*}
in $1 < h_2 < h_3$. It is easy to see $\chi$ is increasing for $1 < h < h_3$ and calculation shows (recall $g$ from \eqref{E:g_def} and $\beta_1$ from \eqref{E:m_leq_delta_beta}) 
\begin{equation*}
\begin{split}
\chi(1) < 0 &\Longleftrightarrow \alpha < mB_0\left(\rcpdiffb{p_1}{m}{p_1-1}{\delta}\right);\qquad \chi(h_3) > 0 \Longleftrightarrow g(\alpha;\beta_1) < 0.
\end{split}
\end{equation*}
Now, $g(0,\beta_1) < 0$, $\alpha < \alpha^*$ and $g(\alpha^*,\beta_1) = 0$ so we know $g(\alpha;\beta_1) < 0$ and hence $\chi(h_3) > 0$.  As for $\chi(1)$, $m\leq \delta$ implies (using \eqref{E:p1_p2_ident})
\begin{equation*}
\rcpdiffb{p_1}{m}{p_1-1}{\delta} - p_2\left(\rcpdiff{r}{m}\right) = \frac{p_1+p_2}{m} - \frac{p_1-1}{\delta}\left(1 + \frac{1+p_2}{p_1}\right) \geq \frac{1+p_2}{p_1\delta} > 0.
\end{equation*}
Therefore, if $\alpha < p_2mB_0(1/r-1/m)$ then $\alpha < mB_0((p_1/m)-(p_1-1)/\delta)$, and there exists a unique $h_2 \in (1,h_3)$ satisfying $\chi(h_2) = 0$. At this point, we have identified $1 < h_2 < h_3$ and shown $\check{C}_2 < 0$, $\wt{C}_2 < 0$.  The last thing to do is show $C_1 < 0$, $\wt{C}_1 < 0$.  First, from \eqref{E:h2_def} and \eqref{E:eq1} we have
\begin{equation*}
\wt{C}_1 = -\frac{1+p_2}{p_1+p_2}\alpha h_1^{-p_1}\left(h_3 - h_2\right) < 0.
\end{equation*}
Next, using \eqref{E:eq2} we also see $C_1 < 0$.

As value matching and smooth pasting are met, and the constants are negative, we conclude that $V$ is $C^1$ and concave.  As $0 \leq V'(h)\rightarrow 0$ for $h\rightarrow \infty$ we know that $V$ is non-decreasing, and
\begin{equation}\label{E:deriv_bound}
0 \leq V'(h) \leq \textrm{Constant}\times \left(1_{h\leq h_3} + h^{-p_2-1}1_{h>h_3}\right) \leq \textrm{Constant}\times \left(1_{h\leq h_3} + h^{-1}1_{h>h_3}\right)
\end{equation}
where the last equality holds as $h_3 > 1$. From here, it is easy to see that for all $h>0$ that \eqref{E:dotV_integ} holds. Therefore, we may invoke Proposition \ref{P:verification} provided
\begin{enumerate}[(a)]
\item $V(h) \leq B_0\min\bra{1,h} + \alpha(h-1)^+$;
\item $mB_0 - \alpha\delta h_2 - r(B_0-\alpha) \geq 0$.
\end{enumerate}
To show (a), note that $V'(0) = mB_0/\delta < B_0$. Thus by concavity, $V(h)\leq B_0h$ on $[0,1]$.  As $V(h) = B_0 + \alpha(h-1)$ on $[h_2,h_3]$ by concavity we know $V(h)\leq B_0 + \alpha(h-1)$ on $[h_3,\infty)$.  As for the interval $(1,h_2)$, if there were a point $h_0 \in (1,h_2)$ with $V(h_0) = B_0 + \alpha(h_0 - 1)$ then since $V_h > \alpha$ on $(h_0,h_2)$ we cannot have $V(h_2) = B_0 + \alpha(h_2 - 1)$, proving (a).  As for (b), we note by \eqref{E:p1_p2_ident} and \eqref{E:h2_def}
\begin{equation*}
h_3 = \frac{p_2}{\alpha r(1+p_2)}\left(B_0(m-r) + \alpha r\right) = \frac{p_1-1}{\alpha\delta p_1}\left(B_0(m-r) + \alpha r\right).
\end{equation*}
From here it is clear $B_0(m-r) + \alpha r > \alpha\delta h_3$. This gives the result for $\alpha < \alpha^*$.

We next consider case $\alpha\geq \alpha^*$.  Value matching and smooth pasting at $1$, along with \eqref{E:p1_p2_ident} and \eqref{E:m_leq_delta_beta} give
\begin{equation}\label{E:K1wtK2}
K_1 = -\frac{1+p_2}{p_1(p_1+p_2)}\frac{mB_0}{\delta} < 0\qquad \wt{K}_2 = -\frac{p_1-1}{p_2(p_1+p_2)}\frac{mB_0}{\delta} = -\beta_1 < 0.
\end{equation}
This ensures $V$ is $C^1$, concave, and since $0\leq \dot{V}(h) \rightarrow 0$ as $h\rightarrow 0$ $V$ is also non-decreasing, with derivative which also satisfies \eqref{E:deriv_bound} (with potentially different constant) and hence \eqref{E:dotV_integ}.  Therefore, since $(\mathcal{L}-r)V + c = 0$ on the entire region, we may invoke Proposition \ref{P:verification} provided $V(h) \leq B_0\min\bra{1,h} + \alpha(h-1)^+$. Concavity implies $V(h) \leq B_0 h$ on $(0,1]$ since $\dot{V}(0) = mB_0/\delta \leq B_0$. As for $h>1$ consider the function (c.f. Remarks \ref{R:alpha_star}, \ref{R:alpha_star_new})
\begin{equation*}
\chi(h) \dfn B_0 + \alpha(h-1) - V(h) = B_0 + \alpha(h-1) - \wt{K}_2 h^{-p_2} - \frac{mB_0}{r}.
\end{equation*}
We know $\chi(1) > 0$, and $\chi$ is strictly convex with derivative $\chi_h(h) = \alpha + p_2\wt{K}_2 h^{-1-p_2}$. Therefore, if $\alpha \geq -p_2\wt{K}_2 > 0$ then $\dot{\chi}(1)\geq 0$ and by convexity, $\chi>0$ on $(1,\infty)$.  Else the unique minimum of $\chi$ occurs at $h_0 = \left(-\alpha/(p_2\wt{K}_2)\right)^{-1/(1+p_2)} > 1$. Plugging $h_0$ into $\chi$, using \eqref{E:K1wtK2}, and simplifying shows that $\chi(h_0) = g(\alpha;\beta_1)$.  Now, by assumption we have $\alpha^* \leq \alpha < -p_2\wt{K}_2$ and that $\alpha^*$ is the unique $0$ of $g(\cdot;\beta_1)$ on $(0,p_2mB_0(1/r-1/m))$. Unfortunately, calculation shows $p_2mB_0(1/r-1/m) < -p_2\wt{K}_2$ so we do not immediately know (recall $g(0,\beta_1) < 0$) that $g(\alpha,\beta_1) > 0$. However, from \eqref{E:g_def}, \eqref{E:K1wtK2} we see
\begin{equation*}
g(\alpha,\beta_1) = \frac{1+p_2}{p_2}\left(-p_2\wt{K}_2\right)^{\frac{1}{1+p_2}} \alpha^{\frac{p_2}{1+p_2}} - \alpha - mB_0\left(\rcpdiff{r}{m}\right)
\end{equation*}
and hence $g'(\alpha) > 0$ on $(0,-p_2\wt{K_2})$. Since $g(0)<0$ and
\begin{equation*}
g(-p_2\wt{K}_2) = -\wt{K}_2 - mB_0\left(\rcpdiff{r}{m}\right) > 0
\end{equation*}
there is a unique $\wh{\alpha}$ on $(0,-p_2\wt{K}_2)$ with $g(\wh{\alpha}) = 0$. Thus, by our assumption of a unique $0$ over the smaller $(0,p_2mB_0(1/r-1/m))$ it must be that $\wh{\alpha} = \alpha^*$ and hence as $\alpha \geq \alpha^*$ it follows that $g(\alpha) \geq 0$ and hence $V(h) \leq B_0 + \alpha(h-1)$ on $(1,\infty)$, giving the verification result for $\alpha \geq \alpha^*$ and finishing the proof.

\end{proof}


\begin{proof}[Proof of Proposition \ref{P:APRM_P2}] We drop the ``P'' superscripts, assume $m>r$, recall $p_1,p_2$ from \eqref{E:p1_p2} and heavily use \eqref{E:p1_p2_ident}. This proof is significantly more involved than that of Proposition  \ref{P:APRM_P1}, and we will start with the $\alpha > \alpha^*$ case. Value matching and smooth pasting at $1$, along with \eqref{E:p1_p2_ident} give
\begin{equation}\label{E:mid_m_1}
K_1 = -\frac{1+p_2}{p_1(p_1+p_2)}\frac{mB_0}{\delta}< 0;\qquad \wt{K_2}-K_2 = -\frac{p_1-1}{p_2(p_1+p_2)}\frac{mB_0}{\delta}.
\end{equation}
Value matching and smooth pasting at $h_1$ give
\begin{equation}\label{E:mid_m_2}
K_1 = -\frac{1+p_2}{p_1+p_2}\frac{mB_0}{\delta} h_1^{1-p_1}\left(1-\delmf\right);\qquad K_2 = -\frac{p_1-1}{p_1+p_2}\frac{mB_0}{\delta} h_1^{1+p_2}\left(1-\delmf\right).
\end{equation}
Using the equations for $K_1$ in \eqref{E:mid_m_1} and \eqref{E:mid_m_2} we deduce
\begin{equation*}
h_1 = \left(p_1\left(1-\delmf\right)\right)^{\frac{1}{p_1-1}}.
\end{equation*}
Note that $\delta < m < p_1\delta/(p_1-1)$ implies $0 < h_1 < 1$.  Using this expression for $h_1$ along with \eqref{E:mid_m_1}, \eqref{E:mid_m_2} we conclude (recall \eqref{E:m_geq_delta_beta})
\begin{equation*}
\begin{split}
K_2 &= -\frac{p_1-1}{p_1+p_2}\frac{mB_0}{\delta} p_1^{\frac{1+p_2}{p_1-1}}\left(1-\delmf\right)^{\frac{p_1+p_2}{p_1-1}} < 0\\
\wt{K}_2 &= -\frac{p_1-1}{p_1+p_2}\frac{mB_0}{\delta}\left(\frac{1}{p_2} + p_1^{\frac{1+p_2}{p_1-1}}\left(1-\delmf\right)^{\frac{p_1+p_2}{p_1-1}}\right) = -\beta_2 < 0.
\end{split}
\end{equation*}
This shows that $V$ is $C^1$, concave, increasing with $V\leq B_0h$ on $[0,1]$ and with \eqref{E:dotV_integ} holding.  Thus, Proposition \ref{P:verification} will yield the result, provided 
\begin{enumerate}[(a)]
\item $V\leq B_0 + \alpha(h-1)$ on $(1,\infty)$;
\item $(\mathcal{L}-r)V + mB_0h$ on $[0,h_1]$.  
\end{enumerate}

To show (a), (c.f. Remarks \ref{R:alpha_star}, \ref{R:alpha_star_new}) define the strictly convex function on $(1,\infty)$
\begin{equation*}
\chi(h) \dfn B_0 + \alpha(h-1) - \wt{K}_2 h^{-p_2} - \frac{mB_0}{r} = B_0 + \alpha(h-1) + \beta_2 h^{-p_2} - \frac{mB_0}{r}.
\end{equation*}
As $\chi'(h) = \alpha - p_2\beta_2$, if $\alpha \geq p_2\beta_2$ then $\chi(1)\geq 0$ implies $\chi\geq 0$ on $(1,\infty)$.  Else (recall $\alpha^* \leq p_2mB_0(1/r-1/m) < p_2 \beta_2$) we have $\alpha^* < \alpha < p_2\beta_2$ and $\chi$ is minimized at $h_0 = (\alpha/(p_2\beta_2))^{-1/(1+p_2)}$.  Plugging this value in $\chi$, calculation shows $\chi(h_0) = g(\alpha,\beta_2)$.  But, as shown in the proof of Proposition  \ref{P:APRM_P1}, $g(\cdot;\beta_2)$ is increasing on $(0,p_2\beta_2)$.  Thus, $\alpha>\alpha^*$ implies that on $(1,\infty)$, $\chi(h) \geq \chi(h_0) = g(\alpha,\beta_2) > 0$. To show b), note that on $[0,b_1]$, $(\mathcal{L}-r)V+mB_0 = (m-\delta)B_0 h \geq 0$. This finishes the proof when $\alpha > \alpha^*$.

We next turn to $\alpha \leq \alpha^*$.  Value matching and smooth pasting at $h_3$ imply
\begin{equation*}
h_3 = \frac{p_2}{1+p_2}\left(\frac{mB_0}{\alpha}\left(\rcpdiff{r}{m}\right)+1\right),
\end{equation*}
and $h_3 > 1$ since $\alpha \leq \alpha^* < p_2mB_0(1/r-1/m)$. Furthermore $\check{C}_2 = -(\alpha/p_2) h_3^{1+p_2} < 0$. Next, value matching and smooth pasting at $h_2$ gives
\begin{equation}\label{E:mid_m_h2_AB}
\begin{split}
\wt{C}_1 &= \frac{1+p_2}{p_1+p_2}\alpha h_2^{1-p_1}-\frac{p_2}{p_1+p_2}\alpha h_2^{-p_1}\left(\frac{mB_0}{\alpha}\left(\rcpdiff{r}{m}\right) + 1\right)\\
\wt{C}_2 &= \frac{p_1-1}{p_1+p_2}\alpha h_2^{1+p_2}-\frac{p_1}{p_1+p_2}\alpha h_2^{p_2}\left(\frac{mB_0}{\alpha}\left(\rcpdiff{r}{m}\right) + 1\right)
\end{split}
\end{equation}
Continuing, value matching and smooth pasting at $1$ along with \eqref{E:p1_p2_ident} give
\begin{equation}\label{E:mid_m_h2_AB2}
\begin{split}
\wt{C}_1 &= C_1 + \frac{1+p_2}{p_1(p_1+p_2)}\frac{mB_0}{\delta};\qquad \wt{C}_2 = C_2 - \frac{p_1-1}{p_2(p_1+p_2)}\frac{mB_0}{\delta}.
\end{split}
\end{equation}
Lastly, value matching and smooth pasting at $h_1$ give
\begin{equation}\label{E:mid_m_h1_AB}
\begin{split}
C_1 &= -\frac{1+p_2}{p_1+p_2}\frac{mB_0}{\delta} h_1^{1-p_1}\left(1-\delmf\right) < 0;\qquad C_2 = -\frac{p_1-1}{p_1+p_2}\frac{mB_0}{\delta} h_1^{1+p_2}\left(1-\delmf\right) < 0.
\end{split}
\end{equation}
Using the equations for $\wt{C}_1$ in \eqref{E:mid_m_h2_AB}, \eqref{E:mid_m_h2_AB2}, and plugging in for $C_1$ from \eqref{E:mid_m_h1_AB} we obtain after some simplifications
\begin{equation}\label{E:mid_m_h1_eqn1}
\frac{mB_0}{\alpha\delta}\left(h_1^{1-p_1}p_1\left(1-\delmf\right) - 1\right) = p_1h_2^{-p_1}\left(h_3-h_2\right).
\end{equation}
Solving for $h_1$ yields
\begin{equation}\label{E:mid_m_h1_eqn2}
h_1 = \left(\frac{p_1\left(1-\delmf\right)}{1+\frac{\alpha\delta}{mB_0}p_1 h_2^{-p_1}(h_3-h_2)}\right)^{\frac{1}{p_1-1}}.
\end{equation}
Note that $\delta < m < p_1\delta/(1-p_1)$, $p_1 > 1$ and $h_2 < h_3$ (as we will show) implies $0 < h_1 < 1$. Next, using the equations for $\wt{C}_2$ in \eqref{E:mid_m_h2_AB}, \eqref{E:mid_m_h2_AB2},  and plugging in for $C_1$ from \eqref{E:mid_m_h1_AB} we obtain after some simplifications
\begin{equation}\label{E:mid_m_h2_eqn1}
\frac{mB_0}{\alpha\delta}\left(h_1^{1+p_2}p_2\left(1-\delmf\right)+1\right) = p_2h_2^{p_2}\left(\frac{p_1(1+p_2)}{(p_1-1)p_2}h_3-h_2\right).
\end{equation}
Plugging in for $h_1$ from \eqref{E:mid_m_h1_eqn2} yields the main equation
\begin{equation}\label{E:mid_m_h2_eqn2}
\frac{mB_0}{\alpha\delta}\left(p_2\left(1-\delmf\right)\left(\frac{p_1\left(1-\delmf\right)}{1+\frac{\alpha\delta}{mB_0}p_1 h_2^{-p_1}(h_3-h_2)}\right)^{\frac{1+p_2}{p_1-1}} +1\right) = p_2h_2^{p_2}\left(\frac{p_1(1+p_2)}{(p_1-1)p_2}h_3-h_2\right),
\end{equation}
and our goal is to show there is a unique solution $h_2$ lying in $(1,h_3)$ for all $0 < \alpha < \alpha^*$. After dividing by $p_2$, define the function
\begin{equation}\label{E:mid_m_chi_def}
\begin{split}
\chi(h) \dfn& h^{p_2}\left(h -\frac{p_1(1+p_2)}{(p_1-1)p_2}h_3\right) + \frac{mB_0}{\alpha\delta}\left(\left(1-\delmf\right)\left(\frac{p_1\left(1-\delmf\right)}{1+\frac{\alpha\delta}{mB_0}p_1 h^{-p_1}(h_3-h)}\right)^{\frac{1+p_2}{p_1-1}} +\frac{1}{p_2}\right)\\
=& h^{1+p_2} - \frac{p_1}{p_1-1}\left(\frac{mB_0}{\alpha}\left(\rcpdiff{r}{m}\right)+1\right)h^{p_2}  \\
&+ \frac{mB_0}{\alpha\delta}\left(\ \frac{1}{p_2} + \left(1-\delmf\right)\left(\frac{ p_1 \left(1-\delmf\right)}{1+ \frac{\alpha\delta}{mB_0} p_1 h^{-p_1}\left(\frac{p_2}{1+p_2}\left(\frac{mB_0}{\alpha}\left(\rcpdiff{r}{m}\right)+1\right)- h\right)}\right)^{\frac{1+p_2}{p_1-1}} \right)
\end{split}
\end{equation}
Here, we have defined $\chi$ using both $h_3$ and its formula, as both forms will be needed below.  Though the algebra below is tricky, the idea of the proof is simple. We will show that
\begin{enumerate}[(1)]
\item $\chi(1) > 0$;
\item $\chi$ is strictly decreasing;
\item For $\alpha \leq \alpha^*$, $\chi(h_3) < 0$.
\end{enumerate}
This will provide the desired unique solution.  To this end
\begin{equation*}
\begin{split}
\chi(1) &= 1 - \frac{p_1}{p_1-1}\left(\frac{mB_0}{\alpha}\left(\rcpdiff{r}{m}\right)+1\right)  \\
&+ \frac{mB_0}{\alpha\delta}\left(\ \frac{1}{p_2}+ \left(1-\delmf\right)\left(\frac{ \left(1-\delmf\right)}{\frac{1}{p_1}+ \frac{\alpha\delta}{mB_0} \left(\frac{p_2}{1+p_2}\left(\frac{mB_0}{\alpha}\left(\rcpdiff{r}{m}\right)+1\right)- 1\right)}\right)^{\frac{1+p_2}{p_1-1}} \right).
\end{split}
\end{equation*}
Using \eqref{E:p1_p2_ident} one can show
\begin{equation*}
\begin{split}
1-\frac{p_1}{p_1-1}\left(\frac{mB_0}{\alpha}\left(\rcpdiff{r}{m}\right)+1\right)+\frac{mB_0}{ p_2\alpha\delta} &= \frac{mB_0}{\alpha\delta}\left(\frac{1}{p_1-1}\frac{\delta}{m}\left(1-\frac{\alpha}{B_0}\right) - \left(1-\delmf\right)\right)\\
\frac{1}{p_1} + \frac{\alpha\delta}{mB_0} \left(\frac{p_2}{1+p_2}\left(\frac{mB_0}{\alpha}\left(\rcpdiff{r}{m}\right)+1\right)- 1\right) &= 1-\delmf + \frac{1}{1+p_2}\frac{\delta}{m}\left(1-\frac{\alpha}{B_0}\right).
\end{split}
\end{equation*}
This gives
\begin{equation*}
\chi(1) = \frac{mB_0}{\alpha\delta}\wh{\chi}\left(1-\delmf,\delmf\left(1-\frac{\alpha}{B_0}\right)\right)
\end{equation*}
where
\begin{equation*}
\wh{\chi}(x,y) \dfn \frac{y}{p_1-1} - x\left(1-\left(\frac{x}{x+\frac{1}{1+p_2}y}\right)^{\frac{1+p_2}{p_1-1}}\right).
\end{equation*}
Note that $\wh{\chi}(x,0) = 0$ and
\begin{equation*}
\partial_y \wh{\chi}(x,y) = \frac{1}{p_1-1}\left(1 - \left(\frac{x}{x+\frac{1}{1+p_2}y}\right)^{\frac{1+p_2}{p_1-1}+1}\right) > 0 \quad \textrm{on}\quad y > 0.
\end{equation*}
This gives $\chi(1) > 0$ since $\alpha < \alpha^* < p_2 mB_0(1/r-1/m)$ implies (again using \eqref{E:p1_p2_ident})
\begin{equation}\label{E:mid_m_alpha_leq_B0}
\frac{\alpha}{B_0} < p_2\left(\frac{m}{r}-1\right) < p_2\left(\frac{p_1\delta}{(p_1-1)r}  -1\right) = 1.
\end{equation}
We next show $\chi$ is decreasing on $(1,h_3)$.  Using the representation of $\chi$ with $h_3$ we have
\begin{equation*}
\begin{split}
\chi'(h) &= (1+p_2)h^{p_2} - \frac{p_1(1+p_2)}{p_1-1}h_3 h^{p_2-1}\\
&\qquad  + \frac{mB_0}{\alpha\delta}\left(1-\delmf\right)\frac{1+p_2}{p_1-1}\left(\frac{ p_1 \left(1-\delmf\right)}{1+ \frac{\alpha\delta}{mB_0} p_1 h^{-p_1}\left(h_3- h\right)}\right)^{\frac{1+p_2}{p_1-1}-1}\\
&\qquad \times \frac{-p_1\left(1-\delmf\right)}{\left(1+ \frac{\alpha\delta}{mB_0} p_1 h^{-p_1}\left(h_3- h\right)\right)^2}\times \frac{\alpha\delta}{mB_0} p_1\left(-p_1 h^{-p_1-1}h_3 + (p_1-1)h^{-p_1}\right)\\
&= (1+p_2)h^{p_2-1}\left(h-\frac{p_1}{p_1-1}h_3\right)\\
&\qquad + (1+p_2)\left(\frac{ p_1 \left(1-\delmf\right)}{1+ \frac{\alpha\delta}{mB_0} p_1 h^{-p_1}\left(h_3- h\right)}\right)^{\frac{p_1+p_2}{p_1-1}}h^{-p_1-1}\left(\frac{p_1}{p_1-1}h_3-h\right)\\
&= -(1+p_2) h^{p_2-1}\left(\frac{p_1}{p_1-1}h_3-h\right)\\
&\qquad \times \left(1-\left(h^{p_1-1}\right)^{-\frac{p_1+p_2}{p_1-1}}\left(\frac{ p_1 \left(1-\delmf\right)}{1+ \frac{\alpha\delta}{mB_0} p_1 h^{-p_1}\left(h_3- h\right)}\right)^{\frac{p_1+p_2}{p_1-1}}\right)\\
&= -(1+p_2) h^{p_2-1}\left(\frac{p_1}{p_1-1}h_3-h\right)\left(1-\left(\frac{ p_1 \left(1-\delmf\right)h}{h^{p_1}+ \frac{\alpha\delta}{mB_0} p_1\left(h_3- h\right)}\right)^{\frac{p_1+p_2}{p_1-1}}\right)\\
\end{split}
\end{equation*}
Therefore, $\chi' < 0$ if and only if
\begin{equation}\label{E:mid_m_chi_decr_cond}
h^{p_1} + \frac{\alpha\delta}{mB_0} p_1\left(h_3- h\right) - p_1\left(1-\delmf\right) h > 0.
\end{equation}
But, on $1 < h < h_3$ we have, since $m < p_1\delta/(p_1-1)$
\begin{equation*}
\begin{split}
h^{p_1} + \frac{\alpha\delta}{mB_0} p_1\left(h_3- h\right) - p_1\left(1-\delmf\right) h > h^{p_1} - p_1\left(1-\delmf\right) h = h^{p_1}- h > 0.
\end{split}
\end{equation*}
Therefore, $\chi$ is strictly decreasing on $(1,h_3)$. It remains to show $\chi(h_3) < 0$ for $\alpha < \alpha^*$.  We have
\begin{equation*}
\begin{split}
\chi(h_3) &= h_3^{1+p_2} - \frac{p_1}{p_1-1}\left(\frac{mB_0}{\alpha}\left(\rcpdiff{r}{m}\right) + 1\right)h_3^{p_2}\\
&\qquad + \frac{mB_0}{\alpha\delta}\left(\frac{1}{p_2} + \left(1-\delmf\right)\left(p_1\left(1-\delmf\right)\right)^{\frac{1+p_2}{p_1-1}}\right)\\
&= -\left(\frac{p_2}{1+p_2}\right)^{p_2}\left(\frac{mB_0}{\alpha}\left(\rcpdiff{r}{m}\right) + 1\right)^{1+p_2}\frac{p_1+p_2}{(1+p_2)(p_1-1)}\\
&\qquad + \frac{mB_0}{\alpha\delta}\left(\frac{1}{p_2} + \left(1-\delmf\right)\left(p_1\left(1-\delmf\right)\right)^{\frac{1+p_2}{p_1-1}}\right)
\end{split}
\end{equation*}
Note that $(Ax+1)^{1+p_2} = x\left(Ax^{p_2/(1+p_2)} + x^{-1/(1+p2)}\right)$. This gives that $\chi(h_3) < 0$ if and only if
\begin{equation}\label{E:alt_chi_eqn}
\begin{split}
&\left(\frac{p_2}{1+p_2}\right)^{p_2}\frac{p_1+p_2}{(1+p_2)(p_1-1)}\left(\left(\frac{mB_0}{\alpha} \right)^{\frac{p_2}{1+p_2}}\left(\rcpdiff{r}{m}\right) + \left(\frac{mB_0}{\alpha}\right)^{-\frac{1}{1+p_2}}\right)^{1+p_2}\\
&\qquad  - \frac{1}{\delta}\left(\frac{1}{p_2} + \left(1-\delmf\right)\left(p_1\left(1-\delmf\right)\right)^{\frac{1+p_2}{p_1-1}}\right) > 0.
\end{split}
\end{equation}
\begin{rem}\label{R:alt_chi_rem}
The map $x\to Ax^{p/(1+p)} + x^{-1/(1+p)}$ is increasing on $x > 1/(pA)$.  Applied to $x = (mB_0)/\alpha$, $A=1/r-1/m$ and $p=p_2$ we see that indeed $x>1/(pA)$ because $\alpha < p_2 mB_0(1/r-1/m)$.  This implies the left-hand side above is strictly decreasing in $\alpha$, and the inequality certainly holds for $\alpha\rightarrow 0$.
\end{rem}
In light of the above remark, let us plug in $\alpha = p_2 mB_0(1/r-1/m)$ to the above.  A number of simplifications occur and the left side becomes
\begin{equation*}
\begin{split}
&\frac{p_1+p_2}{p_1-1}\left(\rcpdiff{r}{m}\right)  - \frac{1}{\delta}\left(\frac{1}{p_2} + \left(1-\delmf\right)\left(p_1\left(1-\delmf\right)\right)^{\frac{1+p_2}{p_1-1}}\right).
\end{split}
\end{equation*}
Using \eqref{E:p1_p2_ident} one can show
\begin{equation*}
\frac{p_1+p_2}{p_1-1}\left(\frac{\delta}{r}-\delmf\right) - \frac{1}{p_2} = \frac{1}{p_1} + \frac{p_1+p_2}{p_1}\left(1-\frac{p_1}{p_1-1}\frac{\delta}{m}\right).
\end{equation*}
Therefore, the positivity of the above is equivalent to
\begin{equation*}
\psi\left(\frac{\delta}{m}\right) > 0,\qquad \psi(x) \dfn \frac{1}{p_1} + \frac{p_1+p_2}{p_1}\left(1-\frac{p_1}{p_1-1}x\right) - p_1^{\frac{1+p_2}{p_1-1}}\left(1-x\right)^{\frac{p_1+p_2}{p_1-1}}.
\end{equation*}
It is easy to see that $\psi'(x) < 0 \Leftrightarrow x > p_1/(p_1-1)$, but this is precisely the condition we have upon $\delta/m$.  Therefore, $\psi(\delta/m) > \psi((p_1-1)/p_1) = 0$ and hence $\chi(h_3) > 0$ always when $\alpha = p_2mB_0(1/r-1/m)$, which is not what we want.  However, in light of Remark \ref{R:alt_chi_rem}, we know $\chi(h_3) < 0$ as $\alpha\rightarrow 0$ and that there is a unique $0 < \wh{\alpha} < p_2 mB_0(1/r-1/m)$ such that $\chi(h_3) < 0$ if and only if $0 < \alpha < \wh{\alpha}$.  We now proceed to show that $\wh{\alpha} = \alpha^*$.

Recall from \eqref{E:mid_m_h1_eqn2} that for $h_2 = h_3(\alpha)$ we have $h_1 = (p_1(1-\delta/m))^{1/(p_1-1)}$. Plugging these values in \eqref{E:mid_m_h2_eqn2} at $\alpha = \wh{\alpha}$ (which enforces \eqref{E:mid_m_h2_eqn2}) we have (after dividing by $p_2$)
\begin{equation*}
\left(\frac{p_1+p_2}{p_2(p_1-1)}\right)h_3^{1+p_2} =
\frac{mB_0}{\alpha\delta}\left(\frac{1}{p_2} + p_1^{\frac{1+p_2}{p_1-1}}\left(1-\delmf\right)^{\frac{p_1+p_2}{p_1-1}}\right) = \frac{p_1+p_2}{\alpha(p_1-1)}\beta_2,
\end{equation*}
where the last equality holds from \eqref{E:m_geq_delta_beta}.  This is equivalent to
\begin{equation*}
(p_2\beta_2)^{\frac{1}{1+p_2}} = \alpha^{\frac{1}{1+p_2}} h_3 = \alpha^{-\frac{p_2}{1+p_2}}\frac{p_2}{1+p_2}\left(mB_0\left(\rcpdiff{r}{m}\right) + \alpha\right),
\end{equation*}
or
\begin{equation*}
0 = \frac{1+p_2}{p_2}(p_2\beta_2)^{\frac{1}{1+p_2}}\alpha^{\frac{p_2}{1+p_2}} - \alpha - (mB_0\left(\rcpdiff{r}{m}\right) = g(\alpha,\beta_2).
\end{equation*}
This shows $\wh{\alpha} = \alpha^*$ and finishes the proof showing a unique solution $h_2$ to \eqref{E:mid_m_h2_eqn2} in the interval $(1,h_3)$ for $0 < \alpha < \alpha^*$.

We now complete the remainder of the proof for $\alpha < \alpha^*$.  We have established that $V$ is $C^1$.  We have also shown $\check{C}_2 < 0$ and from \eqref{E:mid_m_h1_AB} we know $C_1,C_2 < 0$. In light of \eqref{E:mid_m_h2_AB2} we know $\wt{C}_2 < 0$.  As for $\wt{C}_1$, from \eqref{E:mid_m_h2_AB2} and \eqref{E:mid_m_h1_AB} we see
\begin{equation}\label{E:mid_m_wtC1}
\begin{split}
\wt{C}_1 &= -\frac{1+p_2}{p_1+p_2}\frac{mB_0}{p_1\delta}\left(p_1h_1^{1-p_1}\left(1-\delmf\right) - 1\right)\\
&= -\frac{1+p_2}{p_1+p_2}\frac{mB_0}{p_1\delta}\left(\frac{1+\frac{\alpha\delta}{mB_0}h_2^{-p_1}(h_3- h_2)}{1-\delmf} - 1\right) < 0,
\end{split}
\end{equation}
since $0 < 1-\delta/m < 1/p_1 < 1$.    This ensures $V$ is concave, and clearly \eqref{E:dotV_integ} holds.  Concavity implies $V(h) \leq B_0 h$ on $[0,1]$, and $V(h) \leq B_0 + \alpha(h-1)$ on $[h_2,\infty)$.  It also implies $V(h) \leq B_0 + \alpha(h-1)$ on $[1,h_2]$ since $V(h_2) = B_0 + \alpha(h_2-1)$ and $\dot{V} \geq \alpha$ on $[1,h_2]$.  As $(\mathcal{L}-r)V + mB_0\min\bra{1,h} = 0$ on $(h_1,h_2)$ and $(h_3,\infty)$ we need only show this is true on $[0,h_1]$ and $[h_2,h_3]$ as well.  On $[0,h_1]$, $V(h) = B_0h$ and hence
\begin{equation*}
(\mathcal{L}-r)V + mB_0h = (m-\delta)B_0 h \geq 0.
\end{equation*}
On $[h_2,h_3]$, $V(h) = B_0 + \alpha(h-1)$ and
\begin{equation*}
(\mathcal{L}-r)V + mB_0 = mB_0 - \alpha\delta h - r(B_0-\alpha).
\end{equation*}
At $h_3$ this is
\begin{equation*}
B_0\left(m-r\right) + r\alpha - \delta\frac{p_2}{1+p_2}\left(mB_0\left(\rcpdiff{r}{m}\right) + \alpha\right)  =\frac{1}{p_1}\left(B_0(m-r) + r\alpha\right) > 0,
\end{equation*}
where we used \eqref{E:p1_p2_ident}.  Thus, Proposition \ref{P:verification} applies, finishing the result.

\end{proof}


\begin{proof}[Proof of Proposition \ref{P:APRM_P3}]
We again drop all ``P'' subscripts, assume $m>r$, recall $p_1,p_2$ from \eqref{E:p1_p2} and heavily use \eqref{E:p1_p2_ident}. Value matching and smooth pasting at $h_3$ give
\begin{equation}\label{E:big_m_h3}
\begin{split}
h_3 &= h_3(\alpha) = \frac{p_2}{1+p_2}\left(\frac{mB_0}{\alpha}\left(\rcpdiff{r}{m}\right) + 1\right) = \frac{mB_0}{\alpha\delta}\left(\frac{p_1}{p_1-1} - \delmf\right)  + \frac{B_0/\alpha + p_2}{1+p_2}\\
\check{C}_2 &= -\frac{\alpha}{p_2}h_3^{1+p_2} < 0.
\end{split}
\end{equation}
Here, we obtained the second equality for $h_3$ using \eqref{E:p1_p2_ident}, and it shows $h_3 > 1$ because $\alpha < B_0$ and $p_1/(p_1-1) = m^*/\delta \leq m/\delta$.  Next, value matching at $h_2,1$ and $h_1$  give respectively
\begin{equation}\label{E:big_m_h2_comps}
\begin{split}
\wt{C}_1 &= \frac{1+p_2}{p_1+p_2}\alpha h_2^{1-p_1} - \frac{p_2}{p_1+p_2}\alpha h_2^{-p_1}\left(\frac{mB_0}{\alpha}\left(\rcpdiff{r}{m}\right)+1\right);\\
\wt{C}_2 &= \frac{p_1-1}{p_1+p_2}\alpha h_2^{1+p_2} -\frac{p_1}{p_1+p_2}\alpha h_2^{p_2} \left(\frac{mB_0}{\alpha}\left(\rcpdiff{r}{m}\right)+1\right)
\end{split}
\end{equation}
\begin{equation}\label{E:big_m_1_comps}
\begin{split}
\wt{C}_1 &= C_1 + \frac{1+p_2}{p_1(p_1+p_2)}\frac{mB_0}{\delta};\\
\wt{C}_2 &= C_2 - \frac{p_1-1}{p_2(p_1+p_2)}\frac{mB_0}{\delta}.
\end{split}
\end{equation}
\begin{equation}\label{E:big_m_h1_comps}
\begin{split}
C_1 &= -\frac{1+p_2}{p_1+p_2}h_1^{1-p_1}\frac{mB_0}{\delta}\left(1-\delmf\right)  < 0;\\
C_2 &= -\frac{p_1-1}{p_1+p_2}h_1^{1+p_2}\frac{mB_0}{\delta}\left(1-\delmf\right) < 0,
\end{split}
\end{equation}
Note, these are the same equations as \eqref{E:mid_m_h1_AB}, \eqref{E:mid_m_h2_AB} and \eqref{E:mid_m_h2_AB2} respectively. Therefore, solving for $h_1$ yields
\begin{equation}\label{E:big_m_h1_eqn2}
h_1 = \left(\frac{p_1\left(1-\delmf\right)}{1+\frac{\alpha\delta}{mB_0}p_1 h_2^{-p_1}(h_3-h_2)}\right)^{\frac{1}{p_1-1}}.
\end{equation}
Since $\delta < m$, if $h_2 \leq h_3$ then $h_1 > 0$.  Furthermore,
\begin{equation*}
h_1 < 1 \Longleftrightarrow \frac{mB_0}{\alpha\delta}\left(\frac{p_1-1}{p_1}-\delmf\right) < h_2^{-p_1}(h_3-h_2).
\end{equation*}
By assumption, $p_1\delta \leq (p_1-1)m$ so the left side above is non-negative. As for the right side, the map $h\to h^{-p_1}(h_3-h)$ is non-increasing for $0 < h \leq p_1h_3/(p_1-1)$ and at $h=1$ we have using \eqref{E:p1_p2_ident}
\begin{equation*}
\frac{mB_0}{\alpha\delta}\left(\frac{p_1-1}{p_1}-\delmf\right) < h_3 - 1 \quad \Longleftrightarrow  0 < \frac{1}{p_2}\left(\frac{B_0}{\alpha} - 1\right),
\end{equation*}
which holds as $0 < \alpha < B_0$.  Thus, there is a unique $h_0$ such that $h_1$ from \eqref{E:big_m_h1_eqn2} lies in $(0,1)$ provided $h_2$ lies in $(1,h_0)$, and
\begin{equation}\label{E:h0_def}
h_0 = \min\bra{ h > 1 \such h^{-p_1}(h_3-h) = \frac{mB_0}{\alpha\delta}\left(\frac{p_1-1}{p_1}-\delmf\right) } \in (1,h_3].
\end{equation}
where $h^* = h_3$ if and only if $p_1\delta = (p_1-1)m$.  Using $h_1$ from \eqref{E:big_m_h1_eqn2}, we obtain the same equation as \eqref{E:mid_m_h2_eqn2} for $h_2$,
and our goal is to show there is a unique solution $h_2$ lying in $(1,h^*)$ provided $0 < \alpha < B_0$. As before, dividing by $p_2$ yields the function
\begin{equation}\label{E:big_m_chi_def}
\begin{split}
\chi(h) &\dfn h^{p_2}\left(h -\frac{p_1(1+p_2)}{(p_1-1)p_2}h_3\right) + \frac{mB_0}{\alpha\delta}\left(\left(1-\delmf\right)\left(\frac{p_1\left(1-\delmf\right)}{1+\frac{\alpha\delta}{mB_0}p_1 h^{-p_1}(h_3-h)}\right)^{\frac{1+p_2}{p_1-1}} +\frac{1}{p_2}\right)
\end{split}
\end{equation}
The same argument as in Proposition \ref{P:APRM_P2} shows $\chi(1) > 0$ since there-in $(p_1-1)m < p_1\delta$ was only used to show $\alpha < \alpha^*$ implied $\alpha < B_0$, which we are now directly assuming (c.f. \eqref{E:mid_m_alpha_leq_B0}). Similarly, it still remains true that $\dot{\chi} < 0$ if and only if \eqref{E:mid_m_chi_decr_cond} holds, which we repeat here
\begin{equation*}
h^{p_1} + \frac{\alpha\delta}{mB_0} p_1\left(h_3- h\right) - p_1\left(1-\delmf\right) h > 0.
\end{equation*}
But, for $h\in (1,h_0]$ we have by definition
\begin{equation*}
h^{-p_1}(h_3-h) \geq \frac{mB_0}{\alpha\delta}\left(\frac{p_1-1}{p_1}-\delmf\right).
\end{equation*}
But this  implies
\begin{equation*}
\frac{\alpha\delta}{mB_0}p_1 (h_3-h) \geq p_1h^{p_1}\left(\frac{p_1-1}{p_1}-\delmf\right),
\end{equation*}
and hence
\begin{equation*}
h^{p_1} + \frac{\alpha\delta}{mB_0}p_1 (h_3-h) - p_1\left(1-\delmf\right)h \geq p_1h\left(1-\delmf\right)\left(h^{p_1-1}-1\right) > 0,
\end{equation*}
since $h>1$.  Thus, $\dot{\chi} < 0$ on $(1,h_0]$. Lastly, we wish to show that $\alpha < B_0$ implies $\chi(h_0) < 0$. To do so, we use the implicit function theorem. Indeed, with an eye towards \eqref{E:h0_def} define the function
\begin{equation*}
\varphi(\alpha,h) \dfn \frac{\alpha\delta}{mB_0} h^{-p_1}(h_3(\alpha)-h) - \left(\frac{p_1}{p_1-1}-\delmf\right).
\end{equation*}
We have already shown $\partial_h \varphi(\alpha,h) < 0$ for $0 < h <  p_1 h_3(\alpha)/(p_1-1)$, and plugging in for $h_3(\alpha)$ from \eqref{E:big_m_h3} and using \eqref{E:p1_p2_ident} we obtain
\begin{equation*}
\varphi(\alpha,h) = \left(\frac{p_1-1}{p_1} - \frac{p_2}{1+p_2}\frac{\delta}{m} - \frac{\alpha\delta}{mB_0}\left(h-\frac{p_2}{1+p_2}\right)\right)h^{-p_1} - \left(\frac{p_1-1}{p_1}-\delmf\right),
\end{equation*}
which is decreasing in $\alpha$ for $ p_2/(1+p_2) < h$.  Therefore, by the implicit function theorem we know
\begin{equation*}
0 = \varphi(\alpha,h_0(\alpha)) \Longrightarrow \partial_{\alpha} h_0(\alpha) = -\frac{\partial_\alpha \varphi}{\partial_h \varphi}\bigg|_{(\alpha,h_0(\alpha))} < 0,
\end{equation*}
so that $h_0(\alpha)$ is strictly decreasing in $\alpha$.  As we have already shown $\dot{\chi}(h_0) < 0$ it thus follows that $\alpha \to \chi(h_0(\alpha))$ is increasing.  This implies for $\alpha < B_0$ that $\chi(h_0(\alpha)) < \lim_{\wt{\alpha}\uparrow B_0} \chi(h_0(\wt{\alpha}))$.  From \eqref{E:big_m_h3} we see that $h_3(B_0) = p_1 m / ((p_1-1)\delta)$, $h_0(\alpha=B_0) = 1$ and hence $\chi(1) = 0$.  Therefore, $\chi(h_0) < 0$ for $\alpha < B_0$ which is what we wanted to show, because this ensures a unique $h_2 \in (1,h_0) \subset (1,h_3)$ such that $\chi(h_2) = 0$ and the associated $h_1 \in (0,1)$.

As for the constants, we know $\check{C}_2, C_1, C_2 < 0$. From \eqref{E:big_m_1_comps} we obtain $\wt{C}_2 < 0$.  As for $\wt{C}_1$, \eqref{E:big_m_h2_comps} implies
\begin{equation*}
\wt{C}_1 = -\frac{1+p_2}{p_1+p_2} \alpha h_2^{-p_1}\left(h_3 - h_2\right) < 0.
\end{equation*}
Therefore, repeating the steps in the proof of Proposition \ref{P:APRM_P2} exactly as in the line below \eqref{E:mid_m_wtC1} starting with ``This ensures...'' gives the result.
\end{proof}






\nada{

\begin{proof}[Proof of Lemma \ref{L:m_v_delta_pp}]
We will handle the APRM first.  Here, when $H_t = h < 1$ we have $c^P_t = \wh{c}^P$h and immediate prepayment gives the value $V(h) = \wh{B}^P_t h$.  Therefore,
\begin{equation*}
    V_t + (\Lcal -r)V + c^P_t = h\left(m^P \wh{B}^P_t - \wh{c}^P_t\right) -\delta\wh{B}^P_t h + \wh{c}^P h = h\wh{B}^P_t\left(m^P-\delta\right).
\end{equation*}
As such, if $m^P < \delta$ it is never optimal to prepay when $H_1 = h < 1$.  Turning to the ABM, if $H_t = h < \wh{B}^A_t/\ell^A$ then $c^A_t = m^A \ell^A h / (1-e^{-m^A(T-t)})$ and $V(h) = h$. Therefore,
\begin{equation*}
    V_t + (\Lcal -r)V + c^A_t =  h\left(\frac{m^A\ell^A}{1-e^{-m^A(T-t)}}-\delta\right).
\end{equation*}
and it is never optimal to prepay if $m^A\ell^A / (1-e^{-m^A(T-t)}) < \delta$.
\end{proof}

}

\nada{ 

\begin{proof}[Proof of Proposition \ref{P:opt_values}]
We first treat the FRM. When there is no default
\begin{equation*}
    V^{NoDef,F} = \inf_{\tau} \espalt{h}{}{\int_0^\tau mB_0 e^{-ru}du + e^{-r\tau} B_0}.
\end{equation*}
As $m>r$ the function $t\to \int_0^t mB_0 e^{-ru}du + e^{-rt}B_0$ is strictly increasing.  Thus, $\tau\equiv 0$ is optimal with value  $B_0$. Next, removing the prepayment option
\begin{equation*}
    V^{NoPP,F} = \inf_{\tau\geq 0}\espalt{h}{}{\int_0^\tau mB_0 e^{-ru}du + e^{-r\tau} H_{\tau}}.
\end{equation*}
Direct calculations show $V$ defined via \eqref{E:frm_option_p} is $C^1$ for
\begin{equation*}
    \ul{b} = \frac{-p_2}{1-p_2}\frac{m}{r}B_0;\qquad B = \frac{1}{p_2}\ul{b}^{1-p_2}.
\end{equation*}
This also shows $V$ is strictly increasing, concave, and hence $V(h)\leq h$.  Next, note that for $h\leq \ul{b}$ we have
\begin{equation*}
    mB_0 - \delta h \geq m B_0 \left(1 - \frac{-p_2}{1-p_2}\frac{\delta}{r}\right) \geq 0,
\end{equation*}
where the last inequality follows from \eqref{E:p1_p2_ident}. By \ito's formula we may deduce
\begin{equation*}
\begin{split}
    e^{-r\tau}V(H_\tau) &= V(h) - \int_0^\tau mB_0 e^{-ru}du + \int_0^\tau (mB_0-\delta H_u)1_{H_u\leq\ul{b}}e^{-ru}du\\
    &\qquad + \int_0^\tau \sigma H_u \dot{V}(H_u)e^{-ru} dW_u.
\end{split}
\end{equation*}
It is easy to show $\espalt{h}{}{\int_0^\infty \sigma^2 H_u^2 \dot{V}(H_u)^2 e^{-2ru}du} < \infty$, so the local martingale term vanishes in expectation.  Therefore, $V=V^{NoPP,F}$.

For the ABM we already know $V^{NoDef,A} = V^A$ from Theorem \ref{Th:cwm_default}.  As for $V^{NoPP,A}$, for the sake of brevity we will outline the case $m^*<m$  for $m^*$ from \eqref{E:mstar_def} (the other cases involve similar analysis).  From \eqref{E:p1_p2_ident} we see that $m>\delta$ here as well. Define $V$ via \eqref{E:abm_option_p3}. Direct calculation shows $V$ is $C^1$ with $\ol{b}  (-p_2/(1-p_2))mB_0/r = mB_0/m^*$ and $\tilde{B} = (1/p_2)(\ol{b})^{1-p_2} < 0$. Note that $\ol{b} > B_0$ because $m>m^*$.  This means $V$ is concave, increasing with $V(h)\leq h$ (because $V(h)=h$ before $\ol{b}$).  Clearly,  $\espalt{h}{}{\int_0^\infty e^{-2ru}\sigma^2 H_u^2 \dot{V}(H_u)^2 du} < \infty$ and from \ito we know
\begin{equation*}
    \begin{split}
    e^{-r\tau}V(H_\tau) &= V(h) - \int_0^\tau m \min\bra{B_0,H_u}e^{-ru}du + \int_0^\tau \left(m\min\bra{B_0,H_u} - \delta H_u\right)1_{H_u\leq \ol{b}}e^{-ru}du\\
    &\qquad + \int_0^\tau \sigma H_u \dot{V}(H_u)dW_u
    \end{split}
\end{equation*}
The martingale term vanishes in expectation and 
\begin{equation*}
\begin{split}
    \left(m\min\bra{B_0,h} - \delta h\right)1_{h\leq \ol{b}} &= (m-\delta)h1_{h\leq B_0} + (mB_0 - \delta h)1_{B_0 < h \leq mB_0/m^*};\\
    &\geq  (m-\delta)h1_{h\leq B_0} + mB_0(1 - \delta/m^*)1_{B_0 < h \leq mB_0/m^*};\\
    &\geq 0.
\end{split}
\end{equation*}
Verification thus follows, finishing the ABM case.

For the SRM, Theorem \ref{Th:cwm_default} ensures $V^{NoDef,S} = V^S$. As for $V^{NoPP,S}$, in the interest of brevity we will outline the case $mB_0 \leq \delta$.  Define $V$ via \eqref{E:srm_option_p1}.  $V$ is $C^1$ with
\begin{equation*}
    A = -\frac{mB_0}{p_1-p_2}\left(\rcpdiffb{1-p_2}{\delta}{-p_2}{r}\right) < 0;\qquad \tilde{B} = -\frac{mB_0}{p_1-p_2}\left(\rcpdiffb{p_1}{r}{p_1-1}{\delta}\right)< 0.
\end{equation*}
This also implies $V$ is concave, increasing with $0 \leq \dot{V}(h) \leq \dot{V}(0) = mB_0/\delta \leq 1$ since $mB_0\leq \delta$. Therefore, $V(0)=0$ implies $V(h)\leq h$.  From here, verification easily follows.
\end{proof}

}

\end{document}